\documentclass[preprint,12pt,authoryear]{elsarticle}

\usepackage{amsmath,amssymb,amsthm,mathtools}
\usepackage{enumitem}
\usepackage{booktabs}
\usepackage{setspace}
\usepackage{hyperref}
\AtBeginDocument{\hypersetup{colorlinks=true, breaklinks=true,
  linkcolor=[rgb]{0,0,0.45}, citecolor=[rgb]{0,0,0.45},
  urlcolor=[rgb]{0,0,0.45}, filecolor=[rgb]{0,0,0.45}}}

\graphicspath{{figures/}{./}}

\journal{Games and Economic Behavior}

\newtheorem{theorem}{Theorem}[section]
\newtheorem{proposition}[theorem]{Proposition}
\newtheorem{lemma}[theorem]{Lemma}
\newtheorem{corollary}[theorem]{Corollary}

\newtheorem{conjecture}[theorem]{Conjecture}
\theoremstyle{definition}
\newtheorem{definition}[theorem]{Definition}
\newtheorem{example}[theorem]{Example}
\newtheorem{assumption}[theorem]{Assumption}
\theoremstyle{remark}
\newtheorem{remark}[theorem]{Remark}

\newcommand{\R}{\mathbb{R}}
\newcommand{\N}{\mathbb{N}}
\newcommand{\calA}{\mathcal{A}}
\newcommand{\calM}{\mathcal{M}}
\newcommand{\calS}{\mathcal{S}}
\newcommand{\sgn}{\operatorname{sgn}}

\begin{document}

\begin{frontmatter}

\title{Low-Rank Payoffs and Limit Uniqueness in Global Games}

%%%%%%%%%%%%%%%%%%%%%%%%%%%%%%%%%%%%%%%%%%%%%%%%%%%%%%%%%%%%%%%%%%%%%%%%
%% CHECK BEFORE POSTING: confirm the e-mail address below is the author's
%% actual institutional address.
%%%%%%%%%%%%%%%%%%%%%%%%%%%%%%%%%%%%%%%%%%%%%%%%%%%%%%%%%%%%%%%%%%%%%%%%
\author[inst1]{Dana Golden\corref{cor1}}
\ead{dana.golden@stonybrook.edu}
\cortext[cor1]{Corresponding author.}

\affiliation[inst1]{organization={Center of Excellence in Wireless and
              Information Technology (CEWIT), Stony Brook University},
            addressline={1500 Stony Brook Road},
            city={Stony Brook},
            postcode={11794-6040},
            state={NY},
            country={USA}}

\begin{abstract}
When does the global game information structure select a unique equilibrium? Limit uniqueness in two-player supermodular games fails exactly when a risk-dominant better response cycle exists (Veiel, 2025). We show that rank-one factor structure on payoffs eliminates such cycles entirely, so every rank-one supermodular game admits a generalized ordinal potential and limit uniqueness follows for any number of actions. The boundary is sharp: an explicit three-action rank-two game carries a length-six cycle, no supermodular game carries a cycle of length four, and every game within a quantified sup-norm margin of a nondegenerate rank-one game is cycle-free. Rank-one structure can also be manufactured: when players compete across many independent markets with common latent payoffs, the stacked observation matrix is rank one plus sparse, and a Robust PCA estimator leaves residual noise that vanishes with the signal scale yet stays positive at any finite sample, even under partial observation.
\end{abstract}

\begin{keyword}
Global games \sep Equilibrium selection \sep Supermodular games \sep
Low-rank payoffs \sep Better response cycles \sep Robust principal component analysis
\JEL C72 \sep C13 \sep C65 \sep D82 \sep D83 \sep L13
\end{keyword}

\end{frontmatter}

%----------------------------------------------------------------------
\section{Introduction}
\label{sec:intro}
%----------------------------------------------------------------------

A central question in the theory of strategic interaction is when the information structure of a game selects a unique equilibrium. The global games framework, originating with \citet{carlsson1993global} and generalized by \citet{frankel2003equilibrium}, provides a powerful answer for games with strategic complementarities. When players receive slightly noisy private signals about an underlying state, the incomplete information game often has a unique equilibrium even though the complete information game has many. The economic force is contagion: extreme types, for whom one action is dominant, infect nearby types through the complementarity structure, and multiplicity unravels.

\citet{veiel2025limits} sharpens this picture for two-player, many-action games with symmetric, multi-dimensional noise. His Theorem~4.1 gives a precise characterization. For quasi-concave supermodular games, limit uniqueness holds if and only if no best-reply closed set contains a risk-dominant \emph{better response cycle} (BRC). By \citet{monderer1996potential}, the absence of BRCs is equivalent to the existence of a generalized ordinal potential. A BRC lets best-response contagion circulate instead of terminating, so dominance regions can no longer unravel inward to pin down one profile.

The obstruction to selection therefore has a name. The question this paper answers is what economic structure on payoffs removes it.

\medskip
\noindent \textbf{This paper.} We show that a low-rank factor structure on payoffs eliminates the degrees of freedom that BRCs require. When payoff matrices have rank one, each player's payoff is multiplicatively separable: an ``own-action value'' scaled by an ``opponent sensitivity.'' The opponent's action rescales the magnitude of payoff differences but never reorders them, except possibly by flipping the entire ranking when the sensitivity changes sign. Payoff gaps at any two own actions stay proportional across opponent actions. This proportionality is exactly what rules out the opposing skewness that \citet[Section~6]{veiel2025limits} identifies as necessary for BRCs.

Rank-one payoffs arise naturally when strategic interaction is mediated by a single latent factor. In an investment game, all payoff variation across action pairs might be governed by one market-conditions factor. More broadly, any one-dimensional quality index, demand shock, or cost shock that multiplicatively scales the returns to action generates rank-one payoffs. Linear-index and single-factor specifications common in empirical work do so by construction.

\medskip
\noindent \textbf{Contributions.} We make four contributions.

\emph{First}, we prove that supermodular games with rank-one payoff matrices admit no BRCs of any length (Theorem~\ref{thm:rank1}). The proof introduces a sign-quadrant decomposition of the action space, illustrated in Figure~\ref{fig:quadrants}. Supermodularity forces the rank-one factors to be monotone, so each player's improvement direction is pinned down by the sign of a single function of the opponent's action. The action space splits into four quadrants with monotone dynamics: two aligned quadrants, in which both players escalate or both retreat, that are absorbing, and two opposed quadrants that are mutually unreachable without passing through an absorbing one. A cycle must return to its starting point, so no cycle exists. Combined with Veiel's characterization and the Monderer--Shapley equivalence, this yields limit uniqueness for rank-one quasi-concave supermodular global games (Corollary~\ref{cor:limuniq}) for any number of actions.

\emph{Second}, we map the boundary of the rank-one result (Section~\ref{sec:rankr}). The result is tight: an explicit $3 \times 3$ strictly supermodular game in which both payoff matrices have rank two carries a length-six BRC (Example~\ref{ex:rank2}, Figure~\ref{fig:cycle}). Because the BRC conditions are strict inequalities, such games occupy an open, positive-measure set on the rank-two manifold. No genericity argument can substitute for the rank restriction. We also prove that no two-player supermodular game of any rank admits a BRC of length four, so six is the minimal cycle length and each player needs at least three actions (Proposition~\ref{prop:no_length4}). A margin theorem (Theorem~\ref{thm:margin}) then shows the conclusion is robust: any game within sup-norm distance $\gamma/2$ of a nondegenerate rank-one supermodular game with margin $\gamma$ is cycle-free, whatever its rank.

\emph{Third}, we show the rank-one structure can be manufactured from data rather than assumed (Section~\ref{sec:rpca}). Players compete simultaneously in $M$ independent markets governed by the same latent payoffs, subject to market-specific sparse corruptions and dense noise. Standard Robust PCA (RPCA) guarantees require large matrix dimensions, yet game-theoretic action spaces are finite and fixed. Our resolution is to let the large dimension come from the number of markets. Stacking cross-market observations produces a matrix whose low-rank component has rank exactly one regardless of the rank of the payoff matrices (Figure~\ref{fig:stacking}). A two-step estimator, entry-wise median for sparse removal followed by low-rank projection, delivers estimation error of order $O\bigl(\sigma \sigma_w (\varepsilon + \sqrt{n \log n / M})\bigr)$. The error vanishes as the noise scale $\sigma \to 0$ but stays strictly positive at any finite $M$. This is precisely the intermediate regime the contagion mechanism requires: exact recovery would return the game to complete information, where multiplicity comes back.

\emph{Fourth}, we extend the multi-market framework to partial observation (Section~\ref{sec:matcomp}), where players see only a subset of action-pair payoffs. We analyze three observation models of increasing economic realism: missing markets, missing entries under uniform random observation, and structured missingness arising endogenously from equilibrium play. In each case the estimator adapts cleanly and the intermediate regime survives, because every column of the stacked low-rank component is identical, which makes partial observation far less damaging than in generic matrix completion. The strategic observation model yields a striking complementarity: when different markets realize different equilibria, the resulting variation in observed action pairs covers the payoff space. Equilibrium multiplicity, usually a nuisance, here aids recovery. The estimation guarantees extend to all three settings under mild coverage conditions, and the selection statement extends conditionally on the same noise-class extension that underlies the multi-market conjecture (Corollary~\ref{cor:limuniq_mc}).

\medskip
\noindent \textbf{Unconditional and conditional results.} Our structural results stand on \citet{monderer1996potential} and direct arguments alone: the absence of better response cycles at rank one, the generalized ordinal potential, the sharpness and minimal-length results, and the margin theorem, as do all estimation results in Sections~\ref{sec:rpca} and~\ref{sec:matcomp}. The translation of cycle absence into limit uniqueness is conditional on Theorem~4.1 of \citet{veiel2025limits}, which at the time of writing is an unpublished working paper. The multi-market selection statement additionally requires an extension of his noise class, and we state it as Conjecture~\ref{conj:multimarket_lu} rather than as a theorem. Remark~\ref{rem:conditional} records the partition precisely.

\medskip
\noindent \textbf{Related literature.} The global games framework originates with \citet{carlsson1993global}, was applied by \citet{morris1998unique} to currency crises, and is surveyed by \citet{morris2003global}; \citet{frankel2003equilibrium} generalized it to many-action games with strategic complementarities. \citet{mathevet2010contraction} obtains selection in finite global games through a contraction principle. The robustness literature following \citet{weinstein2007structure}, and for supermodular games \citet{oyama2020generalized}, studies which equilibrium predictions survive perturbations of higher-order beliefs; the limit uniqueness question studied here is the global-game instance of that broader program. \citet{veiel2025limits} provides the characterization we build on, and the potential games literature \citep{monderer1996potential} supplies the equivalence between the absence of improvement cycles and generalized ordinal potentials.

A separate strand studies bimatrix games that are low-rank in the sense of the rank of the \emph{sum} $\Pi_1 + \Pi_2$ \citep{kannan2010rank,adsul2021rank1,vonstengel2012rank1}. That notion is genuinely different from the player-by-player rank restriction used here; Remark~\ref{rem:rank_terminology} distinguishes the two and explains why the contrast is instructive.

The RPCA methodology draws on \citet{candes2011robust} and \citet{zhou2010stable}, and the partial observation extension connects to matrix completion \citep{candes2010matrix,keshavan2010matrix}. The multi-market construction relates to the multimarket-contact literature \citep{bernheim1990multimarket}, with an inverted mechanism: rather than multimarket contact enabling collusion through repeated-game incentives, here it enables equilibrium \emph{selection} through superior information processing. The strategic observation model, in which equilibrium diversity drives payoff learning, suggests connections to information acquisition in games; we leave those formal connections for future work.

\medskip
\noindent \textbf{Outline.} Section~\ref{sec:model} introduces the model. Section~\ref{sec:rank1} presents the main result: rank-one supermodular games have no better response cycles. Section~\ref{sec:rankr} establishes tightness at rank two, a minimal cycle length of six, and a quantitative robustness theorem for near-rank-one payoffs. Section~\ref{sec:rpca} develops the multi-market RPCA framework. Section~\ref{sec:matcomp} extends it to partial observation. Section~\ref{sec:discussion} concludes with implications and open questions.

%----------------------------------------------------------------------
\section{Model}
\label{sec:model}
%----------------------------------------------------------------------

\subsection{The economic environment}

We study two-player games with strategic complementarities and finite action spaces. Fix two players $i \in \{1,2\}$ with finite, ordered action sets $A_i = \{1, \ldots, N_i\}$, where $N_i \geq 2$. A \emph{payoff matrix} assigns to each action profile $(a_1, a_2) \in A_1 \times A_2$ a payoff $\pi_i(a_1,a_2) \in \R$ for each player~$i$. We identify player~$i$'s payoffs with the matrix $\Pi_i \in \R^{N_1 \times N_2}$, where $(\Pi_i)_{a_1,a_2} = \pi_i(a_1,a_2)$.

\begin{definition}[Supermodularity]
The game $(\Pi_1, \Pi_2)$ is \emph{supermodular} if for every player~$i$, every $a_i \in \{1,\ldots,N_i-1\}$, and every $a_{-i} \in \{1,\ldots,N_{-i}-1\}$:
\[
\pi_i(a_i+1, a_{-i}+1) - \pi_i(a_i, a_{-i}+1) > \pi_i(a_i+1, a_{-i}) - \pi_i(a_i, a_{-i}).
\]
Equivalently, the marginal return to increasing one's own action is strictly increasing in the opponent's action. This is the standard complementarity condition in the games literature.
\end{definition}

\subsection{Better response cycles}

The key object connecting payoff structure to equilibrium selection is the better response cycle, which we define following \citet{veiel2025limits}.

\begin{definition}[Better response cycle]
\label{def:brc}
A \emph{better response cycle (BRC)} of length $m \geq 4$ is a sequence of action profiles
\[
(a_1^0, a_2^0),\; (a_1^1, a_2^1),\; \ldots,\; (a_1^m, a_2^m) = (a_1^0, a_2^0),
\]
and a player $i \in \{1,2\}$ such that: for every even $n \in \{0,\ldots,m-1\}$, player~$i$ strictly improves while player~$-i$'s action is held fixed:
\[
\pi_i(a_i^{n+1}, a_{-i}^n) > \pi_i(a_i^n, a_{-i}^n) \quad \text{and} \quad a_{-i}^{n+1} = a_{-i}^n;
\]
and for every odd $n \in \{0,\ldots,m-1\}$, player~$-i$ strictly improves while player~$i$'s action is held fixed:
\[
\pi_{-i}(a_i^n, a_{-i}^{n+1}) > \pi_{-i}(a_i^n, a_{-i}^n) \quad \text{and} \quad a_i^{n+1} = a_i^n.
\]
The minimum possible cycle length is $m = 4$, corresponding to a ``rectangular'' cycle in which players alternate single moves around four action profiles.
\end{definition}

Figure~\ref{fig:brc_schematic} depicts the object schematically. Each arrow is a strictly improving unilateral deviation, the movers alternate, and the sequence closes on itself. It is exactly this closed loop that lets best-response contagion circulate rather than terminate.

\begin{figure}[t]
\centering
\includegraphics[width=0.42\textwidth]{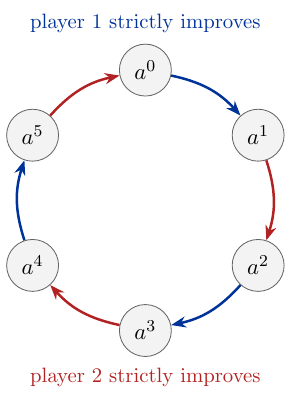}
\caption{\textbf{A better response cycle.} The object of Definition~\ref{def:brc}, drawn schematically for length six. Each node is an action profile $a^n = (a_1^n, a_2^n)$; each arrow is a strictly improving deviation by one player with the other's action held fixed, and the movers alternate. Because the sequence returns to its starting profile, best-response contagion can circulate around the loop indefinitely instead of terminating, which is the mechanism by which BRCs obstruct limit uniqueness. Section~\ref{sec:rankr} shows that in supermodular games six is in fact the minimal length, and Figure~\ref{fig:cycle} realizes this schematic in an explicit rank-two game.}
\label{fig:brc_schematic}
\end{figure}

Two results in the literature establish the role of BRCs in selection. \citet{monderer1996potential} show that a game admits a generalized ordinal potential on a set of actions if and only if there are no better response cycles on that set. \citet{veiel2025limits} shows that limit uniqueness in global games holds if and only if no BRC set contains a risk-dominant better response cycle. Our task is therefore to identify conditions on payoffs under which BRCs cannot arise.

Definition~\ref{def:brc} requires the improving player to alternate. The Monderer--Shapley equivalence is stated for improvement cycles with an arbitrary order of movers. The following reduction shows that, in two-player games, nothing is lost by restricting attention to alternating cycles.

\begin{lemma}[Reduction to alternating cycles]
\label{lem:alternation}
Suppose a two-player game admits an improvement cycle: a cyclic sequence of action profiles in which consecutive profiles differ in exactly one player's action and the deviating player strictly improves, with no restriction on the order of movers. Then the game admits a BRC in the sense of Definition~\ref{def:brc}.
\end{lemma}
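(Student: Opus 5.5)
The plan is to compress consecutive moves by the same player. Take an improvement cycle $b^0, b^1, \ldots, b^L = b^0$, where each step $b^k \to b^{k+1}$ is a strict improvement by a single player $j_k$ with the opponent's action held fixed. I will produce an alternating cycle by merging each maximal run of consecutive moves by the same player into one move.

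\textbf{Step 1 (both players move).} First I show that the cycle cannot consist of moves by only one player. If only player $j$ moves, the opponent's action is constant throughout, and $\pi_j$ strictly increases along the cycle. Returning to $b^0$ then gives $\pi_j(b^0) > \pi_j(b^0)$, a contradiction.

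\textbf{Step 2 (merge runs).} Rotate the cycle so that $j_{L-1} \neq j_0$, which is possible by Step~1. The cycle now splits into maximal blocks of consecutive moves by the same player. Within a block of player $j$, the opponent's action never changes. Hence $\pi_j$ is evaluated at a fixed opponent action throughout the block, so by transitivity of strict inequality the block's start and end profiles satisfy the strict-improvement condition for $j$ with $a_{-j}$ fixed. Replacing each block by this single move gives a cyclic sequence in which movers alternate. Because of the rotation, the first and last blocks belong to different players, so alternation also holds across the wrap-around. Alternation around a closed cycle with two players forces an even number of moves $m$. Let player $i$ be the mover of the first block; the odd/even labeling in Definition~\ref{def:brc} then holds with this $i$.

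\textbf{Step 3 (length and nondegeneracy).} It remains to check $m \ge 4$. Every merged move is a strict improvement, so its endpoints differ, and the profiles in the compressed sequence are consecutive-distinct. Suppose $m = 2$. Player $i$ moves from $(x, y)$ to $(x', y)$ with $x' \neq x$. Player $-i$ must then return the sequence to $(x, y)$ while holding $x'$ fixed, which is impossible. Hence $m \ge 4$, and together with evenness this gives a BRC.

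The only delicate point is the bookkeeping in Steps~2 and~3: the wrap-around block must be handled by the rotation, and merged moves must be shown to be nontrivial. Strictness makes the latter immediate, so I expect no real obstacle.
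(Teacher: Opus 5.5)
Your proposal is correct and takes the same route as the paper. The paper also merges runs of same-player moves using transitivity of strict improvement against a fixed opponent action, rules out single-mover cycles because the mover's payoff would strictly increase around a closed loop, and concludes that the cycle has even length at least four. Your explicit rotation for the wrap-around block and your exclusion of $m = 2$ simply spell out bookkeeping that the paper's short proof leaves implicit.
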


\begin{proof}
Consecutive moves by the same player compose. If player~$i$ moves from $a_i$ to $a_i'$ and then to $a_i''$ against a fixed $a_{-i}$, with each step strictly improving, then $\pi_i(a_i'', a_{-i}) > \pi_i(a_i, a_{-i})$, so the two steps may be replaced by one. Iterating, every improvement cycle reduces to one in which the movers strictly alternate. A cycle in which only one player ever moves is impossible: that player's payoff strictly increases along the cycle and cannot return to its initial value. The reduced alternating cycle therefore has even length at least four.
\end{proof}

\subsection{Low-rank payoff structure}

We now introduce the structural restriction on payoffs that drives our results.

\begin{definition}[Rank-$r$ payoffs]
\label{def:rank}
Player~$i$'s payoff matrix $\Pi_i$ has \emph{rank~$r$} if $\operatorname{rank}(\Pi_i) = r$. Equivalently, $\Pi_i$ admits a decomposition
\[
\Pi_i = \sum_{\ell=1}^{r} \sigma_\ell^{(i)}\, \mathbf{u}_\ell^{(i)} (\mathbf{v}_\ell^{(i)})^\top,
\]
where $\sigma_\ell^{(i)} > 0$ and $\mathbf{u}_\ell^{(i)} \in \R^{N_1}$, $\mathbf{v}_\ell^{(i)} \in \R^{N_2}$ are the singular vectors.
\end{definition}

When $r = 1$, the payoff matrix is \emph{multiplicatively separable}:
\begin{equation}
\label{eq:rank1}
\pi_i(a_1, a_2) = f_i(a_1)\, g_i(a_2),
\end{equation}
for functions $f_i \colon A_1 \to \R$ and $g_i \colon A_2 \to \R$ (absorbing the singular value into either factor). Each player's payoff factors into an own-action value $f_i(a_i)$ modulated by an opponent sensitivity $g_i(a_{-i})$. As discussed in the introduction, the opponent's action can scale the returns to one's own action, but cannot reorder them except by flipping the whole ranking when the sensitivity changes sign. This is the structural feature that prevents better response cycles.

\begin{remark}[Two notions of rank for bimatrix games]
\label{rem:rank_terminology}
The restriction above applies to each player's payoff matrix \emph{separately}. It should not be confused with the rank of a bimatrix game in the sense of \citet{kannan2010rank}, which is defined as $\operatorname{rank}(\Pi_1 + \Pi_2)$, so that rank-zero games are zero-sum games and the rank grades the departure from the zero-sum benchmark. The two restrictions are genuinely different and neither implies the other. A game with $\operatorname{rank}(\Pi_1 + \Pi_2) = 1$ may have individual payoff matrices of full rank, and two rank-one payoff matrices generically sum to a rank-two matrix.

The two literatures also ask different questions. The fixed-rank hierarchy of \citet{kannan2010rank} concerns the computation of Nash equilibria, and \citet{adsul2021rank1} give a polynomial-time algorithm for finding an equilibrium of a rank-one game in the sum sense. Most instructive as a contrast, \citet{vonstengel2012rank1} constructs games of sum-rank one with exponentially many Nash equilibria. Low rank of the sum therefore places essentially no restriction on the size of the equilibrium set. Rank one of each payoff matrix, by contrast, combined with supermodularity and the global game information structure, selects a \emph{unique} outcome in the limit (Corollary~\ref{cor:limuniq}). Whenever we write ``rank'' without qualification, we mean the player-by-player notion of Definition~\ref{def:rank}; wherever the sum notion appears, we say so explicitly.
\end{remark}

%----------------------------------------------------------------------
\section{Main result: rank-one games have no better response cycles}
\label{sec:rank1}
%----------------------------------------------------------------------

This section contains our main result. We proceed in three steps. First, we show that supermodularity forces the rank-one factors to be monotone. Second, we characterize improvement directions and the resulting sign-quadrant dynamics. Third, we prove that no better response cycle survives these dynamics.

\subsection{Supermodularity forces monotonicity}

The first step establishes that the multiplicative factors in rank-one payoffs must be well-behaved.

\begin{lemma}[Monotonicity of rank-one factors]
\label{lem:monotone}
Let $(\Pi_1, \Pi_2)$ be a supermodular game in which each $\Pi_i$ has rank one, so that $\pi_i(a_1,a_2) = f_i(a_1) g_i(a_2)$. Then for each player~$i$, the functions $f_i$ and $g_i$ are both strictly monotone.

Moreover, after a suitable normalization that does not change payoffs (replacing $f_i \to -f_i$ and $g_i \to -g_i$ simultaneously), we may assume without loss of generality that both $f_i$ and $g_i$ are \emph{strictly increasing}.
\end{lemma}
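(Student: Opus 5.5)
Fix a player $i$; by symmetry of the definition it suffices to treat one player, say $i=1$, so that $\pi_1(a_1,a_2)=f_1(a_1)g_1(a_2)$ with $a_1\in\{1,\dots,N_1\}$ and $a_2\in\{1,\dots,N_2\}$. The plan is to rewrite the supermodularity inequality as a product of first differences. Set $\Delta f(a_1)=f_1(a_1+1)-f_1(a_1)$ and $\Delta g(a_2)=g_1(a_2+1)-g_1(a_2)$. Then supermodularity for player~1 at $(a_1,a_2)$ reads
\[
\Delta f(a_1)\,\bigl[g_1(a_2+1)-g_1(a_2)\bigr]=\Delta f(a_1)\,\Delta g(a_2)>0
\]
for every $a_1\le N_1-1$ and $a_2\le N_2-1$. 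Both ranges are nonempty because $N_1,N_2\ge 2$. For player~2 the same computation with roles swapped gives the identical product condition on $f_2,g_2$.

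\textbf{Key step: a strictly positive product matrix forces constant signs.} The inequality says that every entry of the rank-one matrix $\Delta f\,\Delta g^\top$ is strictly positive. In particular no $\Delta f(a_1)$ and no $\Delta g(a_2)$ can vanish. Fix any $a_2^*$. Then $\sgn\Delta f(a_1)=\sgn\Delta g(a_2^*)$ for all $a_1$, so all first differences of $f_1$ share one nonzero sign $s$. Symmetrically, fixing $a_1^*$ shows all $\Delta g(a_2)$ share the sign $\sgn\Delta f(a_1^*)=s$. Hence $f_1$ and $g_1$ are both strictly monotone, in the same direction $s\in\{+1,-1\}$.

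\textbf{Normalization.} If $s=-1$, replace $(f_1,g_1)$ by $(-f_1,-g_1)$. The product $f_1g_1$, and hence every payoff, is unchanged, while both differences flip sign, so both factors become strictly increasing. The normalization is done independently for each player, which is why it is legitimate. I would also note that the rank-one hypothesis is used only to write $\pi_i=f_ig_i$. If one factor were identically zero, the payoff would vanish, contradicting the strict inequality, so the factorization is genuine.

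I expect no real obstacle; the argument is a direct computation. The only care point is that the same-sign conclusion couples $f_i$ and $g_i$, rather than merely making each monotone. This coupling is what lets the simultaneous sign flip make both factors increasing. It relies on the strictness of supermodularity in the definition; with weak inequalities, zero differences could occur and the lemma would fail.
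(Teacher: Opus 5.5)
Your proof is correct and follows the paper's argument. You reduce supermodularity to $\Delta f_i\,\Delta g_i>0$ at every index pair, conclude that all first differences of $f_i$ and $g_i$ share one common nonzero sign, and flip both factors when that sign is negative. Fixing one index to propagate the sign just spells out the paper's one-line step that the signs ``must be constant and equal.''
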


\begin{proof}
Fix player~$i$ and consider the supermodularity condition. Writing $\Delta f_i(a_i) := f_i(a_i+1) - f_i(a_i)$ and $\Delta g_i(a_{-i}) := g_i(a_{-i}+1) - g_i(a_{-i})$, supermodularity requires:
\begin{align}
\label{eq:supermod_rank1}
&\pi_i(a_i{+}1, a_{-i}{+}1) - \pi_i(a_i, a_{-i}{+}1) > \pi_i(a_i{+}1, a_{-i}) - \pi_i(a_i, a_{-i}) \notag \\
&\quad\iff\quad \Delta f_i(a_i) \cdot g_i(a_{-i}{+}1) > \Delta f_i(a_i) \cdot g_i(a_{-i}) \notag \\
&\quad\iff\quad \Delta f_i(a_i) \cdot \Delta g_i(a_{-i}) > 0,
\end{align}
for all $a_i \in \{1,\ldots,N_i-1\}$ and $a_{-i} \in \{1,\ldots,N_{-i}-1\}$.

Since~\eqref{eq:supermod_rank1} must hold for \emph{all} pairs $(a_i, a_{-i})$, the signs $\sgn(\Delta f_i(a_i))$ and $\sgn(\Delta g_i(a_{-i}))$ must be constant and equal. Therefore, $f_i$ and $g_i$ are both strictly monotone in the same direction.

If both are strictly decreasing, replace $f_i \to -f_i$ and $g_i \to -g_i$. Since $(-f_i)(-g_i) = f_i g_i$, the payoffs are unchanged, and both new functions are strictly increasing.
\end{proof}

Economically, higher actions yield higher own-action value and higher opponent actions yield higher sensitivity, or vice versa after relabeling.

\subsection{Improvement directions under rank one}

Throughout this subsection, fix a supermodular game with rank-one payoffs and assume (by Lemma~\ref{lem:monotone}) that $f_1, g_1, f_2, g_2$ are all strictly increasing.

\begin{lemma}[Improvement direction]
\label{lem:direction}
For player~$1$, the payoff difference from switching action $a_1$ to $a_1'$ while player~$2$ plays $a_2$ is:
\[
\pi_1(a_1', a_2) - \pi_1(a_1, a_2) = \bigl[f_1(a_1') - f_1(a_1)\bigr] \cdot g_1(a_2).
\]
Since $f_1$ is strictly increasing, $\sgn\bigl(f_1(a_1') - f_1(a_1)\bigr) = \sgn(a_1' - a_1)$. Therefore:
\begin{enumerate}[label=(\roman*),nosep]
\item If $g_1(a_2) > 0$: player~$1$ improves by increasing $a_1$.
\item If $g_1(a_2) < 0$: player~$1$ improves by decreasing $a_1$.
\item If $g_1(a_2) = 0$: all payoff differences are zero; no strict improvement is possible.
\end{enumerate}
Analogously, player~$2$ improves by increasing $a_2$ when $f_2(a_1) > 0$, and by decreasing $a_2$ when $f_2(a_1) < 0$.
\end{lemma}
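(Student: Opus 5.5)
The statement is a direct computation from the rank-one factorization \eqref{eq:rank1}, combined with the normalization of Lemma~\ref{lem:monotone}. I would first write player~$1$'s payoff difference with $a_2$ held fixed:
\[
\pi_1(a_1', a_2) - \pi_1(a_1, a_2) = f_1(a_1')g_1(a_2) - f_1(a_1)g_1(a_2) = \bigl[f_1(a_1') - f_1(a_1)\bigr]\, g_1(a_2).
\]
This uses only distributivity, since the opponent factor $g_1(a_2)$ is common to both terms. Lemma~\ref{lem:monotone} lets us take $f_1$ strictly increasing, so $\sgn\bigl(f_1(a_1') - f_1(a_1)\bigr) = \sgn(a_1' - a_1)$ for all $a_1, a_1'$. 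Hence the sign of the payoff difference equals $\sgn(a_1'-a_1)\cdot\sgn(g_1(a_2))$.

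The three cases then follow by reading off this sign identity. If $g_1(a_2)>0$, a deviation is strictly improving exactly when $a_1'>a_1$, which gives (i). If $g_1(a_2)<0$, it is strictly improving exactly when $a_1'<a_1$, which gives (ii). If $g_1(a_2)=0$, every difference vanishes, so no deviation is strictly improving, which gives (iii). Note that (i) and (ii) are ``if and only if'' statements about which deviations improve, and not merely statements that some improving deviation exists. This is the form needed later for the quadrant argument.

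For player~$2$ I would run the symmetric computation with $a_1$ held fixed:
\[
\pi_2(a_1,a_2')-\pi_2(a_1,a_2)=f_2(a_1)\bigl[g_2(a_2')-g_2(a_2)\bigr].
\]
Here one point needs care about conventions. Under \eqref{eq:rank1}, player~$2$'s own action enters through $g_2$ and the opponent's action through $f_2$, so the strict monotonicity that matters is that of $g_2$. Lemma~\ref{lem:monotone} gives both $f_2$ and $g_2$ strictly increasing after the joint sign flip. Reading the supermodularity condition for player~$2$ as $\Delta g_2\cdot\Delta f_2>0$ in the same way confirms that the normalization applies. Therefore $\sgn(g_2(a_2')-g_2(a_2))=\sgn(a_2'-a_2)$, and the direction of improvement is governed by $\sgn f_2(a_1)$, as the statement claims.

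I expect no real obstacle. The only delicate point is this bookkeeping of which factor is ``own'' for player~$2$, together with checking that the normalization $f_i\to -f_i$, $g_i\to -g_i$ leaves payoffs, and hence all improvement relations, unchanged.
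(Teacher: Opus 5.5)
Your proposal is correct and follows the same route as the paper: its proof is the direct computation from $\pi_j(a_1,a_2)=f_j(a_1)\,g_j(a_2)$, combined with the normalization of Lemma~\ref{lem:monotone}. Your remark that player~$2$'s own-action factor is $g_2$, which is strictly increasing because player~$2$'s supermodularity condition likewise reduces to $\Delta f_2 \cdot \Delta g_2 > 0$, correctly makes explicit a step the paper leaves implicit.
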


\begin{proof}
Direct computation using $\pi_j(a_1,a_2) = f_j(a_1) g_j(a_2)$.
\end{proof}

The following observation handles the boundary cases in which a sensitivity function vanishes. It lets the cycle analysis below treat the zero-sign states explicitly rather than implicitly.

\begin{lemma}[No indifference at the mover's turn]
\label{lem:nodegenerate}
Along any BRC, with step indices taken modulo $m$: $g_1(a_2^n) \neq 0$ at every even step $n$, and $f_2(a_1^n) \neq 0$ at every odd step $n$.
\end{lemma}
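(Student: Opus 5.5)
The plan is to read the claim directly off Definition~\ref{def:brc} and the factorized payoff differences in Lemma~\ref{lem:direction}. First I will fix conventions so the parity statement makes sense. By Lemma~\ref{lem:alternation} and relabeling the starting profile, I will take the player who moves at even steps to be player~$1$. The mover at step $n$ is therefore player~$1$ when $n$ is even and player~$2$ when $n$ is odd. Because $m$ is even, indices can be read modulo $m$ without disturbing parity. If the definition's distinguished player is player~$2$, I will shift the cycle by one step, which just swaps the parity labels.

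For an even step $n$, player~$1$ moves from $a_1^n$ to $a_1^{n+1}$ while $a_2^{n+1}=a_2^n$. Lemma~\ref{lem:direction} gives the strict improvement as
\[
0 < \pi_1(a_1^{n+1},a_2^n)-\pi_1(a_1^n,a_2^n) = \bigl[f_1(a_1^{n+1})-f_1(a_1^n)\bigr]\, g_1(a_2^n).
\]
A product of reals that is strictly positive has no zero factor, so $g_1(a_2^n)\neq 0$. As a byproduct, $a_1^{n+1}\neq a_1^n$, with direction $\sgn(a_1^{n+1}-a_1^n)=\sgn g_1(a_2^n)$ by strict monotonicity of $f_1$ (Lemma~\ref{lem:monotone}).

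For an odd step $n$, the symmetric argument applies to player~$2$:
\[
0<\pi_2(a_1^n,a_2^{n+1})-\pi_2(a_1^n,a_2^n)=f_2(a_1^n)\bigl[g_2(a_2^{n+1})-g_2(a_2^n)\bigr].
\]
This forces $f_2(a_1^n)\neq 0$.

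I expect no real obstacle; the only care needed is the bookkeeping. One point is the parity and labeling convention just described. The other is the observation that, because the mover's action is the only one that changes, the relevant sensitivity is evaluated at the opponent's action at the \emph{start} of the step, $a_2^n$ or $a_1^n$. This equals its value at the end of the step and is what the later sign-quadrant argument will use.
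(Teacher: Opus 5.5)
Your proof is correct and is essentially the paper's argument. The paper likewise observes that the mover's strict improvement is the factored difference $[f_1(a_1')-f_1(a_1)]\,g_1(a_2^n)$ (respectively $f_2(a_1^n)\,[g_2(a_2')-g_2(a_2)]$), which cannot be positive when the sensitivity vanishes, and extends the claim to all indices by closure of the cycle. Your parity bookkeeping is harmless, but citing Lemma~\ref{lem:alternation} is unnecessary, since Definition~\ref{def:brc} already imposes alternation.
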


\begin{proof}
At an even step, player~$1$ must strictly improve against $a_2^n$. If $g_1(a_2^n) = 0$, every available payoff difference $[f_1(a_1') - f_1(a_1)]\, g_1(a_2^n)$ vanishes and no strict improvement exists. The odd case is symmetric. Because the cycle is closed, the requirement applies at every index modulo $m$.
\end{proof}

The economic content of Lemma~\ref{lem:direction} is that rank-one payoffs create a particularly stark form of strategic complementarity. Each player's entire improvement direction is pinned down by a single sign: the sign of the opponent-sensitivity function at the opponent's current action. In general supermodular games, by contrast, the improvement direction can depend on the action profile in a more complex way.

\subsection{Sign quadrants and monotone dynamics}

Since $g_1$ and $f_2$ are strictly increasing functions on finite domains, each changes sign at most once. This splits the action grid into four quadrants with qualitatively distinct dynamics, illustrated in Figure~\ref{fig:quadrants}.

\begin{definition}[Threshold actions]
\label{def:thresholds}
Define:
\begin{align*}
a_2^* &:= \max\{a_2 \in A_2 : g_1(a_2) \leq 0\}, \\
a_1^* &:= \max\{a_1 \in A_1 : f_2(a_1) \leq 0\},
\end{align*}
with $a_2^* = 0$ if $g_1(a_2) > 0$ for all $a_2$, and $a_2^* = N_2$ if $g_1(a_2) \leq 0$ for all $a_2$. Analogously for $a_1^*$.
\end{definition}

\begin{definition}[Sign quadrants]
\label{def:quadrants}
For $(a_1, a_2) \in A_1 \times A_2$, define the \emph{sign quadrant}:
\[
Q(a_1, a_2) := \bigl(\sgn(f_2(a_1)),\; \sgn(g_1(a_2))\bigr) \in \{+,-,0\}^2.
\]
Write $Q^{++}, Q^{+-}, Q^{-+}, Q^{--}$ for the four strict quadrants (both signs nonzero). Table~\ref{tab:quadrants} records the improvement directions in each quadrant.
\end{definition}

\begin{table}[t]
\centering
\caption{\textbf{Improvement directions by sign quadrant.} Each player's improving direction under rank-one supermodular payoffs, as a function of the signs $\sgn(f_2(a_1))$ and $\sgn(g_1(a_2))$ that define the quadrant (Definition~\ref{def:quadrants}). ``Up'' means the mover strictly raises her own action and ``Down'' that she strictly lowers it. Directions are aligned in $Q^{++}$ and $Q^{--}$ and opposed in $Q^{+-}$ and $Q^{-+}$.}
\label{tab:quadrants}
\begin{tabular}{lll}
\toprule
Quadrant & Player 1 direction & Player 2 direction \\
\midrule
$Q^{++}$ & Up ($a_1$ increases)   & Up ($a_2$ increases)   \\
$Q^{+-}$ & Down ($a_1$ decreases) & Up ($a_2$ increases)   \\
$Q^{-+}$ & Up ($a_1$ increases)   & Down ($a_2$ decreases) \\
$Q^{--}$ & Down ($a_1$ decreases) & Down ($a_2$ decreases) \\
\bottomrule
\end{tabular}
\end{table}

\begin{figure}[t]
\centering
\includegraphics[width=0.62\textwidth]{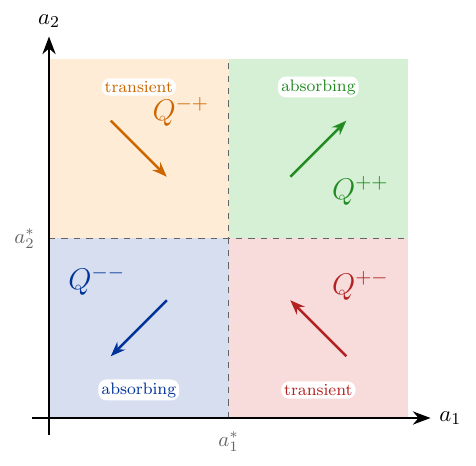}
\caption{\textbf{The sign-quadrant decomposition of the action grid.} The dashed lines mark the threshold actions $a_1^*$ and $a_2^*$ at which the sensitivity functions $f_2$ and $g_1$ change sign. Arrows show each quadrant's joint improvement direction. In the aligned quadrants $Q^{++}$ and $Q^{--}$ both players push the same way, and Proposition~\ref{prop:absorbing} shows these regions are absorbing. In the misaligned quadrants $Q^{+-}$ and $Q^{-+}$ the players pull in opposite directions; Proposition~\ref{prop:no_cross} shows every exit from a misaligned quadrant lands in an absorbing region, so the two misaligned quadrants are mutually unreachable. A cycle would need to revisit a quadrant, but the dynamics never allow it.}
\label{fig:quadrants}
\end{figure}

The decomposition captures the following intuition. In $Q^{++}$, both players face positive opponent sensitivity, so both wish to escalate: a regime of mutual reinforcement. In $Q^{--}$, both face negative sensitivity, so both wish to de-escalate: a regime of mutual retreat. In the off-diagonal quadrants $Q^{+-}$ and $Q^{-+}$, the players pull in opposite directions. The key insight, visible in Figure~\ref{fig:quadrants}, is that the aligned quadrants are absorbing while the opposed quadrants are mutually unreachable.

\begin{lemma}[Quadrant transition rules]
\label{lem:transitions}
Consider a BRC with player~$1$ moving at even steps and player~$2$ at odd steps. The following transition constraints hold at each step:

\begin{enumerate}[label=(\roman*),nosep]
\item \textbf{Player~$1$ moves (even $n$), $g_1(a_2^n) > 0$:} Player~$1$ increases $a_1$. Since $f_2$ is strictly increasing, the first coordinate of $Q$ is non-decreasing: $\sgn(f_2(a_1^{n+1})) \geq \sgn(f_2(a_1^n))$. The second coordinate is unchanged.

\item \textbf{Player~$1$ moves (even $n$), $g_1(a_2^n) < 0$:} Player~$1$ decreases $a_1$. The first coordinate of $Q$ is non-increasing. The second coordinate is unchanged.

\item \textbf{Player~$2$ moves (odd $n$), $f_2(a_1^n) > 0$:} Player~$2$ increases $a_2$. The second coordinate of $Q$ is non-decreasing. The first coordinate is unchanged.

\item \textbf{Player~$2$ moves (odd $n$), $f_2(a_1^n) < 0$:} Player~$2$ decreases $a_2$. The second coordinate of $Q$ is non-increasing. The first coordinate is unchanged.
\end{enumerate}
\end{lemma}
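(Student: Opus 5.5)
The argument is a direct case check that combines Lemma~\ref{lem:direction} with the monotonicity from Lemma~\ref{lem:monotone}. The sign quadrant $Q(a_1,a_2) = (\sgn(f_2(a_1)), \sgn(g_1(a_2)))$ depends on the two coordinates separately: the first entry depends only on $a_1$ and the second only on $a_2$. In a BRC, each step changes exactly one coordinate, since Definition~\ref{def:brc} requires $a_{-i}^{n+1} = a_{-i}^n$ for the non-mover. So the ``unchanged'' parts of each clause are immediate. When player~$1$ moves, $a_2^{n+1} = a_2^n$, so $\sgn(g_1(a_2^{n+1})) = \sgn(g_1(a_2^n))$. When player~$2$ moves, $a_1^{n+1} = a_1^n$ and the first coordinate is fixed.

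For the moving coordinate, I would take clause (i) as the model case. At an even step with $g_1(a_2^n) > 0$, Lemma~\ref{lem:direction} gives
\[
\pi_1(a_1^{n+1}, a_2^n) - \pi_1(a_1^n, a_2^n) = \bigl[f_1(a_1^{n+1}) - f_1(a_1^n)\bigr]\, g_1(a_2^n).
\]
This difference is strictly positive, and $g_1(a_2^n) > 0$, so $f_1(a_1^{n+1}) > f_1(a_1^n)$. Because $f_1$ is strictly increasing, this means $a_1^{n+1} > a_1^n$. Then $f_2$ is strictly increasing (again Lemma~\ref{lem:monotone}), so $f_2(a_1^{n+1}) > f_2(a_1^n)$. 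Since $\sgn$ is non-decreasing on $\R$ under the order $- < 0 < +$, we get $\sgn(f_2(a_1^{n+1})) \ge \sgn(f_2(a_1^n))$. Clause (ii) is the same argument with all inequalities reversed: $g_1(a_2^n) < 0$ forces $f_1(a_1^{n+1}) < f_1(a_1^n)$, hence $a_1^{n+1} < a_1^n$ and the first sign is non-increasing. Clauses (iii) and (iv) are symmetric, using $\pi_2 = f_2 g_2$, the sign of $f_2(a_1^n)$, and the fact that $g_2$ and $g_1$ are strictly increasing, which turns a move in $a_2$ into a monotone change of $\sgn(g_1(a_2))$.

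The statement does not cover the zero-sign cases. Lemma~\ref{lem:nodegenerate} shows they never occur at the mover's turn, so the four clauses exhaust every step of a BRC, and I would record that explicitly. There is no real obstacle here. The only care needed is to order the sign values as $- < 0 < +$ so that ``non-decreasing'' makes sense, and to note that only weak monotonicity of the sign can be concluded, because the value of $f_2$ or $g_1$ may stay within the same sign class.
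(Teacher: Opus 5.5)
Your proposal is correct and follows the paper's proof essentially step for step. Lemma~\ref{lem:direction} together with strict monotonicity of $f_2$ (respectively $g_1$) gives weak monotonicity of the moving sign coordinate, and the fixed action of the non-mover leaves the other coordinate unchanged. Your added remark, using Lemma~\ref{lem:nodegenerate} to show that the four clauses cover every step of a BRC, is a correct and harmless supplement.
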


\begin{proof}
We prove~(i); the others are analogous. At even step $n$ with $g_1(a_2^n) > 0$, Lemma~\ref{lem:direction} implies player~$1$ increases $a_1$: $a_1^{n+1} > a_1^n$. Since $f_2$ is strictly increasing, $f_2(a_1^{n+1}) > f_2(a_1^n)$. If $f_2(a_1^n) > 0$ then $f_2(a_1^{n+1}) > 0$; if $f_2(a_1^n) < 0$ then $f_2(a_1^{n+1})$ could be negative, zero, or positive. In all cases, $\sgn(f_2(a_1^{n+1})) \geq \sgn(f_2(a_1^n))$. The second coordinate $\sgn(g_1(a_2^{n+1})) = \sgn(g_1(a_2^n))$ because $a_2^{n+1} = a_2^n$.
\end{proof}

We now establish the two key structural properties of the sign-quadrant dynamics.

\begin{proposition}[The aligned regions are absorbing]
\label{prop:absorbing}
Define the aligned regions
\begin{align*}
U &:= \{(a_1,a_2) : g_1(a_2) > 0,\; f_2(a_1) \geq 0\}, \\
D &:= \{(a_1,a_2) : g_1(a_2) < 0,\; f_2(a_1) \leq 0\}.
\end{align*}
Note $Q^{++} \subseteq U$ and $Q^{--} \subseteq D$. Along any BRC, once the trajectory enters $U$ (respectively $D$) it remains there permanently. Moreover, while the trajectory is in $U$, every player-$1$ move strictly increases $a_1$; while in $D$, every player-$1$ move strictly decreases $a_1$.
\end{proposition}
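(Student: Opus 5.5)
The argument is a one-step invariance check followed by induction along the cycle. Fix a BRC with player~$1$ moving at even steps and player~$2$ at odd steps, and use the normalization of Lemma~\ref{lem:monotone}, so that $f_1,g_1,f_2,g_2$ are strictly increasing. I will show that if $(a_1^n,a_2^n)\in U$, then $(a_1^{n+1},a_2^{n+1})\in U$, whichever player moves at step $n$. The statement for $D$ then follows by the mirror-image argument with all inequalities reversed. Iterating the one-step claim over $n, n+1, \dots$ gives that the trajectory never leaves $U$ once it enters.

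\emph{Player~$1$ moves} (even $n$). Since $(a_1^n,a_2^n)\in U$, we have $g_1(a_2^n)>0$. By Lemma~\ref{lem:direction}(i), any strict improvement satisfies $a_1^{n+1}>a_1^n$; this is the ``moreover'' clause for $U$.
\begin{itemize}[nosep]
\item Because $f_2$ is strictly increasing, $f_2(a_1^{n+1})>f_2(a_1^n)\ge 0$, so the condition $f_2\ge 0$ is preserved (indeed it becomes strict).
\item Because $a_2^{n+1}=a_2^n$, we keep $g_1(a_2^{n+1})>0$.
\end{itemize}
This is exactly Lemma~\ref{lem:transitions}(i) specialized to $U$.

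\emph{Player~$2$ moves} (odd $n$). This is the only delicate step, because $U$ admits the boundary case $f_2(a_1^n)=0$, in which Lemma~\ref{lem:direction} gives no direction. Lemma~\ref{lem:nodegenerate} rules this out: at an odd step the mover must strictly improve, so $f_2(a_1^n)\neq 0$. Hence $f_2(a_1^n)>0$.
\begin{itemize}[nosep]
\item By Lemma~\ref{lem:direction}, player~$2$ strictly increases $a_2$, so $g_1(a_2^{n+1})>g_1(a_2^n)>0$ because $g_1$ is strictly increasing.
\item Because $a_1^{n+1}=a_1^n$, the value of $f_2$ is unchanged and stays $\ge 0$.
\end{itemize}
Thus the profile stays in $U$.

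The region $D$ is handled symmetrically. There $g_1<0$ forces player~$1$ to strictly decrease $a_1$, which keeps $f_2\le 0$. At player~$2$'s turn, Lemma~\ref{lem:nodegenerate} forces $f_2<0$, so $a_2$ strictly decreases and $g_1$ stays negative.

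The inclusions $Q^{++}\subseteq U$ and $Q^{--}\subseteq D$ are immediate from the definitions. The only point needing care is that the weak inequality in the definition of $U$ (and of $D$) is harmless. This is because the boundary state $f_2=0$ can only be occupied at player~$1$'s turns, where it imposes no constraint, and Lemma~\ref{lem:nodegenerate} excludes it at player~$2$'s turns.
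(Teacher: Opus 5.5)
Your proposal is correct and follows essentially the same argument as the paper. You do a one-step invariance check split by which player moves, use Lemma~\ref{lem:nodegenerate} to exclude the boundary case $f_2(a_1^n)=0$ at player~$2$'s turn, and handle $D$ by the symmetric argument.
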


\begin{proof}
Suppose $(a_1^n, a_2^n) \in U$, so $g_1(a_2^n) > 0$ and $f_2(a_1^n) \geq 0$.

\emph{Case 1: $n$ is even} (player~$1$ moves). Since $g_1(a_2^n) > 0$, Lemma~\ref{lem:direction} forces $a_1^{n+1} > a_1^n$. Since $f_2$ is strictly increasing, $f_2(a_1^{n+1}) > f_2(a_1^n) \geq 0$. The second coordinate is unchanged: $g_1(a_2^{n+1}) = g_1(a_2^n) > 0$. Hence the trajectory stays in $U$, and the player-$1$ move was strictly upward.

\emph{Case 2: $n$ is odd} (player~$2$ moves). By Lemma~\ref{lem:nodegenerate}, $f_2(a_1^n) \neq 0$, so $f_2(a_1^n) > 0$, and Lemma~\ref{lem:direction} forces $a_2^{n+1} > a_2^n$. Since $g_1$ is strictly increasing, $g_1(a_2^{n+1}) > g_1(a_2^n) > 0$. The first coordinate is unchanged. Hence the trajectory stays in $U$.

The argument for $D$ is symmetric.
\end{proof}

\begin{proposition}[Off-diagonal quadrants are trapping outside the aligned regions]
\label{prop:no_cross}
From a profile in $Q^{+-}$, a single BRC step leads to a profile in $Q^{+-} \cup U \cup D$, where $U, D$ are the aligned regions of Proposition~\ref{prop:absorbing}. Symmetrically, from a profile in $Q^{-+}$, a single step leads to $Q^{-+} \cup U \cup D$. In particular, a BRC trajectory that never visits $U \cup D$ remains in a single off-diagonal quadrant.
\end{proposition}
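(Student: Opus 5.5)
The argument is a direct case analysis on who moves, using the transition rules of Lemma~\ref{lem:transitions}, the no-indifference property of Lemma~\ref{lem:nodegenerate}, and strict monotonicity of $g_1$ and $f_2$ from Lemma~\ref{lem:monotone}. Fix a profile $(a_1^n,a_2^n)\in Q^{+-}$, so $f_2(a_1^n)>0$ and $g_1(a_2^n)<0$. In a BRC only one coordinate changes per step, and only the mover's coordinate of $Q$ can change sign. We therefore only need to track where that single coordinate can go.

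\emph{Player~1 moves (even $n$).} Since $g_1(a_2^n)<0$, Lemma~\ref{lem:direction} forces $a_1^{n+1}<a_1^n$. By Lemma~\ref{lem:transitions}(ii) the first sign coordinate is non-increasing, while the second stays negative. We then split on $\sgn f_2(a_1^{n+1})$:
\begin{itemize}
\item if it is $+$, the new profile is still in $Q^{+-}$;
\item if it is $0$ or $-$, then $g_1(a_2^{n+1})<0$ and $f_2(a_1^{n+1})\le 0$, which is exactly membership in $D$.
\end{itemize}

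\emph{Player~2 moves (odd $n$).} Since $f_2(a_1^n)>0$, player~2 increases $a_2$ (Lemma~\ref{lem:transitions}(iii)). The first coordinate stays $+$, and the second is non-decreasing. We split on $\sgn g_1(a_2^{n+1})$:
\begin{itemize}
\item if it is $-$, the profile remains in $Q^{+-}$;
\item if it is $+$, then $g_1>0$ and $f_2>0$, so the profile lies in $U$;
\item if it is $0$, the profile $(a_1^{n+1},a_2^{n+1})$ has $g_1=0$. It lies in none of $Q^{+-}$, $U$, $D$, so this case must be excluded.
\end{itemize}

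This zero case is the one delicate point, and I expect it to be the main obstacle. It is settled by the parity of the next step. Step $n+1$ is even, so player~1 must strictly improve against $a_2^{n+1}$. Lemma~\ref{lem:nodegenerate} says $g_1(a_2^{n+1})\neq 0$ along a BRC, with indices taken modulo $m$, so this case cannot occur. The player-1 branch has no analogous problem: landing on $f_2(a_1^{n+1})=0$ is still covered by $D$, because $D$ was defined with the weak inequality $f_2\le 0$. I would also note that Lemma~\ref{lem:nodegenerate} is what makes the statement hold "along a BRC" rather than for arbitrary single improving moves.

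The $Q^{-+}$ case follows by the mirror argument, with all signs reversed. There $f_2<0$ and $g_1>0$. Player~1 moves up, landing in $Q^{-+}$ or in $U$ (the weak inequality $f_2\ge 0$ covers the zero case). Player~2 moves down, landing in $Q^{-+}$, in $D$, or on $g_1=0$, and the last is again excluded by Lemma~\ref{lem:nodegenerate}.

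For the final sentence of the statement, suppose the trajectory never visits $U\cup D$. By induction on the steps, each step keeps it in the quadrant it started in. Note that $Q^{+-}$ and $Q^{-+}$ are never adjacent via a single step, since that would require both sign coordinates to flip, while only one coordinate changes per move. So a trajectory avoiding $U\cup D$ stays inside one off-diagonal quadrant.
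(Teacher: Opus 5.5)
Your proposal is correct and matches the paper's proof step for step. Both split on the mover's parity, both use the weak inequality $f_2 \le 0$ in $D$ (respectively $f_2 \ge 0$ in $U$) to absorb a player-$1$ move onto $f_2 = 0$, and both invoke Lemma~\ref{lem:nodegenerate} at the following even step to rule out a player-$2$ move onto $g_1 = 0$; your explicit induction for the final sentence (like the paper, it presumes the trajectory starts in an off-diagonal quadrant, which the proof of Theorem~\ref{thm:rank1} establishes in its Step~2) only spells out what the paper leaves implicit.
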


\begin{proof}
Consider a profile in $Q^{+-}$: $f_2(a_1^n) > 0$ and $g_1(a_2^n) < 0$.

At an \emph{even step} (player~$1$ moves): $g_1(a_2^n) < 0$ implies player~$1$ decreases $a_1$, so $f_2(a_1^{n+1}) < f_2(a_1^n)$ while $g_1$ is unchanged at a negative value. The destination has $g_1 < 0$ and $f_2$ of any sign: it lies in $Q^{+-}$ (if $f_2 > 0$) or in $D$ (if $f_2 \leq 0$).

At an \emph{odd step} (player~$2$ moves): $f_2(a_1^n) > 0$ implies player~$2$ increases $a_2$, so $g_1(a_2^{n+1}) > g_1(a_2^n)$ while $f_2$ is unchanged at a positive value. The destination has $f_2 > 0$ and $g_1$ of any sign. If $g_1(a_2^{n+1}) < 0$ it lies in $Q^{+-}$; if $g_1(a_2^{n+1}) > 0$ it lies in $U$. The remaining possibility, $g_1(a_2^{n+1}) = 0$, is excluded: step $n+1$ is even, and Lemma~\ref{lem:nodegenerate} requires $g_1(a_2^{n+1}) \neq 0$ at even steps.

The argument for $Q^{-+}$ is symmetric.
\end{proof}

\subsection{The main theorem}

We now state and prove the main result.

\begin{theorem}[Rank-one BRC avoidance]
\label{thm:rank1}
Let $(\Pi_1, \Pi_2)$ be a supermodular game in which each player's payoff matrix has rank one. Then there are no better response cycles.
\end{theorem}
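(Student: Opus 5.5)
We argue by contradiction. Suppose a BRC exists; by Lemma~\ref{lem:monotone} we normalize $f_1,g_1,f_2,g_2$ to be strictly increasing. Relabeling players if needed, player~$1$ moves at even steps and player~$2$ at odd steps (the roles are symmetric under swapping the players). By Lemma~\ref{lem:nodegenerate}, every profile $(a_1^n,a_2^n)$ of the cycle has a nonzero sign for the quantity governing the mover at that step. The plan is a case split on whether the cycle ever visits the aligned regions $U\cup D$ of Proposition~\ref{prop:absorbing}.

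\emph{Case A: the cycle visits $U$ (or symmetrically $D$).} By Proposition~\ref{prop:absorbing}, once the trajectory enters $U$ it stays there for all later steps. Since the cycle is closed and periodic, every profile lies in $U$. The same proposition says every player-$1$ move in $U$ strictly increases $a_1$. The cycle contains at least two player-$1$ moves, and player~$2$ never changes $a_1$, so $a_1^m>a_1^0$. This contradicts $a_1^m=a_1^0$. The case of $D$ is identical, with every player-$1$ move strictly decreasing $a_1$.

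\emph{Case B: the cycle never visits $U\cup D$.} Then every profile lies outside $U\cup D$. I first check that such profiles, at the relevant steps, lie in a strict off-diagonal quadrant. At even steps $g_1\neq0$ by Lemma~\ref{lem:nodegenerate}. If $g_1>0$, non-membership in $U$ forces $f_2<0$; if $g_1<0$, non-membership in $D$ forces $f_2>0$. So every even-step profile lies in $Q^{-+}$ or $Q^{+-}$. For odd steps, the profile shares $a_2$ with the preceding even-step profile, so $g_1$ has the same nonzero sign; combined with $f_2\neq0$ and exclusion from $U\cup D$, it too lies in a strict off-diagonal quadrant.

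Proposition~\ref{prop:no_cross} then confines the whole cycle to a single quadrant, say $Q^{+-}$. There player~$1$ always decreases $a_1$, which gives the same telescoping contradiction as in Case~A. In $Q^{-+}$, player~$1$ always increases $a_1$, again a contradiction.

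The step needing most care is Case~B: showing that profiles outside $U\cup D$ really lie in strict quadrants, since the zero-sign boundaries ($f_2=0$ or $g_1=0$) must be ruled out. This is exactly where Lemma~\ref{lem:nodegenerate} and the asymmetric definitions of $U$ and $D$ (weak inequality on $f_2$) do the work. A simpler way to sidestep the bookkeeping is to note directly that along any cycle $g_1(a_2^n)$ at even steps has constant sign. Player~$1$'s moves then all go the same direction, so $a_1$ is strictly monotone over the cycle, which is impossible. Constancy of that sign follows from the case analysis above.
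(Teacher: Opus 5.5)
Your proof is correct and follows the paper's argument essentially step for step: absorption in $U \cup D$ with monotone $a_1$, classification of the remaining profiles into $Q^{+-} \cup Q^{-+}$, confinement to one off-diagonal quadrant via Proposition~\ref{prop:no_cross}, and the monotonicity contradiction. The only variation is at odd steps, where you use $a_2^n = a_2^{n-1}$ to get $g_1(a_2^n) \neq 0$ directly; this shows that $(\pm,0)$ states never occur, which is slightly cleaner than the paper's argument that such a state feeds into $U$ or $D$ on the next step.
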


\begin{proof}
Suppose for contradiction that a BRC of length $m \geq 4$ exists. By Lemma~\ref{lem:monotone}, assume without loss of generality that $f_1, g_1, f_2, g_2$ are all strictly increasing.

\medskip
\noindent\textbf{Step 1: The cycle avoids the aligned regions $U$ and $D$.}
By Proposition~\ref{prop:absorbing}, $U$ and $D$ are forward invariant. Because the trajectory is cyclic, if any visited profile lies in $U$, then every visited profile does. But inside $U$, every player-$1$ move strictly increases $a_1$ (Proposition~\ref{prop:absorbing}), so the even-step subsequence $(a_1^n)$ is strictly increasing and cannot return to its initial value. The same argument applies to $D$ with strictly decreasing $a_1$. Hence no visited profile lies in $U \cup D$.

\medskip
\noindent\textbf{Step 2: Every visited profile lies in $Q^{+-} \cup Q^{-+}$.}
Classify a visited profile by the sign pair $(\sgn(f_2(a_1^n)), \sgn(g_1(a_2^n)))$. By Lemma~\ref{lem:nodegenerate}, the second coordinate is nonzero at even steps and the first is nonzero at odd steps. States with $g_1 > 0$ and $f_2 \geq 0$ lie in $U$; states with $g_1 < 0$ and $f_2 \leq 0$ lie in $D$; both are excluded by Step~1. A state with first coordinate $0$ at an odd step, or second coordinate $0$ at an even step, is excluded by Lemma~\ref{lem:nodegenerate}. Two cases remain. A state $(+, 0)$ can occur only at an odd step; player~$2$ then moves up (since $f_2 > 0$), making $g_1(a_2^{n+1}) > 0$, so the next profile lies in $U$, contradicting Step~1. Symmetrically, a state $(-, 0)$ at an odd step leads into $D$. Every remaining state is $(+,-)$ or $(-,+)$; that is, every visited profile lies in $Q^{+-}$ or $Q^{-+}$.

\medskip
\noindent\textbf{Step 3: The cycle remains in a single off-diagonal quadrant.}
By Proposition~\ref{prop:no_cross}, a step from $Q^{+-}$ lands in $Q^{+-} \cup U \cup D$, and a step from $Q^{-+}$ lands in $Q^{-+} \cup U \cup D$. Since $U \cup D$ is never visited (Step~1), the trajectory stays in whichever off-diagonal quadrant it starts in.

\medskip
\noindent\textbf{Step 4: Monotonicity within a single off-diagonal quadrant prevents cycling.}
Suppose the entire trajectory stays in $Q^{+-}$. At every even step, $g_1(a_2^n) < 0$, so player~$1$ strictly decreases $a_1$; the even-step subsequence $(a_1^n)$ is strictly decreasing and cannot return to its initial value, contradicting closure of the cycle. The case $Q^{-+}$ is symmetric, with player~$1$ strictly increasing.

\medskip
Since every case leads to a contradiction, no BRC exists.
\end{proof}

Two features of this result are worth noting. First, the proof does not require quasi-concavity. The absence of BRCs follows from supermodularity and rank one alone. Quasi-concavity enters only through Veiel's Theorem~4.1, which requires it for the class $\calS^*$ of payoffs to which the limit uniqueness characterization applies. Second, the boundary cases $g_1(a_2) = 0$ or $f_2(a_1) = 0$ are handled explicitly: by Lemma~\ref{lem:nodegenerate}, the mover's sensitivity never vanishes at the mover's turn, and Step~2 of the proof shows that a profile at which the \emph{other} player's sensitivity vanishes feeds the trajectory into an aligned region within one step.

\subsection{Consequences for equilibrium selection}

The main theorem yields immediate consequences for potential games and global game equilibrium selection.

\begin{corollary}[Rank-one generalized ordinal potential]
\label{cor:potential}
Every supermodular game with rank-one payoff matrices admits a generalized ordinal potential on every subset of actions $B_1 \times B_2 \subseteq A_1 \times A_2$.
\end{corollary}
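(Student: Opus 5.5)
The plan is to reduce the corollary to Theorem~\ref{thm:rank1} via the Monderer--Shapley equivalence. \citet{monderer1996potential} show that a finite game has a generalized ordinal potential on a set of profiles if and only if that set contains no improvement cycle. So it suffices to show that, for every product subset $B_1 \times B_2 \subseteq A_1 \times A_2$, the restricted game has no improvement cycle.

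\textbf{Step 1 (reduce to alternating cycles).} Suppose an improvement cycle exists in the game restricted to $B_1 \times B_2$, with an arbitrary order of movers. Lemma~\ref{lem:alternation} turns it into an alternating BRC, and we want that BRC to stay inside $B_1 \times B_2$. This holds because the lemma's reduction only merges consecutive same-player moves, keeping the first and last profiles of each merged block. Every profile of the reduced cycle is therefore a profile of the original cycle, so it still lies in $B_1 \times B_2$.

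\textbf{Step 2 (lift the cycle to the full game).} Each step of that BRC is a unilateral strict improvement under the original payoffs $\pi_i = f_i g_i$, and its profiles lie in $A_1 \times A_2$. It is therefore a BRC of the full game in the sense of Definition~\ref{def:brc}. This contradicts Theorem~\ref{thm:rank1}.

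This route needs no reverification of supermodularity on the subgame. That matters because supermodularity as defined uses adjacent indices, and these need not remain adjacent after restricting to $B_i$. The earlier arguments do not need adjacency anyway: Lemma~\ref{lem:direction} uses only that $f_2$ and $g_1$ are strictly increasing, and Lemmas~\ref{lem:monotone}--\ref{lem:transitions} are applied in the full game.

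\textbf{Step 3 (conclude).} Since $B_1 \times B_2$ has no improvement cycles, Monderer--Shapley gives a generalized ordinal potential on it.

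The main obstacle is bookkeeping about definitions rather than mathematics. First, the Monderer--Shapley cycle notion must match Definition~\ref{def:brc}, which Lemma~\ref{lem:alternation} handles. Second, the potential on a subset must be read as the potential of the restricted game, which exists exactly when that subset has no improvement cycles; this is the form in which the paper quotes the equivalence in Section~\ref{sec:model}.

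One could instead build the potential explicitly, for example by taking the potential value of a profile to be the length of the longest improvement path ending there, which is finite in an acyclic finite graph. That construction is exactly the Monderer--Shapley argument, so we would simply cite it.
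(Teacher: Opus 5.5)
Your proposal is correct and follows the paper's proof. Monderer--Shapley reduces the claim to the absence of improvement cycles on $B_1 \times B_2$, Lemma~\ref{lem:alternation} makes any such cycle alternating, and Theorem~\ref{thm:rank1} rules it out. The only difference is that the paper applies the theorem to the restricted game, asserting that it inherits rank one and supermodularity (true by telescoping the adjacent-index inequalities), whereas you lift the cycle to the full game, which sidesteps that check.
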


\begin{proof}
By \citet{monderer1996potential}, a game admits a generalized ordinal potential on $B_1 \times B_2$ if and only if there are no improvement cycles in $B_1 \times B_2$, where the order of movers is unrestricted. By Lemma~\ref{lem:alternation}, any such improvement cycle yields an alternating BRC. Since $B_1 \times B_2$ inherits rank-one payoffs and supermodularity from the full game, Theorem~\ref{thm:rank1} rules out alternating BRCs, and hence all improvement cycles.
\end{proof}

\begin{corollary}[Limit uniqueness for rank-one global games]
\label{cor:limuniq}
Let $s \in \calS^*$ be a quasi-concave supermodular payoff matrix on the unit sphere such that each player's payoff matrix has rank one. Then limit uniqueness holds at $s$ in the sense of \citet{veiel2025limits}, Theorem~4.1.
\end{corollary}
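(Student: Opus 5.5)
The plan is to obtain Corollary~\ref{cor:limuniq} by chaining Theorem~\ref{thm:rank1} with the characterization in \citet{veiel2025limits}, Theorem~4.1. That theorem says: for $s \in \calS^*$ quasi-concave and supermodular, limit uniqueness holds at $s$ if and only if no best-reply closed set (BRC set) contains a risk-dominant better response cycle. It therefore suffices to show that $s$ carries no better response cycle at all, on any subset of actions. Once that is established, the right-hand side of Veiel's equivalence holds vacuously, because a risk-dominant BRC is in particular a BRC.

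First, I will check that the objects match. Veiel's setting normalizes payoffs to the unit sphere. This normalization is a positive rescaling, possibly combined with the subtraction of opponent-action-dependent constants, depending on his convention. Positive scaling preserves both rank one and the strict inequalities in Definition~\ref{def:brc}. If Veiel's normalization also subtracts terms depending only on $a_{-i}$, rank one of $\Pi_i$ might not survive literally. However, BRCs depend only on payoff differences $\pi_i(a_i',a_{-i})-\pi_i(a_i,a_{-i})$, and those are unchanged by such terms. So I will phrase the argument as follows: the hypothesis is that $s$ has rank-one payoff matrices in the representation to which Theorem~\ref{thm:rank1} applies, and the BRC structure is invariant across the normalization. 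I also need Veiel's BRC notion to coincide with Definition~\ref{def:brc}. The paper defines it following Veiel, and Lemma~\ref{lem:alternation} covers any variant with unrestricted order of movers.

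Second, I will apply Theorem~\ref{thm:rank1}, restricted to each candidate BRC set $B_1\times B_2$. The subgame inherits rank one and supermodularity, exactly as in the proof of Corollary~\ref{cor:potential}. The theorem requires strict supermodularity, which the paper's definition builds in. The theorem gives no BRC in $B_1\times B_2$, and hence none that is risk-dominant. Veiel's Theorem~4.1 then yields limit uniqueness at $s$.

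The main obstacle is the interface step: confirming that the supermodularity, quasi-concavity, and BRC definitions in $\calS^*$ coincide with the ones used here. Of these, the sphere normalization and supermodularity over non-adjacent action pairs matter most. Adjacent strict increasing differences imply the general case by telescoping, so I will note that explicitly. The result is conditional on Veiel's Theorem~4.1, and I will record that dependence.
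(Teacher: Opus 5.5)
Your proposal is correct and follows the paper's proof: the paper invokes Corollary~\ref{cor:potential}, which is Theorem~\ref{thm:rank1} applied on every $B_1 \times B_2$ together with Lemma~\ref{lem:alternation}, to conclude that no better response cycle exists, so no BRC set can contain a risk-dominant one, and Veiel's Theorem~4.1 then gives limit uniqueness. Your interface checks on the sphere normalization, on telescoping to non-adjacent action pairs, and on strictness are extra care that the paper leaves implicit, and they are harmless.
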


\begin{proof}
By Corollary~\ref{cor:potential}, no BRC set contains a risk-dominant better response cycle (since no better response cycles exist at all). Veiel's Theorem~4.1 then implies limit uniqueness.
\end{proof}

The logical chain connecting our results to limit uniqueness is:
\begin{multline*}
\text{Rank 1} \;\xRightarrow{\text{Thm~\ref{thm:rank1}}}\; \text{No BRCs}
\;\xRightarrow[\text{(1996)}]{\text{M\&S}}\; \text{Gen.\ ord.\ potential} \\
\;\xRightarrow[\text{Thm~4.1}]{\text{Veiel}}\; \text{Limit uniqueness.}
\end{multline*}
The conclusion holds for any action-space size, with no genericity or dimension assumptions.

\begin{remark}[Dependence on \citet{veiel2025limits}]
\label{rem:conditional}
Since Theorem~4.1 of \citet{veiel2025limits} is, at the time of writing, an unpublished working paper, it is worth stating exactly which of our results rest on it. Theorem~\ref{thm:rank1} and Corollary~\ref{cor:potential} are unconditional: they rest on \citet{monderer1996potential} and direct arguments alone. The same is true of every result in Section~\ref{sec:rankr} (Remark~\ref{rem:open}, Proposition~\ref{prop:no_length4}, Example~\ref{ex:rank2}, Theorem~\ref{thm:margin}, and its corollaries) and of every estimation result in Sections~\ref{sec:rpca} and~\ref{sec:matcomp}. Corollary~\ref{cor:limuniq} is conditional on Veiel's Theorem~4.1 exactly as stated in his paper: the signal structure there lies within his noise class, so no extension is required. The multi-market selection statement requires, in addition, an extension of his noise class to the effective noise generated by the median estimator; we therefore state it as Conjecture~\ref{conj:multimarket_lu} rather than as a theorem, with the proved content isolated in Proposition~\ref{prop:signal_properties}.
\end{remark}

%----------------------------------------------------------------------
\section{Sharpness: rank-two cycles, minimal length, and a margin theorem}
\label{sec:rankr}
%----------------------------------------------------------------------

For $r \geq 2$, the sign-quadrant argument of Section~\ref{sec:rank1} breaks down: the improvement direction at $(a_1, a_2)$ depends on $a_{-i}$ through a sum of $r$ terms with potentially varying signs. This section maps the boundary of the rank-one theorem with three results.

First, the theorem is tight. Rank two already admits better response cycles, and we give an explicit three-action example (Figure~\ref{fig:cycle}). Because BRC existence is an open condition on payoffs, no genericity argument can rescue BRC avoidance for $r \geq 2$. Second, supermodularity alone rules out the shortest cycles: no two-player supermodular game admits a BRC of length four, so every supermodular BRC has length at least six and requires at least three actions per player. Third, the rank-one conclusion is quantitatively robust: every game within an explicit sup-norm margin of a nondegenerate rank-one supermodular game is cycle-free, regardless of its own rank.

\subsection{Why genericity cannot substitute for structure}
\label{subsec:openness}

\begin{remark}[BRC existence is an open condition]
\label{rem:open}
Fix the action sets and a candidate cycle specification. The set of payoff pairs $(\Pi_1, \Pi_2)$ satisfying the cycle's strict improvement inequalities is open, both in $\R^{N_1 \times N_2} \times \R^{N_1 \times N_2}$ and relative to any rank manifold $\calM_r$. Consequently, if some game on $\calM_r$ admits a BRC, an open set of games on $\calM_r$ does, and games with BRCs have positive measure on $\calM_r$. A transversality or dimension-counting argument against the locus where all of the cycle's payoff differences vanish simultaneously is uninformative here: the topological boundary of the inequality region is contained in the \emph{union} of the single-constraint zero sets, not their intersection, so the codimension of the joint locus places no restriction on the inequality region. Genericity arguments therefore cannot deliver BRC avoidance for $r \geq 2$, and Example~\ref{ex:rank2} below shows that no such avoidance holds.
\end{remark}

\subsection{No supermodular cycle has length four}
\label{subsec:length4}

Although rank restrictions beyond one cannot eliminate cycles, supermodularity by itself constrains their geometry.

\begin{proposition}[Minimal cycle length]
\label{prop:no_length4}
No two-player supermodular game, of any rank, admits a BRC of length $m = 4$. Consequently, every BRC in a supermodular game has length at least six, and a length-six BRC requires $N_1 \geq 3$ and $N_2 \geq 3$.
\end{proposition}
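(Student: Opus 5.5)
The plan is to reduce a length-four BRC to a $2\times 2$ rectangle of profiles and show that the two players' strict-improvement conditions impose incompatible sign requirements under increasing differences.

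\textbf{Step 1: increasing differences between arbitrary actions.} Supermodularity is stated only for adjacent actions, so I first extend it. For $a_i < a_i'$ and $a_{-i} < a_{-i}'$, I sum the adjacent inequalities over every unit cell of the rectangle $[a_i,a_i'] \times [a_{-i},a_{-i}']$. Telescoping then gives
\[
\pi_i(a_i', a_{-i}') - \pi_i(a_i, a_{-i}') > \pi_i(a_i', a_{-i}) - \pi_i(a_i, a_{-i}).
\]
Each summand is strict and there is at least one cell, so the inequality is strict. Equivalently, for fixed $a_i<a_i'$ the difference $D_i(a_{-i}) := \pi_i(a_i',a_{-i}) - \pi_i(a_i,a_{-i})$ is strictly increasing in $a_{-i}$.

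\textbf{Step 2: rule out $m=4$.} In a length-four BRC with player~$1$ moving first, the profiles are $(x,y)\to(x',y)\to(x',y')\to(x,y')\to(x,y)$. Each move is a strict improvement, so the actions differ: $x\neq x'$ and $y\neq y'$. The two player-$1$ moves give $\pi_1(x',y)>\pi_1(x,y)$ and $\pi_1(x,y')>\pi_1(x',y')$. The two player-$2$ moves give $\pi_2(x',y')>\pi_2(x',y)$ and $\pi_2(x,y)>\pi_2(x,y')$.

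Assume $x<x'$; the case $x>x'$ is symmetric with all inequalities reversed. Then player~$1$'s difference $D_1$ for the pair $(x,x')$ is positive at $y$ and negative at $y'$. Since $D_1$ is strictly increasing by Step~1, this forces $y'<y$.

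Now consider player~$2$'s difference between the higher action $y$ and the lower action $y'$, as a function of $a_1$. It is negative at $x'$ and positive at $x$. By Step~1 it must be strictly increasing in $a_1$, yet $x<x'$, which is a contradiction. The case where player~$2$ moves first is identical after relabeling the starting index.

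\textbf{Step 3: consequences.} By Definition~\ref{def:brc} (and Lemma~\ref{lem:alternation}), BRC lengths are even and at least four, so every BRC in a supermodular game has length at least six.

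In a length-six BRC, player~$1$ moves at three steps, taking its action through $x_0\to x_1\to x_2\to x_0$. Each move is strict, so $x_1\ne x_0$, $x_2\ne x_1$ and $x_0\ne x_2$. These are three distinct actions, so $N_1\ge 3$. The same argument applied to player~$2$'s three moves gives $N_2\ge 3$.

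I expect the only delicate point to be Step~2's bookkeeping: tracking which of $y,y'$ is larger and applying Step~1 to player~$2$ with the correct orientation. The telescoping in Step~1 is routine.
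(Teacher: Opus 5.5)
Your proof is correct and follows essentially the same route as the paper. You telescope the adjacent supermodularity inequalities over the rectangle to get increasing differences, use player~1's two moves to fix the order of $y$ and $y'$, and use player~2's two moves to contradict it; the paper instead reverses both action orders and splits on whether $b'>b$ or $b'<b$. For $N_1, N_2 \geq 3$, your pairwise-distinctness argument is a slightly more direct version of the paper's parity argument.
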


\begin{proof}
A length-four cycle visits the profiles $(a,b), (a',b), (a',b'), (a,b'), (a,b)$ with
\begin{gather*}
\pi_1(a',b) > \pi_1(a,b), \qquad \pi_2(a',b') > \pi_2(a',b), \\
\pi_1(a,b') > \pi_1(a',b'), \qquad \pi_2(a,b) > \pi_2(a,b').
\end{gather*}
Relabeling both players' action sets in reverse order if necessary, which preserves supermodularity, assume $a' > a$. Since player~$2$ moves, $b' \neq b$.

If $b' > b$, telescoping the strict supermodularity inequalities over the rectangle $[a, a'] \times [b, b']$ gives
\[
\pi_1(a',b') - \pi_1(a,b') \;>\; \pi_1(a',b) - \pi_1(a,b) \;>\; 0,
\]
contradicting the third displayed inequality.

If $b' < b$, the same telescoping applied to player~$2$'s payoffs over $[a, a'] \times [b', b]$ gives
\[
\pi_2(a',b) - \pi_2(a',b') \;>\; \pi_2(a,b) - \pi_2(a,b') \;>\; 0,
\]
contradicting the second displayed inequality.

For the final claim, observe that in a length-six BRC each player makes three moves that must return to the starting action. In a two-element action set every move switches the action, so three moves end at the other action. Hence $N_1, N_2 \geq 3$.
\end{proof}

\begin{remark}[Numerical evidence on longer cycles]
\label{rem:lp_lengths}
For fixed action sets, the existence of a supermodular game supporting a given cycle specification is a linear feasibility problem in the payoff entries: both the cycle inequalities and strict supermodularity are linear. Solving these linear programs over all cycle shapes shows that length-six cycles are feasible at $3 \times 3$ (Example~\ref{ex:rank2}) and length-eight cycles at $4 \times 4$, while length-eight cycles are infeasible at $3 \times 3$ and $3 \times 4$, and length-ten cycles are infeasible at $4 \times 4$. This pattern suggests the conjecture that a supermodular BRC in which each player makes $T$ moves requires $\min(N_1, N_2) \geq T$. We leave the conjecture open.
\end{remark}

\subsection{Tightness: rank two admits cycles}
\label{subsec:tightness}

\begin{example}[A rank-two supermodular game with a BRC]
\label{ex:rank2}
Let $N_1 = N_2 = 3$ and define $\Pi_i = F_i G_i^\top$ (so each $\Pi_i$ has rank at most two) with
\begin{gather*}
F_1 = \begin{pmatrix} -0.576 & -0.436 \\ -0.718 & -0.273 \\ -4.103 & \phantom{-}0.109 \end{pmatrix}, \qquad
G_1 = \begin{pmatrix} \phantom{-}0.983 & -0.179 \\ -0.156 & -0.933 \\ -0.098 & -0.313 \end{pmatrix}, \\[4pt]
F_2 = \begin{pmatrix} \phantom{-}0.322 & \phantom{-}1.705 \\ -4.620 & \phantom{-}0.386 \\ -4.290 & -0.287 \end{pmatrix}, \qquad
G_2 = \begin{pmatrix} -0.197 & \phantom{-}0.924 \\ -0.044 & \phantom{-}0.325 \\ -0.979 & -0.200 \end{pmatrix}.
\end{gather*}
Direct computation verifies that all eight double differences of $\Pi_1$ and $\Pi_2$ are strictly positive (the game is strictly supermodular, with minimum double difference approximately $0.034$), that both matrices have rank exactly two, and that the length-six alternating sequence
\[
(3,1) \to (2,1) \to (2,3) \to (1,3) \to (1,2) \to (3,2) \to (3,1)
\]
is a BRC: each of the six moves yields a strict payoff improvement for the moving player, with the smallest improvement margin approximately $0.034$. (Actions are indexed $1$ to $3$; player~$1$ moves first.) Figure~\ref{fig:cycle} traces the cycle on the action grid.
\end{example}

\begin{figure}[t]
\centering
\includegraphics[width=0.68\textwidth]{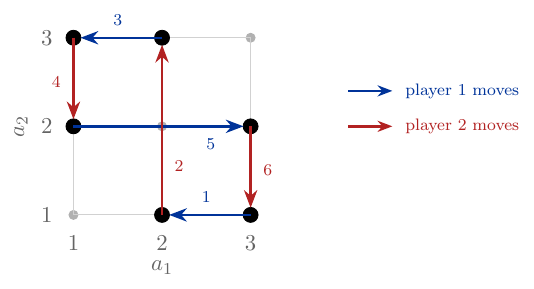}
\caption{\textbf{A length-six better response cycle at rank two.} The cycle of Example~\ref{ex:rank2} traced on the $3 \times 3$ action grid: $(3,1) \to (2,1) \to (2,3) \to (1,3) \to (1,2) \to (3,2) \to (3,1)$. Horizontal arrows are strict improvements by player~1 and vertical arrows by player~2, with steps numbered in order. Both payoff matrices have rank exactly two and the game is strictly supermodular, so the rank-one hypothesis of Theorem~\ref{thm:rank1} cannot be relaxed. All fourteen defining inequalities hold with margin at least $0.034$, so an open neighborhood of the example on the rank-two manifold consists of supermodular games with a BRC.}
\label{fig:cycle}
\end{figure}

The example establishes two things. Theorem~\ref{thm:rank1} is tight: the rank-one hypothesis cannot be relaxed to rank two, even maintaining strict supermodularity and a minimal action space. And because all fourteen inequalities hold with strictly positive margins, the example is robust: an open neighborhood of it on the rank-two manifold consists entirely of supermodular games with a BRC. Combined with Remark~\ref{rem:open}, games with BRCs occupy a positive-measure subset of rank-two supermodular games.

\subsection{Robustness: a margin theorem for near-rank-one games}
\label{subsec:margin}

The rank-one result extends quantitatively to all games sufficiently close to a well-behaved rank-one game. This requires excluding rank-one games with indifference rows or columns, near which arbitrarily small perturbations can create arbitrary preferences.

\begin{definition}[Nondegeneracy and margin]
\label{def:margin}
A rank-one supermodular game $(\Pi_1, \Pi_2)$ with $\pi_i = f_i g_i$ is \emph{nondegenerate} if $g_1(a_2) \neq 0$ for all $a_2$ and $f_2(a_1) \neq 0$ for all $a_1$. Its \emph{margin} is
\[
\gamma := \min\Bigl\{ \min_{a_1 \neq a_1',\, a_2} \bigl| [f_1(a_1') - f_1(a_1)]\, g_1(a_2) \bigr|, \;\; \min_{a_2 \neq a_2',\, a_1} \bigl| f_2(a_1)\, [g_2(a_2') - g_2(a_2)] \bigr| \Bigr\}.
\]
By Lemma~\ref{lem:monotone} the increments of $f_1$ and $g_2$ are nonzero, so $\gamma > 0$ exactly when the game is nondegenerate.
\end{definition}

\begin{theorem}[Margin robustness]
\label{thm:margin}
Let $(\Pi_1, \Pi_2)$ be a nondegenerate rank-one supermodular game with margin $\gamma > 0$. Let $(\tilde{\Pi}_1, \tilde{\Pi}_2)$ be \emph{any} game with $\|\tilde{\Pi}_i - \Pi_i\|_\infty < \gamma/2$ for $i = 1, 2$, where $\|\cdot\|_\infty$ is the entrywise maximum norm. Then $(\tilde{\Pi}_1, \tilde{\Pi}_2)$ admits no better response cycle. In particular, $(\tilde{\Pi}_1, \tilde{\Pi}_2)$ need not be supermodular, low-rank, or otherwise structured.
\end{theorem}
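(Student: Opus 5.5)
The plan is to show that the perturbed game has exactly the same \emph{strict improvement relation} as the unperturbed rank-one game, and then apply Theorem~\ref{thm:rank1}. Fix player~$1$, an opponent action $a_2$, and two own actions $a_1 \neq a_1'$. The unperturbed payoff difference is $[f_1(a_1') - f_1(a_1)]\, g_1(a_2)$, and by the definition of the margin its absolute value is at least $\gamma$. The perturbed difference equals the unperturbed one plus $(\tilde\pi_1 - \pi_1)(a_1',a_2) - (\tilde\pi_1 - \pi_1)(a_1,a_2)$. This error term has absolute value strictly less than $\gamma/2 + \gamma/2 = \gamma$. Hence the perturbed difference is nonzero and has the same sign as the unperturbed one. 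The same argument, using the second term in the definition of $\gamma$, handles player~$2$'s differences $f_2(a_1)[g_2(a_2') - g_2(a_2)]$. The strict inequality $\|\tilde\Pi_i - \Pi_i\|_\infty < \gamma/2$ is exactly what makes the bound strict, which is why the theorem uses $\gamma/2$.

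Consequently, for every unilateral deviation by either player, the deviation is strictly improving in $(\tilde\Pi_1,\tilde\Pi_2)$ if and only if it is strictly improving in $(\Pi_1,\Pi_2)$. Every defining inequality of Definition~\ref{def:brc} concerns only such single-player payoff differences against a fixed opponent action. Therefore any BRC of the perturbed game is, step for step, a BRC of the original game. The original game is rank-one and supermodular, so Theorem~\ref{thm:rank1} says it has no BRC. This contradiction gives the claim.

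No use is made of supermodularity, rank, or any other structure of $\tilde\Pi_i$, which justifies the final sentence of the statement. If one wants the stronger conclusion that the perturbed game has no improvement cycles with unrestricted order of movers, this follows the same way via Lemma~\ref{lem:alternation}, or directly, since the improvement graphs coincide.

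I expect no serious obstacle. The one point requiring care is that the margin must bound \emph{all} pairs $a_1 \neq a_1'$, not just adjacent ones, since a BRC step may jump several actions. The definition of $\gamma$ takes the minimum over all pairs, so this is covered. Nondegeneracy ($g_1, f_2$ nonvanishing) is what guarantees $\gamma>0$, and this is where degenerate indifference rows are excluded.
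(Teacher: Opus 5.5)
Your proposal is correct and follows the paper's own proof: you show that every single-deviation payoff difference keeps its strict sign under a perturbation of sup-norm less than $\gamma/2$. Any BRC of $(\tilde{\Pi}_1, \tilde{\Pi}_2)$ is therefore a BRC of the rank-one game, which contradicts Theorem~\ref{thm:rank1}. Your remark that the margin is taken over all pairs $a_i \neq a_i'$, not just adjacent ones, is exactly the point that makes this sign-preservation step go through.
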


\begin{proof}
Every single-deviation payoff difference of the perturbed game differs from the corresponding difference of the rank-one game by less than $2 \cdot \gamma/2 = \gamma$. Since every single-deviation difference of the rank-one game has absolute value at least $\gamma$, the perturbed difference has the same strict sign. A BRC of $(\tilde{\Pi}_1, \tilde{\Pi}_2)$ is a sequence of single deviations, each strictly improving under $\tilde{\Pi}$; by sign preservation each is strictly improving under $\Pi$, so the same sequence is a BRC of $(\Pi_1, \Pi_2)$. This contradicts Theorem~\ref{thm:rank1}.
\end{proof}

\begin{corollary}[Spectral version]
\label{cor:spectral}
Let $(\tilde{\Pi}_1, \tilde{\Pi}_2)$ be a game whose best rank-one approximations $\Pi_i^{(1)} := \sigma_1^{(i)} \mathbf{u}_1^{(i)} (\mathbf{v}_1^{(i)})^\top$ form a nondegenerate supermodular rank-one game with margin $\gamma$. If the spectral tails satisfy
\[
\Bigl\| \tilde{\Pi}_i - \Pi_i^{(1)} \Bigr\|_\infty \;\leq\; \sum_{\ell \geq 2} \sigma_\ell^{(i)} \, \|\mathbf{u}_\ell^{(i)}\|_\infty \|\mathbf{v}_\ell^{(i)}\|_\infty \;<\; \frac{\gamma}{2} \quad \text{for } i = 1,2,
\]
then $(\tilde{\Pi}_1, \tilde{\Pi}_2)$ admits no better response cycle. Loosely, games whose payoff variation is dominated by a single factor inherit the rank-one conclusion, with the tolerance governed by the margin of the leading factor.
\end{corollary}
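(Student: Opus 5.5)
The plan is to reduce Corollary~\ref{cor:spectral} directly to Theorem~\ref{thm:margin}, taking $\Pi_i := \Pi_i^{(1)}$ as the reference rank-one game.

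By hypothesis, $(\Pi_1^{(1)}, \Pi_2^{(1)})$ is a nondegenerate rank-one supermodular game with margin $\gamma > 0$. So Theorem~\ref{thm:margin} applies as soon as the entrywise deviation satisfies $\|\tilde{\Pi}_i - \Pi_i^{(1)}\|_\infty < \gamma/2$. The displayed chain of inequalities gives exactly this: its right end is strict, so the left end is strictly below $\gamma/2$.

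For completeness I will also verify the first inequality of the chain, even though it is stated as a hypothesis, since it is what makes the condition checkable from the spectrum. I will write the SVD
\[
\tilde{\Pi}_i = \sum_{\ell \ge 1} \sigma_\ell^{(i)} \mathbf{u}_\ell^{(i)} (\mathbf{v}_\ell^{(i)})^\top .
\]
Subtracting the leading term gives the spectral tail. Each entry of the tail is bounded by the triangle inequality:
\[
\bigl|(\tilde{\Pi}_i - \Pi_i^{(1)})_{a_1 a_2}\bigr| \le \sum_{\ell\ge 2} \sigma_\ell^{(i)} \,\bigl|u_{\ell,a_1}^{(i)}\bigr|\,\bigl|v_{\ell,a_2}^{(i)}\bigr| \le \sum_{\ell \ge 2}\sigma_\ell^{(i)} \|\mathbf{u}_\ell^{(i)}\|_\infty \|\mathbf{v}_\ell^{(i)}\|_\infty .
\]
Taking the maximum over entries yields the first inequality. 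Theorem~\ref{thm:margin} then shows that $(\tilde{\Pi}_1,\tilde{\Pi}_2)$ has no BRC.

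I expect no real obstacle. The only care needed is on two points. First, the margin $\gamma$ must be computed from the factorization $\Pi_i^{(1)} = f_i g_i$, with $f_1 = \sigma_1^{(1)}\mathbf{u}_1^{(1)}$, $g_1 = \mathbf{v}_1^{(1)}$, and similarly for player~$2$. Lemma~\ref{lem:monotone} allows this up to a simultaneous sign flip, and that flip leaves the products in Definition~\ref{def:margin} unchanged. Second, the strict inequality must be preserved when passing to Theorem~\ref{thm:margin}, which the strict right-hand bound guarantees.
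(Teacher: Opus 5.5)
Your proposal is correct and matches the paper's intended argument: the corollary is an immediate application of Theorem~\ref{thm:margin} with $(\Pi_1^{(1)}, \Pi_2^{(1)})$ as the reference rank-one game, and the strict right-hand bound gives $\|\tilde{\Pi}_i - \Pi_i^{(1)}\|_\infty < \gamma/2$. Your triangle-inequality check shows the first link of the chain always holds, and your worry about the choice of factorization is moot, since the products in the margin are just payoff differences of $\Pi_i^{(1)}$ and so do not depend on the factorization.
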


\begin{corollary}[Estimation version]
\label{cor:estimation_margin}
Let $y \in \R^{2n}$ encode a nondegenerate rank-one supermodular game with margin $\gamma$. Then every $y'$ with $\|y' - y\|_\infty < \gamma/2$ encodes a game with no better response cycle.
\end{corollary}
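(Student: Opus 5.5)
The plan is to reduce Corollary~\ref{cor:estimation_margin} directly to Theorem~\ref{thm:margin}, once the encoding is made explicit. I read $y \in \R^{2n}$, with $n = N_1 N_2$, as the concatenation of the entries of $\Pi_1$ and $\Pi_2$ in a fixed order. The first $n$ coordinates are $\operatorname{vec}(\Pi_1)$ and the last $n$ are $\operatorname{vec}(\Pi_2)$, matching the stacking used later for estimation. Under this identification the vector sup-norm is the larger of the two entrywise matrix norms:
\[
\|y'-y\|_\infty = \max_i \|\tilde\Pi_i - \Pi_i\|_\infty .
\]
Here $(\tilde\Pi_1,\tilde\Pi_2)$ is the game that $y'$ decodes to.

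\textbf{Step 1.} Fix $y'$ with $\|y'-y\|_\infty < \gamma/2$ and decode it as $(\tilde\Pi_1,\tilde\Pi_2)$. By the identity above, $\|\tilde\Pi_i - \Pi_i\|_\infty < \gamma/2$ for both $i=1,2$.

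\textbf{Step 2.} The game $(\Pi_1,\Pi_2)$ encoded by $y$ is nondegenerate, rank one, supermodular, and has margin $\gamma$. Theorem~\ref{thm:margin} therefore applies verbatim. It places no structural requirement on $\tilde\Pi$, so it yields that $(\tilde\Pi_1,\tilde\Pi_2)$ has no better response cycle.

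\textbf{Main obstacle.} The only real work is bookkeeping. One must confirm that the encoding is a coordinate-wise bijection, so that the vector sup-norm coincides with the matrix entrywise max norm, rather than, say, a normalized or unit-sphere encoding.

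If the encoding carries a positive scaling, such as the unit-sphere normalization of $\calS^*$, I would handle it as follows. Apply the same scaling to both games. Note that BRC existence is invariant under positive rescaling of payoffs, and the margin scales with the same factor. The argument above then goes through unchanged.
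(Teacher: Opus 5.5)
Your proposal is correct and matches the paper's route: the corollary is stated without a separate proof as an immediate consequence of Theorem~\ref{thm:margin}, via the identification of $y$ with $\operatorname{vec}(\Pi_1,\Pi_2)$, under which the vector sup-norm equals $\max_i \|\tilde\Pi_i - \Pi_i\|_\infty$. The paper's encoding is this plain coordinate-wise vectorization (see the reshaping step in Definition~\ref{def:estimator}), so your contingency paragraph on positive rescaling is not needed.
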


Corollary~\ref{cor:estimation_margin} replaces the informal continuity appeal in the multi-market analysis of Section~\ref{sec:rpca}. It gives an explicit sup-norm neighborhood of the latent payoff on which the BRC-free property is guaranteed, which is exactly the form of control delivered by the estimation bounds there. Extending the structural sign-quadrant argument itself to $r \geq 2$, beyond the perturbative reach of Theorem~\ref{thm:margin}, remains open. Example~\ref{ex:rank2} shows that any such extension must involve restrictions beyond rank and supermodularity alone.

%----------------------------------------------------------------------
\section{Manufacturing rank one: multi-market RPCA and the intermediate regime}
\label{sec:rpca}
%----------------------------------------------------------------------

The results of Sections~\ref{sec:rank1}--\ref{sec:rankr} are statements about the latent payoff structure, independent of the information structure. In this section, we embed them in a global game with an explicit RPCA information-processing stage and resolve a key technical obstacle: standard RPCA guarantees require matrix dimensions to be large, but game-theoretic action spaces are finite and fixed.

The resolution is to let the large dimension come not from the action space but from the number of \emph{markets}. Players compete simultaneously in $M$ independent markets governed by the same latent payoff structure, with full discounting ($\delta = 0$) eliminating dynamic strategic considerations. RPCA on the stacked cross-market observations provides the statistical leverage that a single small game cannot. Figure~\ref{fig:stacking} previews the construction.

\subsection{The multi-market global game}
\label{subsec:multimarket}

Fix the base game: two players with action sets $A_i = \{1,\ldots,N_i\}$, $i = 1,2$, and set $n := N_1 N_2$. There is a common latent payoff state $y \in \R^{2n}$ encoding both players' payoffs at all action profiles, exactly as in \citet{veiel2025limits}.

Players compete simultaneously in $M$ independent markets, indexed by $m = 1,\ldots,M$. In each market the game is identical, and full discounting implies no intertemporal strategic considerations: each market is a static game. The markets are linked only through the information structure.

\begin{definition}[Multi-market signal]
\label{def:signal}
In market $m$, player~$i$ privately observes:
\begin{equation}
\label{eq:signal}
x_i^m = y + \sigma \bigl( e_i^m + w_i^m \bigr),
\end{equation}
where:
\begin{enumerate}[label=(\roman*),nosep]
\item $\sigma > 0$ is the noise scale (the global games parameter);
\item $e_i^m \in \R^{2n}$ is a $k$-\emph{sparse} corruption: $|\operatorname{supp}(e_i^m)| \leq k$, with support drawn independently across markets;
\item $w_i^m \in \R^{2n}$ is a dense noise vector with i.i.d.\ entries of mean zero and variance $\sigma_w^2$.
\end{enumerate}
The noise terms $(e_i^m, w_i^m)_{m,i}$ are mutually independent and independent of $y$.
\end{definition}

The sparse corruption $e_i^m$ represents gross misperceptions about specific action-pair payoffs in market~$m$: misreading a contract clause, say, or misestimating a competitor's cost at a particular output level. The dense noise $w_i^m$ represents small measurement errors affecting all entries. The key assumption is that the same latent payoff $y$ governs all markets. The competitive environment has common structure, but observations are market-specific.

\begin{assumption}[Noise regularity]
\label{ass:noise}
Throughout Sections~\ref{sec:rpca} and~\ref{sec:matcomp}, the entries of $w_i^m$ are i.i.d., symmetric about zero, sub-Gaussian with variance proxy $\sigma_w^2$, and the standardized entry $w / \sigma_w$ admits a density bounded below by a constant $c_w > 0$ on a neighborhood of zero. We write $\varepsilon := k/(2n) < 1/2$ for the per-entry corruption probability and $\delta := 1/2 - \varepsilon > 0$ for the corruption margin. The values of the sparse corruptions $e_i^m$ on their support are arbitrary, possibly adversarial.
\end{assumption}

Symmetry guarantees that the population median of each noisy entry equals the true value; without it the median estimator below would be inconsistent. The density lower bound delivers the parametric concentration rate of the sample median. Adversarial corruption values are permitted, which is what makes the contamination bias in Proposition~\ref{prop:intermediate} unavoidable.

The timing within each period is as follows. Nature draws $y$ (once) and the noise $(e_i^m, w_i^m)_{m,i}$. Each player $i$ observes signals $(x_i^1, \ldots, x_i^M)$ across all $M$ markets. Player $i$ processes the signals to form the median signal $\tilde{y}_i$ (and, optionally for estimation, the projected estimate $\hat{y}_i$). Players simultaneously choose actions in each market, conditioning on $\tilde{y}_i$. Since all markets share the same $y$ and the game is static, the final step reduces to $M$ independent copies of the same Bayesian game with beliefs centered on $\tilde{y}_i$.

\subsection{The stacking construction}
\label{subsec:stacking}

Player $i$ stacks her observations column-wise into a matrix:
\begin{equation}
\label{eq:stacking}
X_i := \bigl[ x_i^1 \;\big|\; \cdots \;\big|\; x_i^M \bigr] \;\in\; \R^{2n \times M}.
\end{equation}
Decomposing by~\eqref{eq:signal}:
\begin{equation}
\label{eq:decomp}
X_i = \underbrace{y \cdot \mathbf{1}_M^\top}_{L_0} \;+\; \sigma \underbrace{\bigl[ e_i^1 \;\big|\; \cdots \;\big|\; e_i^M \bigr]}_{S_0} \;+\; \sigma \underbrace{\bigl[ w_i^1 \;\big|\; \cdots \;\big|\; w_i^M \bigr]}_{W_0}.
\end{equation}

\begin{figure}[t]
\centering
\includegraphics[width=\textwidth]{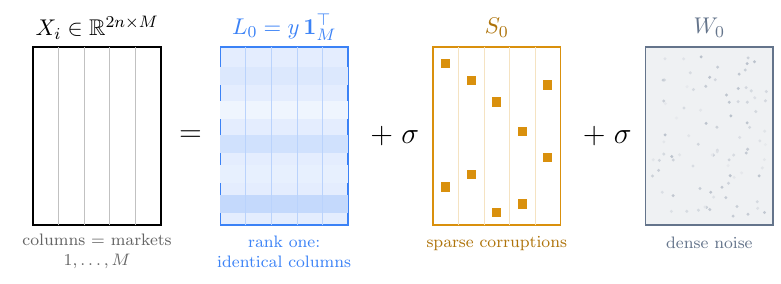}
\caption{\textbf{The stacking construction.} A schematic of equation~\eqref{eq:decomp}. Player~$i$ collects her $M$ market signals column by column into $X_i \in \R^{2n \times M}$. Because every market shares the same latent payoff $y$, the signal component $L_0 = y\,\mathbf{1}_M^\top$ has identical columns and rank exactly one, regardless of the rank of the payoff matrices encoded in $y$ (Lemma~\ref{lem:stacked_structure}). Replication across markets manufactures the low-rank structure out of strategic repetition rather than payoff simplicity, and the resulting low-rank-plus-sparse object is exactly what Robust PCA recovers. The large dimension needed by RPCA comes from $M$, which can grow without changing the action space.}
\label{fig:stacking}
\end{figure}

The key structural observation is that the stacked matrix admits a natural low-rank plus sparse decomposition:

\begin{lemma}[Rank and sparsity of the stacked matrix]
\label{lem:stacked_structure}
In the decomposition~\eqref{eq:decomp}:
\begin{enumerate}[label=(\alph*),nosep]
\item The low-rank component $L_0 = y \cdot \mathbf{1}_M^\top$ has rank exactly $1$, regardless of the rank of $y$ when reshaped as payoff matrices.
\item The sparse component $\sigma S_0$ has at most $k$ nonzero entries per column, hence at most $kM$ nonzero entries total.
\item The dense component $\sigma W_0$ has $\|\sigma W_0\|_F^2 = \sigma^2 \sum_{m=1}^M \|w_i^m\|^2$.
\end{enumerate}
\end{lemma}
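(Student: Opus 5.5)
The lemma is nearly definitional, so I will verify its three parts directly from the decomposition~\eqref{eq:decomp}. No earlier result beyond the signal model of Definition~\ref{def:signal} is needed.

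For part~(a), every column of $L_0 = y\,\mathbf{1}_M^\top$ equals $y$. Its column space is therefore $\operatorname{span}\{y\}$, and the rank is at most one. For the rank to be exactly one I need $y \neq 0$ and $M \geq 1$. Then $L_0 \mathbf{e}_1 = y \neq 0$, where $\mathbf{e}_1$ is the first standard basis vector of $\R^M$, so the rank is at least one. The point to stress is that this computation never reshapes $y$ into the $N_1 \times N_2$ payoff matrices $\Pi_1, \Pi_2$. The rank of those matrices can be anything, including the rank-two configuration of Example~\ref{ex:rank2}, without affecting the rank of $L_0$.

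The only real obstacle is the degenerate case $y = 0$, where $L_0 = 0$ has rank zero. I would handle it by stating the standing convention that the latent state is nonzero, which is harmless because $y=0$ is a completely indifferent game. Alternatively, I would read ``rank exactly $1$'' as ``rank at most $1$, with equality whenever $y \neq 0$.''

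For part~(b), the $m$-th column of $\sigma S_0$ is $\sigma e_i^m$. Since $\sigma > 0$, its support coincides with $\operatorname{supp}(e_i^m)$, which has at most $k$ elements by Definition~\ref{def:signal}(ii). Summing over the $M$ columns gives at most $kM$ nonzero entries in total.

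For part~(c), I would use the fact that the squared Frobenius norm is additive over columns. The $m$-th column of $\sigma W_0$ is $\sigma w_i^m$, so $\|\sigma W_0\|_F^2 = \sum_m \|\sigma w_i^m\|_2^2 = \sigma^2 \sum_m \|w_i^m\|^2$, which is the claimed identity.
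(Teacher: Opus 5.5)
Your proof is correct and takes the same route as the paper, which simply notes that $L_0$ is the outer product of two nonzero vectors and that (b) and (c) follow directly from the definitions. The degenerate case $y = 0$ that you flag does not arise in the paper's setting, because the latent state is drawn from the unit sphere; your convention that $y \neq 0$ is exactly the assumption the paper makes implicitly.
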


\begin{proof}
For (a): $L_0 = y \cdot \mathbf{1}_M^\top$ is the outer product of two nonzero vectors, hence has rank~1. Parts (b) and (c) follow directly from the definitions.
\end{proof}

This is the central insight of the multi-market construction. No matter how complex the underlying payoff structure, the stacked observation matrix always has a rank-one low-rank component. The large dimension needed for RPCA comes from $M$, the number of markets, which can be made arbitrarily large without changing the game-theoretic action space.

\subsection{RPCA recovery}
\label{subsec:recovery}

We consider two regimes, depending on whether dense noise is present.

\subsubsection{Regime 1: purely sparse corruptions
  (\texorpdfstring{$\sigma_w = 0$}{sigma\_w = 0})}

When $w_i^m = 0$ for all $m$, the stacked matrix admits an exact low-rank plus sparse decomposition:
\begin{equation}
\label{eq:puresparse}
X_i = L_0 + \sigma S_0.
\end{equation}
Player~$i$ solves the Principal Component Pursuit (PCP) convex program \citep{candes2011robust}:
\begin{equation}
\label{eq:pcp}
(\hat{L}, \hat{S}) = \arg\min_{L, S} \; \|L\|_* + \lambda \|S\|_1 \quad \text{subject to} \quad L + S = X_i,
\end{equation}
with $\lambda = 1/\sqrt{\max(2n, M)}$. The nuclear norm is the convex envelope of the rank and the $\ell_1$ norm that of the sparsity count, so PCP is the natural convex relaxation of the decomposition problem; it is solved at scale by augmented Lagrangian methods that alternate singular value thresholding on $L$ with entrywise soft thresholding on $S$ \citep{candes2011robust}. Figure~\ref{fig:pcp} illustrates the decomposition the program performs on the stacked matrix.

\begin{figure}[t]
\centering
\includegraphics[width=0.68\textwidth]{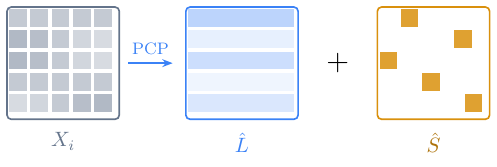}
\caption{\textbf{Principal Component Pursuit on the stacked observation matrix.} The purely sparse regime~\eqref{eq:puresparse}. The convex program~\eqref{eq:pcp} splits $X_i$ into a low-rank estimate $\hat{L}$ and a sparse estimate $\hat{S}$. For the multi-market matrix the low-rank component has identical columns (horizontal bands), reflecting $L_0 = y\,\mathbf{1}_M^\top$, so any single column of $\hat{L}$ returns the latent payoff $y$. Under the incoherence conditions of Lemma~\ref{lem:incoherence} the recovery is exact with high probability (Proposition~\ref{prop:exact_recovery}).}
\label{fig:pcp}
\end{figure}

To apply the PCP recovery guarantee, we must verify the incoherence conditions on $L_0$.

\begin{lemma}[Incoherence of the stacked low-rank component]
\label{lem:incoherence}
The rank-one matrix $L_0 = y \cdot \mathbf{1}_M^\top$ has the (compact) singular value decomposition $L_0 = \|y\| \sqrt{M} \cdot \hat{u} \cdot \hat{v}^\top$, where $\hat{u} = y / \|y\| \in \R^{2n}$ and $\hat{v} = \mathbf{1}_M / \sqrt{M} \in \R^M$. The incoherence parameters are:
\begin{align}
\mu_{\mathrm{left}}(L_0) &:= \frac{2n}{1} \cdot \max_{l \in \{1,\ldots,2n\}} |\hat{u}_l|^2 = 2n \cdot \frac{\max_l |y_l|^2}{\|y\|^2}, \label{eq:mu_left} \\
\mu_{\mathrm{right}}(L_0) &:= \frac{M}{1} \cdot \max_{m \in \{1,\ldots,M\}} |\hat{v}_m|^2 = M \cdot \frac{1}{M} = 1. \label{eq:mu_right}
\end{align}
In particular, $\mu_{\mathrm{right}} = 1$ (perfectly incoherent) independently of $M$.
\end{lemma}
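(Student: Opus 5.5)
The plan is to compute the compact SVD of $L_0$ directly from its outer-product form, then substitute the singular vectors into the standard incoherence definitions of \citet{candes2011robust} specialized to rank $r=1$. We assume $y \neq 0$, since otherwise $L_0 = 0$ and the statement is vacuous; Lemma~\ref{lem:stacked_structure}(a) already uses this.

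For the SVD, we rewrite the product by normalizing each factor:
\[
L_0 = y\,\mathbf{1}_M^\top = \bigl(\|y\|\,\hat u\bigr)\bigl(\sqrt{M}\,\hat v\bigr)^\top = \|y\|\sqrt{M}\;\hat u\,\hat v^\top .
\]
Here $\hat u = y/\|y\|$ and $\hat v = \mathbf{1}_M/\sqrt{M}$. Both are unit vectors, because $\|\mathbf{1}_M\| = \sqrt{M}$. The scalar $\|y\|\sqrt{M}$ is strictly positive. A rank-one matrix written as $\sigma\,\hat u\,\hat v^\top$ with unit vectors $\hat u,\hat v$ and $\sigma>0$ is exactly a compact SVD. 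As a check, $L_0 L_0^\top = M\,y y^\top$ has the single nonzero eigenvalue $M\|y\|^2$ with eigenvector $\hat u$, so the singular value is $\sqrt{M}\,\|y\|$.

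Next we substitute into the incoherence definitions. With rank one, $\mu_{\mathrm{left}} = (2n/r)\max_l \|P_U e_l\|^2$ reduces to $2n \max_l |\hat u_l|^2 = 2n \max_l y_l^2/\|y\|^2$. Likewise $\mu_{\mathrm{right}} = M \max_m |\hat v_m|^2 = M\cdot(1/M) = 1$, which holds for every $M$. Although the statement only claims the two one-sided parameters, it is worth noting the joint condition $\|\hat u\hat v^\top\|_\infty \le \sqrt{\mu r/(2nM)}$ used by PCP. Since $|\hat u_l \hat v_m| = |y_l|/(\|y\|\sqrt{M})$, this condition holds with the same $\mu_{\mathrm{left}}$, so no separate parameter is needed.

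None of these steps is a real obstacle; the argument is a direct computation. The only points needing care are uniqueness and normalization conventions. The SVD is unique only up to flipping the signs of both vectors simultaneously, and the incoherence quantities are invariant under that flip. We also need $y\neq0$ so that $\hat u$ is defined. Finally, one can remark that $\mu_{\mathrm{left}} \in [1,2n]$ by Cauchy--Schwarz, which is the reason the left parameter depends on $y$ while the right one does not.
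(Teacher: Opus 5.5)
Your proposal is correct and follows the paper's route. The paper's proof is the same direct computation from the normalized outer-product SVD, reading off $\max_m |\hat v_m|^2 = 1/M$; your verification of the SVD and the $y \neq 0$ caveat (automatic here, since the prior lives on the unit sphere) simply spell this out. Your side remark that the joint condition holds as $\|\hat u \hat v^\top\|_\infty = \sqrt{\mu_{\mathrm{left}}/(2nM)}$ also matches the observation the paper makes immediately after the lemma.
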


\begin{proof}
Direct computation from the SVD. The right singular vector $\hat{v} = \mathbf{1}_M / \sqrt{M}$ has all entries equal to $1/\sqrt{M}$, giving $\max_m |\hat{v}_m|^2 = 1/M$, and hence $\mu_{\mathrm{right}} = 1$.
\end{proof}

The joint incoherence condition of \citet{candes2011robust} is also satisfied at the same level: $\|\hat{u}\hat{v}^\top\|_\infty = \max_l |y_l| / (\|y\|\sqrt{M})$, which is bounded by $\sqrt{\mu_{\mathrm{left}}/(2nM)}$ by definition of $\mu_{\mathrm{left}}$. The left incoherence $\mu_{\mathrm{left}}$ depends on the payoff structure. It is bounded above by $2n$ (the trivial bound), with equality when all payoff variation is concentrated in a single entry. For ``spread-out'' payoffs, where no single action-pair payoff dominates, $\mu_{\mathrm{left}} = O(1)$.

\begin{proposition}[Exact recovery in the multi-market game]
\label{prop:exact_recovery}
Suppose $\sigma_w = 0$ (purely sparse corruptions), each entry of each $e_i^m$ is corrupted independently with probability $\varepsilon = k/(2n)$ (Bernoulli corruption), and the incoherence $\mu_{\mathrm{left}}$ is bounded by a constant $\mu_0$. Write $n_{(1)} := \max(2n, M)$ and $n_{(2)} := \min(2n, M)$. There exist numerical constants $\rho_r, \rho_s, c' > 0$ such that if
\begin{equation}
\label{eq:pcp_conditions}
\mu_0 \cdot \log^2\bigl(n_{(1)}\bigr) \;\leq\; \rho_r \cdot n_{(2)} \qquad \text{and} \qquad \varepsilon \;\leq\; \rho_s,
\end{equation}
then PCP~\eqref{eq:pcp} with $\lambda = 1/\sqrt{n_{(1)}}$ recovers $L_0$ and $\sigma S_0$ exactly with probability at least $1 - c' \, n_{(1)}^{-10}$. In particular, player~$i$ recovers $y$ exactly from any column of $\hat{L}$.
\end{proposition}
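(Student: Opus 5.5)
The plan is to reduce Proposition~\ref{prop:exact_recovery} to the main theorem of \citet{candes2011robust} (their Theorem~1.1 in the rectangular form, with the random-sign/Bernoulli support model). That theorem states the following. Let $L_0 \in \R^{n_1 \times n_2}$ have rank $r$ with incoherence parameter $\mu$. Let the support of $S_0$ be uniformly random of cardinality $m_0$. Suppose $r \le \rho_r n_{(2)} \mu^{-1} (\log n_{(1)})^{-2}$ and $m_0 \le \rho_s n_1 n_2$. Then PCP with $\lambda = 1/\sqrt{n_{(1)}}$ recovers $(L_0,S_0)$ exactly with probability at least $1 - c n_{(1)}^{-10}$. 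We apply this with $n_1 = 2n$, $n_2 = M$, and $r=1$ (Lemma~\ref{lem:stacked_structure}(a)), so the rank condition reads exactly $\mu \log^2 n_{(1)} \le \rho_r n_{(2)}$, which is the first condition in~\eqref{eq:pcp_conditions}.

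\textbf{Step 1: incoherence.} We verify the three incoherence conditions with a single parameter $\mu := \max(\mu_0, 1)$.
\begin{itemize}
\item The left-singular-space condition $\max_l \|P_U e_l\|^2 \le \mu r/n_1$ is exactly $\mu_{\mathrm{left}} \le \mu_0$, by~\eqref{eq:mu_left}.
\item The right condition holds with constant $1$, by~\eqref{eq:mu_right}.
\item The joint condition $\|\hat u \hat v^\top\|_\infty \le \sqrt{\mu r/(n_1 n_2)}$ follows from the remark after Lemma~\ref{lem:incoherence}.
\end{itemize}
Since $\mu_{\mathrm{left}} \ge 1$ always (because $\max_l |y_l|^2 \ge \|y\|^2/(2n)$), we may take $\mu = \mu_0$ without loss.

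\textbf{Step 2: support model and scaling.} Under Bernoulli corruption, the support $\Omega$ of $S_0$ is an i.i.d.\ Bernoulli$(\varepsilon)$ pattern on the $2n \times M$ grid. Candès et al.\ note, via their standard Bernoulli-to-uniform equivalence argument, that their result holds under the Bernoulli model with $\rho_s$ playing the role of the density. So $\varepsilon \le \rho_s$ suffices. The values of the corruptions are arbitrary, and their deterministic-sign extension (Theorem~2.3 / the derandomization of signs) allows this at the cost of halving $\rho_s$, which we absorb into the constant. Scaling $S_0$ by $\sigma>0$ changes neither its support nor the hypotheses, and PCP recovers whatever sparse component is present. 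Hence $(\hat L, \hat S) = (L_0, \sigma S_0)$.

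\textbf{Step 3: reading off $y$.} Since $\hat L = L_0 = y \mathbf{1}_M^\top$, every column of $\hat L$ equals $y$.

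The main obstacle is Step~2: matching our corruption model to the exact hypotheses of the cited theorem. Two points need care. First, the cited result assumes either random signs or a uniformly random support; with adversarial values we must invoke the deterministic-sign version, which holds under the same Bernoulli support with a smaller $\rho_s$. Second, the bound $|\operatorname{supp}(e_i^m)| \le k$ in Definition~\ref{def:signal} is replaced here by the Bernoulli hypothesis stated in the proposition, and I would note explicitly that the proposition uses the latter. Independence across $m$ and across entries makes the stacked support i.i.d.\ Bernoulli, which is precisely what is needed.
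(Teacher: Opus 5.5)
Your proposal is correct and follows the paper's proof. Both reduce the claim to Theorem~1.1 of \citet{candes2011robust} with $\operatorname{rank}(L_0)=1$, check incoherence through Lemma~\ref{lem:incoherence}, and pass between the uniform and Bernoulli support models at the cost of adjusting $\rho_s$.

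Your extra details fill in what the paper leaves implicit: that $\mu_{\mathrm{left}} \geq 1$, so the single parameter $\mu=\mu_0$ serves for all three incoherence conditions, and the sign derandomization for arbitrary corruption values. One caveat, which the paper shares: the fixed-sign version covers only sign patterns chosen independently of the support, not signs an adversary adapts to the realized support.
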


\begin{proof}
We apply Theorem~1.1 of \citet{candes2011robust} to the $2n \times M$ matrix $X_i = L_0 + \sigma S_0$, in its Bernoulli-support form (their Section~2.2 shows the uniform and Bernoulli support models are equivalent for this purpose; the corruption values may be arbitrary). The rank condition $\operatorname{rank}(L_0) \leq \rho_r\, n_{(2)}\, \mu_0^{-1} \log^{-2}(n_{(1)})$ reduces, with $\operatorname{rank}(L_0) = 1$, to the first inequality in~\eqref{eq:pcp_conditions}. The sparsity condition $|\operatorname{supp}(\sigma S_0)| \leq \rho_s \cdot 2nM$ holds in expectation and with high probability under Bernoulli corruption with $\varepsilon \leq \rho_s$ (up to an adjustment of constants). Both incoherence conditions hold by Lemma~\ref{lem:incoherence}. Theorem~1.1 then yields exact recovery with the stated probability.
\end{proof}

\begin{remark}[The admissible range of $M$]
\label{rem:M_range}
Condition~\eqref{eq:pcp_conditions} should be read in two regimes. For $M \leq 2n$ it requires $M \gtrsim \mu_0 \log^2(2n)$: enough markets for the convex program to see the replication. For $M \geq 2n$ it instead requires $\mu_0 \log^2(M) \lesssim n$, that is, $M \leq \exp\bigl(c\sqrt{n/\mu_0}\bigr)$: the PCP guarantee does \emph{not} survive $M \to \infty$ with the action space held fixed, because the matrix becomes extremely thin and the logarithmic factor in the rank condition is driven by the long dimension. The admissible window is wide (super-polynomial in $n$) but bounded. The median-based estimator of the next subsection has no such ceiling, which is one reason we adopt it as the primary information-processing stage.
\end{remark}

However, exact recovery creates a problem for equilibrium selection. It returns the game to complete information, where multiple equilibria can coexist and the global game selection mechanism is inoperative. The intermediate regime requires residual noise that is small but nonzero.

\subsubsection{Regime 2: sparse corruptions plus dense noise
  (\texorpdfstring{$\sigma_w > 0$}{sigma\_w > 0})}

With dense noise, exact recovery is impossible, but the multi-market structure enables estimation with error that vanishes at a controlled rate. We use a two-step procedure that leverages the cross-market replication.

\begin{definition}[RPCA-Averaging estimator]
\label{def:estimator}
The estimator maps player~$i$'s observations $(x_i^1, \ldots, x_i^M)$ into $\hat{y}_i$ in two steps:
\begin{enumerate}[nosep]
\item \textbf{Sparse removal via entry-wise median.} For each entry $l \in \{1,\ldots,2n\}$, compute:
\begin{equation}
\label{eq:median}
\tilde{y}_l := \operatorname{median}_{m \in \{1,\ldots,M\}} \bigl( (x_i^m)_l \bigr).
\end{equation}
\item \textbf{Low-rank projection.} Reshape $\tilde{y} \in \R^{2n}$ into the pair of $N_1 \times N_2$ payoff matrices $(\tilde{\Pi}_1, \tilde{\Pi}_2)$. For each player~$j$, compute the truncated rank-$r$ SVD:
\begin{equation}
\label{eq:svd_proj}
\hat{\Pi}_j := \sum_{\ell=1}^r \hat{\sigma}_\ell \hat{\mathbf{u}}_\ell \hat{\mathbf{v}}_\ell^\top,
\end{equation}
where $\hat{\sigma}_1 \geq \cdots \geq \hat{\sigma}_r$ are the top $r$ singular values of $\tilde{\Pi}_j$. Define $\hat{y}_i := \operatorname{vec}(\hat{\Pi}_1, \hat{\Pi}_2) \in \R^{2n}$.
\end{enumerate}
\end{definition}

\begin{proposition}[Intermediate regime via multi-market averaging]
\label{prop:intermediate}
Suppose each entry of each $e_i^m$ is corrupted independently with probability $\varepsilon = k/(2n) < 1/2$, and let Assumption~\ref{ass:noise} hold with corruption margin $\delta = 1/2 - \varepsilon$. Let $\sigma_w > 0$ be fixed. Then for the RPCA-Averaging estimator of Definition~\ref{def:estimator}:

\begin{enumerate}[label=(\alph*)]
\item \textbf{Sparse removal:} With probability at least $1 - 2n \cdot e^{-2\delta^2 M}$, for every entry $l$ the fraction of corrupted markets is below $1/2$, so the entry-wise median falls weakly between two uncorrupted observations.

\item \textbf{Dense residual:} Conditional on (a), the median estimator satisfies
\begin{equation}
\label{eq:median_bound}
\|\tilde{y} - y\|_\infty \;\leq\; C \cdot \sigma \sigma_w \left( \varepsilon + \sqrt{\frac{\log(2n)}{M}} \right) \qquad \text{with high probability},
\end{equation}
where $C$ depends only on the density lower bound $c_w$ of Assumption~\ref{ass:noise}; hence $\|\tilde{y} - y\|_2 \leq C \sigma \sigma_w \sqrt{2n} \, (\varepsilon + \sqrt{\log(2n)/M})$. The additive term $C \sigma \sigma_w \varepsilon$ is a contamination bias: with adversarial corruption values it cannot be removed by any estimator, and it vanishes if the corruption values are conditionally symmetric about the truth.

\item \textbf{Low-rank projection:} If $y$ corresponds to rank-$r$ payoff matrices, the projection step~\eqref{eq:svd_proj} satisfies $\|\hat{y}_i - y\|_2 \leq 2 \|\tilde{y} - y\|_2$.

\item \textbf{Intermediate regime:} The residual $Z_i := \hat{y}_i - y$ satisfies:
\begin{enumerate}[label=(\roman*),nosep]
\item $\|Z_i\|_2 \leq C' \cdot \sigma \sigma_w \sqrt{2n}\,\bigl( \varepsilon + \sqrt{\log(2n)/M} \bigr)$, which vanishes as $\sigma \to 0$ with $(M, \sigma_w)$ fixed;
\item $\|Z_i\|_2 > 0$ with probability~$1$ for any $\sigma > 0$, $\sigma_w > 0$, $M < \infty$.
\end{enumerate}
\end{enumerate}
\end{proposition}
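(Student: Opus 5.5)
I would prove the four parts in order, treating each entry $l$ separately and combining them with a union bound over the $2n$ entries.

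\textbf{Part (a).} For fixed $l$, the indicators that $(e_i^m)_l \neq 0$ are i.i.d.\ Bernoulli$(\varepsilon)$ across $m$. Hoeffding's inequality bounds the probability that the corrupted fraction reaches $1/2 = \varepsilon + \delta$ by $e^{-2\delta^2 M}$, and a union bound over the $2n$ entries gives the stated probability. On this event strictly more than half the observations in each entry are clean. The order-statistic argument is then routine: the median cannot exceed the largest clean observation, nor fall below the smallest, since fewer than half the points lie outside that range on either side. More sharply, the median is sandwiched between two order statistics of the clean sample, namely the $(M/2 - K_l)$-th and $(M/2)$-th smallest, where $K_l$ is the number of corrupted markets for entry $l$.

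\textbf{Part (b).} This is the main obstacle. Conditional on the corruption pattern, the clean values $y_l + \sigma (w_i^m)_l$ are i.i.d.\ with median $y_l$, by the symmetry in Assumption~\ref{ass:noise}. By the sandwich from (a), it suffices to control clean-sample quantiles at levels $q \in [1/2 - K_l/M,\, 1/2 + K_l/M]$.

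Write $F$ for the CDF of $w/\sigma_w$. The density lower bound $c_w$ near zero gives $|F^{-1}(q) - 0| \le |q - 1/2|/c_w$ for $q$ close to $1/2$. The DKW inequality, or Hoeffding applied to indicator counts, then shows that empirical quantiles of the $M - K_l$ clean draws lie within $t/c_w$ of the population quantiles with probability $1 - 2e^{-2(M-K_l)t^2}$. Since $K_l/M \le \varepsilon + \sqrt{\log(2n)/M}$ w.h.p., the deviation is at most $\sigma\sigma_w c_w^{-1}\bigl(\varepsilon + O(\sqrt{\log(2n)/M})\bigr)$. Choosing $t \asymp \sqrt{\log(2n)/M}$ and taking a union bound over entries yields~\eqref{eq:median_bound}.

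The delicate point is that the quantile levels must stay inside the neighborhood where the density bound holds. When $\varepsilon$ is not small, I would handle this by letting $C$ also absorb $\delta$, or by using a sub-Gaussian tail bound on quantiles outside that neighborhood. The $\ell_2$ bound then follows from $\|\cdot\|_2 \le \sqrt{2n}\|\cdot\|_\infty$. For the unavoidability remark, I would appeal to the standard Huber-contamination lower bound: an adversary can shift an $\varepsilon$-fraction of the mass so that two locations differing by order $\varepsilon\sigma\sigma_w$ become indistinguishable.

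\textbf{Part (c).} Apply Eckart--Young blockwise. For each player $j$, the truncated SVD $\hat\Pi_j$ is the best rank-$r$ approximation in Frobenius norm, and $\Pi_j$ itself has rank at most $r$. Hence
\[
\|\hat\Pi_j - \Pi_j\|_F \le \|\hat\Pi_j - \tilde\Pi_j\|_F + \|\tilde\Pi_j - \Pi_j\|_F \le 2\|\tilde\Pi_j - \Pi_j\|_F.
\]
Squaring and summing over the two blocks gives the claim.

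\textbf{Part (d).} Item (i) combines (b) and (c), with $C' = 2C$. For item (ii), I would argue that $\hat y_i = y$ forces a nontrivial algebraic relation on the data. Conditional on the corruption supports and values, on the event of (a) each $\tilde y_l$ equals one particular observation, and the $w$ entries have a density. The event $\tilde y = y$ would therefore require a clean noise draw $(w_i^m)_l = 0$ for every $l$, which has probability zero. The event $\hat y_i = y$ without $\tilde y = y$ requires $\tilde y - y$ to land in the set where the rank-$r$ truncation returns exactly the true matrices. That set is a proper algebraic subvariety, hence Lebesgue-null; this holds at least for $r < \min(N_1,N_2)$, and for full $r$ it reduces to the $\tilde y = y$ case. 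Since the noise vector has a density, the event has probability zero, so $\|Z_i\|_2 > 0$ almost surely.
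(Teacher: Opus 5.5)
Your proposal is correct and follows the paper's proof step for step: Hoeffding plus a union bound for (a); the clean-sample order-statistic sandwich, with quantile concentration converted to locations through the density lower bound, for (b); Eckart--Young plus the triangle inequality for the factor two in (c); and absolute continuity of $\tilde{y}$ together with the lower-dimensional preimage of $y$ under truncation for (d). Your caveat that $C$ must also depend on $\delta$ (and on the radius of the density neighborhood) is well founded, since the paper's claim that $C$ depends only on $c_w$ tacitly keeps the relevant quantile levels inside that neighborhood; in (d)(ii), the median need not be a clean observation (and for even $M$ is an average of two), but this is harmless because, conditional on $e$, every observation $y_l + \sigma\bigl((e_i^m)_l + (w_i^m)_l\bigr)$ has a density.
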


\begin{proof}
\textbf{Part (a):} For a fixed entry $l$, the number of corrupted markets is $\mathrm{Binomial}(M, \varepsilon)$. By Hoeffding's inequality, the probability that the corrupted fraction reaches $1/2 = \varepsilon + \delta$ is at most $e^{-2\delta^2 M}$. A union bound over $2n$ entries gives the claim. When fewer than half the values are corrupted, the sample median is sandwiched between order statistics of the uncorrupted subsample.

\textbf{Part (b):} Fix entry $l$ and condition on the event of (a), refined so that the corrupted fraction is at most $\varepsilon_M := \varepsilon + \sqrt{\log(4n M)/(2M)} < 1/2$ for every entry (again Hoeffding plus a union bound). Let $F$ denote the distribution of an uncorrupted observation of entry $l$, which is symmetric about $y_l$ with scale $\sigma \sigma_w$. A standard contamination argument shows the median of the contaminated sample lies between the empirical $q^-$ and $q^+$ quantiles of the clean subsample, with $q^\pm = 1/2 \pm \varepsilon_M/(1 - \varepsilon_M)$. Empirical quantiles of the clean subsample concentrate around the corresponding population quantiles at rate $\sqrt{\log(2n)/M}$ (sub-Gaussian tails plus a union bound over entries), and the density lower bound of Assumption~\ref{ass:noise} converts quantile levels to locations: $|F^{-1}(1/2 \pm u) - y_l| \leq u \, \sigma \sigma_w / c_w$. Combining the two sources of deviation gives~\eqref{eq:median_bound}. If corruption values are conditionally symmetric about $y_l$, the contaminated population median is $y_l$ itself, and the $\varepsilon$ term disappears.

\textbf{Part (c):} The set $\mathcal{R} := \{z : \text{each player's reshaped matrix has rank} \leq r\}$ is closed but not convex, so metric projection onto it (which the truncated SVD implements, by Eckart--Young) is not non-expansive in general. The factor two follows from the triangle inequality: writing $P$ for the projection, $\|P(\tilde{y}) - y\| \leq \|P(\tilde{y}) - \tilde{y}\| + \|\tilde{y} - y\| \leq 2\|\tilde{y} - y\|$, where the second step uses that $y \in \mathcal{R}$ is a feasible point, so the projection distance $\|P(\tilde{y}) - \tilde{y}\|$ is at most $\|y - \tilde{y}\|$.

\textbf{Part (d):} The upper bound follows from (b) and (c). Strict positivity holds because $\tilde{y}$ has an absolutely continuous distribution (the sample median of i.i.d.\ draws with a density itself has a density), and the preimage of the single point $y$ under the projection is contained in a lower-dimensional set, hence has probability zero.
\end{proof}

\begin{remark}[Large $M$ does not remove the bias]
\label{rem:bias}
As $M \to \infty$ with $\sigma$ fixed, the bound in (d)(i) tends to $C' \sigma \sigma_w \sqrt{2n}\, \varepsilon > 0$: replication across markets averages away the dense noise but cannot identify which observations are corrupted, and with adversarial corruption values the residual bias of order $\varepsilon \sigma \sigma_w$ is information-theoretically irreducible (it is the standard price of Huber contamination). The intermediate regime is unaffected, since the entire error is proportional to $\sigma$ and vanishes in the global-game limit $\sigma \to 0$.
\end{remark}

The two-step estimator separates the roles of robust estimation (median for sparse removal) and structural projection (SVD for low rank). One can alternatively apply Stable PCP \citep{zhou2010stable} to the full stacked matrix directly, but the error bounds for Stable PCP scale as $O(\sqrt{n_1 n_2} \cdot \delta)$, which is too loose when averaged per column: the improvement from $M$ markets is absorbed by the dimensional factor. The two-step procedure avoids this by exploiting the replication structure directly. The conceptual contribution of the RPCA framework nonetheless remains central: it identifies the decomposition $X_i = L_0 + \sigma S_0 + \sigma W_0$ as the natural signal structure, and the median step implements the sparse-removal component of RPCA using cross-market replication as a substitute for convex-optimization-based PCP.

\subsection{Connecting to limit uniqueness}
\label{subsec:connecting}

We now connect the multi-market RPCA estimator to the equilibrium selection analysis.

\begin{definition}[Multi-market global game]
For each $\sigma > 0$ and $M \in \N$, the \emph{multi-market global game} $\Gamma(\sigma, M)$ is the Bayesian game in which:
\begin{enumerate}[label=(\roman*),nosep]
\item The latent state $y \in S \subset \R^{2n}$ is drawn from a prior $\nu_0$ on the unit sphere;
\item Each player $i$ observes $(x_i^1, \ldots, x_i^M)$ as in Definition~\ref{def:signal};
\item Player $i$ forms the effective signal $\tilde{y}_i$ via the median step (Step~1) of the RPCA-Averaging estimator (Definition~\ref{def:estimator});
\item In each market $m$, player $i$ chooses an action $a_i^m \in A_i$ to maximize expected payoff given beliefs about the opponent's play, where payoffs in market $m$ are determined by $y$.
\end{enumerate}
\end{definition}

We deliberately use the unprojected median signal $\tilde{y}_i$, rather than the full estimate $\hat{y}_i$, in the strategic analysis. The low-rank projection of Step~2 improves point estimation, but it collapses the signal onto the rank-$r$ variety, a measure-zero subset of $\R^{2n}$, and thereby destroys the full-dimensional local support of the effective noise on which the contagion argument relies. The projection remains available to either player as a post-processing step for estimation. What matters for the game is the signal on which strategies condition.

Since the game is static (full discounting) and markets share the same $y$, the strategy in each market depends only on the common signal $\tilde{y}_i$. Hence the multi-market game reduces to $M$ independent copies of a single-market game with the effective signal $\tilde{y}_i$.

\begin{definition}[Effective noise]
The \emph{effective noise} in the multi-market game is the random variable
\[
Z_i := \tilde{y}_i - y.
\]
By Proposition~\ref{prop:intermediate}(a)--(b), $\|Z_i\|_\infty = O\bigl(\sigma \sigma_w (\varepsilon + \sqrt{\log n / M})\bigr)$ with high probability and $\|Z_i\|_2 > 0$ a.s. Its finer distributional properties are collected in Proposition~\ref{prop:signal_properties} below.
\end{definition}

The selection analysis now splits cleanly into two parts: a proposition collecting everything we can prove about the effective signals, and a conjecture recording the selection statement itself. The partition follows Remark~\ref{rem:conditional}.

\begin{proposition}[Vanishing, cycle-free, full-support effective signals]
\label{prop:signal_properties}
Let $y \in \calS^*$ be a quasi-concave supermodular payoff such that each player's payoff matrix has rank one and is nondegenerate in the sense of Definition~\ref{def:margin}, with margin $\gamma > 0$. Fix $M$ and $\sigma_w > 0$. Then in the multi-market global game $\Gamma(\sigma, M)$:
\begin{enumerate}[label=(\alph*),nosep]
\item \textbf{Vanishing noise:} $\|Z_i\|_\infty \to 0$ in probability as $\sigma \downarrow 0$;
\item \textbf{Cycle-free signals:} on the event $\|Z_i\|_\infty < \gamma/2$, whose probability tends to one as $\sigma \downarrow 0$, the realized signal $\tilde{y}_i$ encodes a game with no better response cycle;
\item \textbf{Full local support:} conditional on the sparse-removal event of Proposition~\ref{prop:intermediate}(a), the entries of $Z_i$ are independent across $l$ and each admits a density that is strictly positive in a neighborhood of zero, so $Z_i$ has full local support in $\R^{2n}$. If, in addition, the corruption values are conditionally symmetric about the truth (in particular, if $\varepsilon = 0$), each entry is symmetric about zero.
\end{enumerate}
\end{proposition}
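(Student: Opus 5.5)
The three parts follow from results already in place, so the plan is to take them in order and reuse Proposition~\ref{prop:intermediate} and Corollary~\ref{cor:estimation_margin} wherever possible.

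For (a), I would write $\Pr(\|Z_i\|_\infty > \eta)$ as the sum of two probabilities. The first is that of the complement of the sparse-removal event of Proposition~\ref{prop:intermediate}(a). This event depends only on the corruption supports, not on $\sigma$, so I would work on the scale-free standardized quantity $Z_i/\sigma$. The key observation is that $(x_i^m)_l - y_l = \sigma\bigl((e_i^m)_l + (w_i^m)_l\bigr)$, and the median is equivariant under affine maps. Hence $Z_i = \sigma \cdot \operatorname{median}_m\bigl(e_i^m + w_i^m\bigr)$ entrywise, where the median is a random vector whose law does not depend on $\sigma$. This vector is almost surely finite, so $\sigma$ times it tends to $0$ in probability as $\sigma \downarrow 0$. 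This route is cleaner than invoking the high-probability bound~\eqref{eq:median_bound}, and it avoids any conditioning. Note that corruption values may be adversarial but are finite, and they enter only through this $\sigma$-free vector.

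For (b), I would invoke Corollary~\ref{cor:estimation_margin} directly. By hypothesis $y$ encodes a nondegenerate rank-one supermodular game with margin $\gamma$. On the event $\|\tilde y_i - y\|_\infty < \gamma/2$, the signal $\tilde y_i$ encodes a game with no BRC, by Theorem~\ref{thm:margin}. By (a), the probability of this event tends to one.

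For (c), I would condition on the corruption supports and values as well as on the event of (a). Then for each entry $l$, $Z_{i,l}$ is $\sigma$ times the median of $M$ independent variables: the clean ones $w$ and the corrupted ones $e+w$. Independence across $l$ follows because the noise entries are independent across $l$ (i.i.d.\ $w$, and corruption supports Bernoulli independent per entry), and the median for entry $l$ uses only entry-$l$ data. For the density, the median of independent absolutely continuous variables has a density given by the order-statistic formula. Near zero this density is bounded below because at least one clean observation has density at least $c_w/\sigma_w$ near zero, and fewer than half the observations are corrupted, so the median can land near zero with positive density. I expect this positivity step to be the main obstacle. Adversarial corruptions could shift mass, and I would handle this by showing that the probability that the median lies in $[-t,t]$ is at least the probability that the clean draws straddle appropriately. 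Concretely, I would require one clean draw in $[-t,t]$ with enough clean draws on each side, which has probability of order $t$ because the symmetric clean law puts mass $1/2$ on each side. Finally, symmetry: if the corruption values are conditionally symmetric about the truth, then every summand is symmetric about zero, and the median of independent symmetric variables is symmetric. Full local support of the vector $Z_i$ then follows from independence and the positive marginal densities near zero.
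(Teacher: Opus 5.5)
Your proposal is correct. Parts (b) and (c) follow the paper. Part (b) applies Corollary~\ref{cor:estimation_margin} on the event $\|Z_i\|_\infty<\gamma/2$. Part (c) rests on the same idea: a strict majority of the median's inputs are clean draws with a density bounded below near the truth. You make that idea more explicit than the paper's one-line appeal to the sandwich property.

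Part (a) takes a different route. The paper deduces (a) from the high-probability bound of Proposition~\ref{prop:intermediate}(a)--(b), noting that the bound carries the factor $\sigma$. You instead use equivariance of the median to write $Z_i=\sigma V$ entrywise, with $V=\operatorname{median}_m(e_i^m+w_i^m)$. This $V$ is an almost surely finite random vector whose law does not involve $\sigma$, so $\Pr(\|Z_i\|_\infty>\eta)=\Pr(\|V\|_\infty>\eta/\sigma)\to 0$.

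Your route is the more complete one. At fixed $M$, the paper's bound holds only off an event whose probability does not shrink with $\sigma$: it is of order $2ne^{-2\delta^2M}$, plus the quantile-concentration failure probability. Read literally, the paper's argument therefore gives only $\limsup_{\sigma\downarrow0}\Pr(\|Z_i\|_\infty>\eta)\le 2ne^{-2\delta^2M}+\cdots$, not zero. Your scaling identity is exactly what shows that the error off the good event also vanishes with $\sigma$. For the same reason, drop your opening plan of splitting off the complement of the sparse-removal event, since that term does not go to zero.

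What the paper's route buys in exchange is a rate, $C\sigma\sigma_w(\varepsilon+\sqrt{\log(2n)/M})$ on the good event, uniform over adversarial corruption values. That gives a quantitative threshold on $\sigma$ relative to $\gamma$ for the event in (b). Your argument is qualitative: its speed is governed by the tail of $\|V\|_\infty$, which off the good event depends on the arbitrary corruption values.

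Three details in (c) need tightening.

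\emph{Density.} Run the straddle argument around every point $s$ near zero, not only at $0$. Also condition on the corrupted draws themselves, which are independent of the clean ones. Take $M=2K+1$, let $I=[s-h,s+h]$, and suppose $c_L$ and $c_R$ corrupted draws lie to the left and right of $I$, with none inside it. Require exactly one clean draw in $I$, exactly $K-c_L$ clean draws to the left and exactly $K-c_R$ to the right. Since $c_L+c_R\le K$, this event is well defined and has probability of order $h$. By contrast, asking that the median fall in $I$ whatever the corrupted positions forces $c+1$ clean draws into $I$ and gives only order $h^{c+1}$.

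\emph{Symmetry.} You cannot keep conditioning on the corruption values, since a fixed nonzero $e$ makes $e+w$ symmetric about $e$ rather than about $0$. Condition on the supports only; the sparse-removal event depends on nothing else. Then sign-symmetry of the vector $\bigl((e_i^m)_l+(w_i^m)_l\bigr)_m$, together with $\operatorname{median}(-x)=-\operatorname{median}(x)$, gives the claim.

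\emph{Independence.} Conditioning on the values gives independence across $l$ only given the corruption configuration. To condition on the sparse-removal event alone, you need the corruption values to be independent across entries. The paper's proof makes this assumption implicitly too.
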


\begin{proof}
\emph{(a)} By Proposition~\ref{prop:intermediate}(a)--(b), the error bound carries the factor $\sigma$ throughout and hence vanishes as $\sigma \downarrow 0$ with $(M, \sigma_w)$ fixed.

\emph{(b)} By Theorem~\ref{thm:rank1}, $y$ admits no better response cycles, and by Corollary~\ref{cor:estimation_margin} the same holds for every $y'$ with $\|y' - y\|_\infty < \gamma/2$. Combine with (a).

\emph{(c)} Conditional on sparse removal, $\tilde{y}_l$ depends on the market observations of entry $l$ alone, and these are independent across $l$, giving independence of the entries of $Z_i$. Each $\tilde{y}_l$ is a sample median of $M$ real observations, of which a majority are i.i.d.\ draws from a distribution with a density bounded below near $y_l$ (Assumption~\ref{ass:noise}); the sandwich property of the median then implies $\tilde{y}_l$ itself has a density strictly positive in a neighborhood of $y_l$. When corruption values are conditionally symmetric about $y_l$, the contaminated entry distribution is symmetric about $y_l$, and the sample median of i.i.d.\ draws from a symmetric distribution is symmetric about its center.
\end{proof}

\begin{conjecture}[Multi-market limit uniqueness for rank-one games]
\label{conj:multimarket_lu}
Under the hypotheses of Proposition~\ref{prop:signal_properties}, limit uniqueness holds in $\Gamma(\sigma, M)$ as $\sigma \to 0$: for any open $O \ni y$ contained in~$\calS^*$,
\[
\lim_{\sigma \downarrow 0} \Pr\bigl( \exists\, i \text{ s.t.\ } |\mathrm{ICR}_i^{\sigma,M}(\tilde{y}_i)| > 1 \text{ and } \tilde{y}_i \in O \bigr) = 0,
\]
where $\mathrm{ICR}_i^{\sigma,M}(\tilde{y}_i)$ denotes player $i$'s set of interim correlated rationalizable actions given the signal $\tilde{y}_i$.
\end{conjecture}

The supporting argument is the contagion mechanism of \citet{veiel2025limits}. Parts (a) and (b) of Proposition~\ref{prop:signal_properties} deliver signals that concentrate on a cycle-free neighborhood of $y$, and part (c) delivers dominance regions of positive probability in signal space for every $\sigma > 0$. With no risk-dominant BRC available to obstruct it on $O$, iterated deletion from the dominance regions should then propagate inward and pin down a unique rationalizable action, exactly as in his Theorem~4.1.

\begin{remark}[Why this is a conjecture and not a theorem]
\label{rem:noise_class}
Theorem~4.1 of \citet{veiel2025limits} is established for a specified class of symmetric multi-dimensional noise. The effective noise $Z_i$ generated by the median is entrywise independent with full local support (Proposition~\ref{prop:signal_properties}(c)), but at finite $M$ it is not a member of that class: the median is a nonlinear functional of the raw noise, its distribution depends on $M$, and with adversarial corruption values exact entrywise symmetry can fail. Completing the argument therefore requires extending Veiel's characterization to entrywise-independent noise with full local support. The absence of BRCs on the relevant neighborhood, which Proposition~\ref{prop:signal_properties}(b) delivers, is exactly the substantive condition his characterization isolates, so we expect the extension to hold; but we have not proved it, and we prefer a clearly labeled conjecture to a gap inside a proposition. Every estimation result in this section and the next is unaffected and unconditional.
\end{remark}

The number of markets $M$ plays a dual role. Statistically, $M$ determines the quality of RPCA recovery, with estimation error $O\bigl(\sigma \sigma_w (\varepsilon + \sqrt{n/M})\bigr)$. Larger $M$ gives better denoising, but the intermediate regime is preserved for any finite~$M$. Strategically, since markets are independent and the game is static, $M$ does not affect the strategic analysis of any individual market. The entire effect of $M$ operates through the information channel. This separation is key: the large dimension needed for RPCA comes from the number of markets, not from the complexity of the strategic interaction.

For the multi-market setting, the extended logical chain is:
\begin{multline*}
\text{Multi-market + RPCA} \;\xRightarrow{\text{Prop.~\ref{prop:intermediate}}}\; \text{Intermediate regime} \\
\;\xRightarrow{\text{Prop.~\ref{prop:signal_properties}}}\; \text{Cycle-free, full-support signals}
\;\xRightarrow{\text{Conj.~\ref{conj:multimarket_lu}}}\; \text{Limit uniqueness.}
\end{multline*}
Every link except the last is a theorem; the last is the noise-class extension of Remark~\ref{rem:noise_class}.

\subsection{Beyond rank one: multi-market RPCA as a sparsity channel}
\label{subsec:beyond_rank1}

For games with rank $r \geq 2$, where BRCs may exist at the latent payoff $y$, the multi-market structure provides an additional channel for limit uniqueness through the sparse corruption structure.

\begin{proposition}[Sparse cycle-breaking with multi-market data]
\label{prop:sparse_breaking}
Suppose $y$ supports a BRC of length $m$ involving action subsets $B_1 \times B_2$. After RPCA-Averaging (Definition~\ref{def:estimator}), the estimation residual $\hat{Z}_i := \hat{y}_i - y$ has the property that for each entry $l \in \{1,\ldots,2n\}$, $(\hat{Z}_i)_l$ is determined by the dense noise alone: the sparse corruption has been eliminated (with high probability).

Consequently, the entries of $\hat{Z}_i$ corresponding to distinct action-pair payoffs are \emph{approximately independent} (inheriting the independence of $w_i^m$ across entries), unlike the correlated structure that would arise from RPCA on a single observation.

In particular, if the dense noise $w_i^m$ is symmetric (invariant under sign changes and permutations of action labels), the residual $\hat{Z}_i$ inherits this symmetry \emph{approximately} as $M \to \infty$, recovering the symmetry conditions required by Veiel's multilinear form analysis (Proposition~5.2 therein).
\end{proposition}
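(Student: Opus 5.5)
The plan is to decompose the residual into two pieces, $\hat Z_i = (\hat y_i - \tilde y) + (\tilde y - y)$, and treat them separately. The second piece is controlled entrywise on the sparse-removal event $E$ of Proposition~\ref{prop:intermediate}(a). On $E$, for each entry $l$ strictly fewer than half of the $M$ market observations are corrupted. The sandwich property then places $\tilde y_l$ weakly between two uncorrupted order statistics, that is, between two values of the form $y_l + \sigma (w_i^m)_l$. Hence on $E$ the value $\tilde y_l - y_l$ lies in the convex hull of the dense-noise draws $\{\sigma (w_i^m)_l\}_m$ and does not depend on the corruption magnitudes. This gives the precise sense in which the sparse corruption ``has been eliminated'': the magnitudes, which may be adversarial, enter only through which order statistic is selected, and the probability of $E^c$ is at most $2n e^{-2\delta^2 M}$.

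\emph{Approximate independence.} The raw observations $\{(x_i^m)_l\}_m$ are independent across $l$ (Definition~\ref{def:signal}, with Bernoulli supports independent across entries). Since $\tilde y_l$ is a function of entry $l$ alone, the entries of $\tilde y - y$ are independent; this is the argument of Proposition~\ref{prop:signal_properties}(c). The projection step couples the entries, and I would handle it perturbatively. By Proposition~\ref{prop:intermediate}(b)--(c), $\|\hat y_i - \tilde y\|_2 \le \|\tilde y - y\|_2 = O(\sigma\sigma_w\sqrt{n}\,(\varepsilon + \sqrt{\log n/M}))$. So $\hat Z_i$ equals an independent-entry vector plus a term that is small relative to $\sigma\sigma_w$ as $M \to \infty$, apart from the bias term, which vanishes if $\varepsilon = 0$ or if the corruptions are conditionally symmetric. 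I would state ``approximately independent'' in this form, as closeness to the product of the marginal laws in a metric such as Wasserstein-$\infty$ after rescaling by $\sigma\sigma_w$. I would also contrast this with single-observation RPCA, where the estimated sparse support couples the entries.

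\emph{Symmetry.} Suppose the law of $(w_i^m)$ is invariant under sign changes and under a group $G$ of action-label permutations acting on coordinates. The entrywise median commutes with coordinate permutations and is odd under sign flips. Therefore, when there is no corruption, or when the corruption law is itself $G$-invariant and symmetric, $\tilde y - y$ inherits the invariance exactly. The truncated SVD is equivariant under row and column permutations of each payoff matrix. It is not equivariant under entrywise sign flips, so for the projected residual I again use the perturbative bound: $\hat Z_i - (\tilde y - y) = O(\sigma\sigma_w(\varepsilon + \sqrt{\log n/M}))$ after normalization. Hence the law of $\hat Z_i/(\sigma\sigma_w)$ approaches a law with the required symmetry as $M\to\infty$, up to the $\varepsilon$ bias. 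This matches the hypotheses of Veiel's Proposition~5.2 in the approximate sense claimed.

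The main obstacle is the gap between the statement's wording and what is actually true. First, $(\hat Z_i)_l$ is not literally a function of the dense noise alone: the projection mixes entries, and the corrupted order statistics still choose which dense draw is returned. Second, under adversarial corruption the $\varepsilon$ bias prevents symmetry from being recovered even in the limit. I would resolve both by proving the statement in the precise form above, namely domination by dense-noise order statistics on $E$ together with vanishing perturbation from the projection. The approximate-symmetry claim would be stated explicitly as conditional on symmetric or vanishing corruption, mirroring the caveat in Proposition~\ref{prop:signal_properties}(c).
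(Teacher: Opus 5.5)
Your treatment of the median stage is the paper's argument, stated more carefully: on the sparse-removal event, $\tilde y_l$ is pinned between clean order statistics and depends on entry $l$ alone, so the entries of $\tilde y - y$ are independent. The gap is in how you dispose of the projection step.

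The bound $\|\hat y_i - \tilde y\|_2 \le \|\tilde y - y\|_2$ says the projection correction is of the \emph{same} order as the independent-entry vector $\tilde y - y$, not of smaller order. For $r < \min(N_1,N_2)$ it is generically comparable, since it removes the component of $\tilde y - y$ normal to the rank-$r$ variety. Rescaling by $\sigma\sigma_w$ therefore sends both pieces to zero at the same rate $\sqrt{\log n/M}$ (taking $\varepsilon = 0$). Your Wasserstein statement then reduces to the fact that the law of $\hat Z_i/(\sigma\sigma_w)$ and the product of its marginals both converge to a point mass at $0$. That holds for every residual that vanishes, including a perfectly correlated one such as $(\xi,\ldots,\xi)$ with $\xi \to 0$, so it certifies neither independence nor symmetry. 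The limiting point mass has every symmetry trivially, and that is all your symmetry conclusion delivers. Separately, $W_\infty$ is not controlled by a bound that holds only with high probability.

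The informative scale is the actual size of the median fluctuation, $\sigma\sigma_w/\sqrt{M}$, and at that scale the projection does not wash out. Each reshaped $\Pi_j$ has $\sigma_r(\Pi_j) > 0 = \sigma_{r+1}(\Pi_j)$, so the truncated SVD is differentiable there. Its derivative is the orthogonal projection $P_T$ onto the tangent space of the rank-$r$ variety, giving
\[
\hat Z_i = P_T(\tilde y - y) + O\bigl(\|\tilde y - y\|_2^2\bigr).
\]
Hence, still with $\varepsilon = 0$, $\sqrt{M}\,\hat Z_i/(\sigma\sigma_w)$ is, up to $o_P(1)$, the image under $P_T$ of a vector with i.i.d.\ symmetric entries. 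When $r < \min(N_1,N_2)$ this law has the following properties:
\begin{itemize}
\item it is supported on a proper subspace;
\item it generically has correlated entries;
\item it is invariant under the global sign flip but generically not under entrywise sign flips;
\item it is invariant under an action-label permutation $g$ only if $g$ preserves the singular subspaces of $y$.
\end{itemize}

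The same point defeats your appeal to permutation equivariance of the SVD. The residual $P(y + gZ) - y$ coincides with $g\bigl(P(y + Z) - y\bigr)$ in general only when $gy = y$, and the latent $y$ is not permutation-fixed.

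The non-vacuous content you can actually prove is independence and symmetry for $\tilde y - y$, and for $\hat Z_i$ only modulo the fixed linear map $P_T$. This is the precise form of the paper's remark that the projection introduces dependence through the rank-$r$ structure. It is also why Section~\ref{subsec:connecting} conditions strategies on $\tilde y_i$ rather than $\hat y_i$. To repair your argument, replace the perturbative comparison with this linearization and state the conclusions in that form.
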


\begin{proof}[Proof sketch]
Conditional on the event in Proposition~\ref{prop:intermediate}(a) (sparse removal succeeds), the median $\tilde{y}_l$ for each entry $l$ depends only on the dense noise values $\{w_{i,l}^m\}_{m=1}^M$. Since these are i.i.d.\ across entries $l$ (by assumption on $w_i^m$), the entries of $\tilde{y} - y$ are approximately independent. The low-rank projection (step~2) introduces dependence, but only within the rank-$r$ subspace, preserving the approximate symmetry of the noise in the orthogonal complement.
\end{proof}

This proposition suggests that multi-market RPCA, by stripping out the sparse asymmetric component of the noise, can restore the approximate symmetry conditions under which Veiel's characterization applies. Formalizing this, and in particular quantifying the rate at which symmetry is restored as $M \to \infty$, is an important direction for future work.

If both $M \to \infty$ and $\sigma \to 0$ simultaneously, the relative rates matter, with one caveat from Remark~\ref{rem:bias}: the dense-noise component of the residual shrinks like $\sigma\sqrt{n/M}$, but the contamination bias of order $\varepsilon \sigma \sigma_w$ shrinks only with $\sigma$. If $M$ is fixed and $\sigma \to 0$, the residual is $O(\sigma)$ and the standard global game analysis applies with effective noise of scale $\sigma \sigma_w (\varepsilon + \sqrt{n/M})$. The natural global game limit is $\sigma \to 0$ with $(M, \sigma_w)$ fixed, in which case the multi-market RPCA framework provides a well-defined intermediate regime.

%----------------------------------------------------------------------
\section{Partial observation and matrix completion}
\label{sec:matcomp}
%----------------------------------------------------------------------

The multi-market framework of Section~\ref{sec:rpca} assumes that each player observes a complete signal vector $x_i^m \in \R^{2n}$ in every market~$m$. In practice, players may observe only a subset of action-pair payoffs in each market, because they have experimented with only some action combinations or because entire markets are unobserved. This section extends the RPCA-Averaging estimator to partial observations, drawing on the matrix completion literature. We consider three observation models in increasing order of economic interest: missing columns (entire markets unobserved), missing entries with uniform random observation, and structured missingness arising from equilibrium play.

Throughout, we maintain the notation of Section~\ref{sec:rpca}: the stacked matrix is $X_i = L_0 + \sigma S_0 + \sigma W_0 \in \R^{2n \times M}$, with $L_0 = y \cdot \mathbf{1}_M^\top$ of rank one, $S_0$ columnwise $k$-sparse, and $W_0$ dense noise with i.i.d.\ entries of variance $\sigma_w^2$.

%----------------------------------------------------------------------
\subsection{Case 1: missing columns}
\label{subsec:missing_columns}
%----------------------------------------------------------------------

The simplest partial observation model assumes that player~$i$ observes complete signal vectors in only a subset of markets.

\begin{definition}[Column observation model]
\label{def:column_obs}
Player~$i$ observes signals $x_i^m$ for markets $m \in \Omega_i \subseteq \{1,\ldots,M\}$, where $|\Omega_i| = M_{\mathrm{obs}}$. No information is available from markets $m \notin \Omega_i$.
\end{definition}

This is equivalent to restricting the stacked matrix to its observed columns: $X_i^{\Omega} := [x_i^m]_{m \in \Omega_i} \in \R^{2n \times M_{\mathrm{obs}}}$. The restricted matrix decomposes as
\[
X_i^{\Omega} = y \cdot \mathbf{1}_{M_{\mathrm{obs}}}^\top + \sigma [e_i^m]_{m \in \Omega_i} + \sigma [w_i^m]_{m \in \Omega_i}.
\]
The low-rank component $y \cdot \mathbf{1}_{M_{\mathrm{obs}}}^\top$ is still rank one, and the RPCA-Averaging estimator (Definition~\ref{def:estimator}) applies verbatim with $M$ replaced by $M_{\mathrm{obs}}$.

\begin{proposition}[Intermediate regime with missing columns]
\label{prop:missing_columns}
Under the column observation model (Definition~\ref{def:column_obs}), the RPCA-Averaging estimator applied to $X_i^{\Omega}$ satisfies the conclusions of Proposition~\ref{prop:intermediate} with $M$ replaced by $M_{\mathrm{obs}}$ throughout. In particular:
\begin{enumerate}[label=(\alph*),nosep]
\item Sparse removal succeeds with probability at least $1 - 2n \cdot e^{-2\delta^2 M_{\mathrm{obs}}}$;
\item The estimation error satisfies $\|\hat{y}_i - y\|_2 \leq C' \cdot \sigma \sigma_w \sqrt{2n}\,\bigl(\varepsilon + \sqrt{\log(2n) / M_{\mathrm{obs}}}\bigr)$;
\item The intermediate regime is preserved for any $M_{\mathrm{obs}} < \infty$ and $\sigma > 0$.
\end{enumerate}
\end{proposition}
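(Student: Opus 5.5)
The plan is a reduction to Proposition~\ref{prop:intermediate}: I will show that the restricted data $X_i^{\Omega}$ satisfy all of that proposition's hypotheses, with $M$ replaced by $M_{\mathrm{obs}}$.

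The one substantive point is that the observed set $\Omega_i$ must be chosen independently of the noise $(e_i^m, w_i^m)_m$. Examples are a deterministic set, or a random set independent of the corruptions, such as markets missing for exogenous reasons. I will take this as implicit in Definition~\ref{def:column_obs} and state it explicitly. I expect this step to be the main obstacle, since if $\Omega_i$ could depend on the corruption pattern the Hoeffding step would fail. Under independence, conditioning on $\Omega_i$ leaves the collection $\{(e_i^m, w_i^m)\}_{m \in \Omega_i}$ distributed as $M_{\mathrm{obs}}$ i.i.d.\ copies. In that collection each entry is corrupted independently with probability $\varepsilon$, and the dense noise satisfies Assumption~\ref{ass:noise}. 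The columns $\{x_i^m\}_{m\in\Omega_i}$ therefore form exactly a full-observation multi-market sample with $M_{\mathrm{obs}}$ markets, and its low-rank part is $y\mathbf{1}_{M_{\mathrm{obs}}}^\top$.

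With this in place the three parts follow in order. For (a), I will repeat the Hoeffding-plus-union-bound argument of Proposition~\ref{prop:intermediate}(a) on $\mathrm{Binomial}(M_{\mathrm{obs}},\varepsilon)$ counts over the $2n$ entries, which gives the bound $1-2n e^{-2\delta^2 M_{\mathrm{obs}}}$. For (b), the median bound~\eqref{eq:median_bound} holds with $M_{\mathrm{obs}}$ in place of $M$, because its proof uses only the clean-subsample quantile concentration and the density lower bound $c_w$, and neither depends on the unobserved columns. Passing from the sup-norm to the $\ell_2$ norm costs a factor $\sqrt{2n}$, and the projection step of Proposition~\ref{prop:intermediate}(c) costs a factor $2$; together these give the stated bound with $C' = 2C$.

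For (c), the upper bound vanishes as $\sigma\to 0$ by (b). Strict positivity of the residual follows as in Proposition~\ref{prop:intermediate}(d)(ii). The median of $M_{\mathrm{obs}}\ge 1$ observations, each with a density, is absolutely continuous. The preimage of $y$ under the rank-$r$ projection is a lower-dimensional set, so it has probability zero, and $\|Z_i\|_2>0$ almost surely for every finite $M_{\mathrm{obs}}$.
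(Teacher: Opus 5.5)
Your proposal is correct and follows the paper's own route. The paper also reduces to Proposition~\ref{prop:intermediate}: the observed columns still share the rank-one component $y\,\mathbf{1}_{M_{\mathrm{obs}}}^\top$, so the Hoeffding bound, the quantile concentration (including its contamination bias), and the SVD step carry over with $M_{\mathrm{obs}}$ in place of $M$. Your explicit requirement that $\Omega_i$ be independent of $(e_i^m, w_i^m)_m$ is used only implicitly in the paper's one-line argument, and stating it is a genuine gain in rigor.
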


\begin{proof}
Since every column of $L_0$ is identical, the subset of observed columns carries the same per-entry information as the full matrix. The entry-wise median is computed over $M_{\mathrm{obs}}$ markets instead of $M$; the Hoeffding bound in Proposition~\ref{prop:intermediate}(a) and the quantile concentration in Proposition~\ref{prop:intermediate}(b), including its contamination bias term, carry through with $M_{\mathrm{obs}}$ in place of $M$. The SVD projection step is unchanged.
\end{proof}

Fewer observed markets degrade the estimation rate by a factor of $\sqrt{M/M_{\mathrm{obs}}}$, but the qualitative structure is entirely preserved: intermediate regime, sparse removal, low-rank projection. The only requirement is that $M_{\mathrm{obs}}$ is large enough for the median to work; $M_{\mathrm{obs}} \gtrsim \log(2n)$ suffices for the Chernoff bound in part~(a).

%----------------------------------------------------------------------
\subsection{Case 2: missing entries with uniform observation}
\label{subsec:missing_entries}
%----------------------------------------------------------------------

The economically more interesting case is when player~$i$ observes only a subset of action-pair payoffs \emph{within} each market. This is the natural matrix completion setting: the player knows payoffs at action pairs she has explored, but not at pairs she has not tried.

\begin{definition}[Entry observation model]
\label{def:entry_obs}
For each market $m$ and entry $l \in \{1,\ldots,2n\}$, player~$i$ observes $(x_i^m)_l$ independently with probability $p \in (0,1]$. Let $\Omega \subseteq \{1,\ldots,2n\} \times \{1,\ldots,M\}$ denote the (random) set of observed positions, and let $\mathcal{P}_\Omega$ denote the entry-wise projection onto $\Omega$:
\[
(\mathcal{P}_\Omega(A))_{l,m} = \begin{cases} A_{l,m} & \text{if } (l,m) \in \Omega, \\ 0 & \text{otherwise.} \end{cases}
\]
The observed data is $\mathcal{P}_\Omega(X_i) = \mathcal{P}_\Omega(L_0 + \sigma S_0 + \sigma W_0)$.
\end{definition}

This combines matrix completion with robust estimation: we must recover a rank-one matrix from partial, sparsely corrupted, noisy observations. As noted in Lemma~\ref{lem:incoherence}, the stacked low-rank component $L_0 = y \cdot \mathbf{1}_M^\top$ has favorable incoherence properties, with $\mu_{\mathrm{right}} = 1$ and $\mu_{\mathrm{left}} = O(1)$ for spread-out payoffs, which are precisely the conditions required for matrix completion.

\subsubsection{Adaptation of the RPCA-Averaging estimator}

The two-step RPCA-Averaging procedure adapts cleanly to partial observations.

\begin{definition}[RPCA-Averaging estimator with missing entries]
\label{def:estimator_mc}
Given partially observed data $\mathcal{P}_\Omega(X_i)$:
\begin{enumerate}[nosep]
\item \textbf{Sparse removal via partial-observation median.} For each entry $l \in \{1,\ldots,2n\}$, let $\Omega_l := \{m \in \{1,\ldots,M\} : (l,m) \in \Omega\}$ denote the set of markets where entry $l$ is observed. Compute:
\begin{equation}
\label{eq:median_mc}
\tilde{y}_l := \operatorname{median}_{m \in \Omega_l} \bigl( (x_i^m)_l \bigr).
\end{equation}
\item \textbf{Low-rank projection.} Identical to Step~2 of Definition~\ref{def:estimator}: reshape $\tilde{y}$ into payoff matrices and compute the truncated rank-$r$ SVD.
\end{enumerate}
\end{definition}

The key change is that the median for entry $l$ is computed over $|\Omega_l|$ observations rather than $M$. Since each entry is observed independently with probability $p$, $|\Omega_l| \sim \mathrm{Binomial}(M, p)$ and concentrates around $pM$.

\begin{proposition}[Intermediate regime with missing entries]
\label{prop:missing_entries}
Under the entry observation model (Definition~\ref{def:entry_obs}), suppose $k < n$ and $pM \geq 8 \log(4n)$. Then for the RPCA-Averaging estimator with missing entries (Definition~\ref{def:estimator_mc}):
\begin{enumerate}[label=(\alph*),nosep]
\item \textbf{Sufficient observations per entry.} With probability at least $1 - 2n \cdot e^{-pM/8}$, every entry $l$ has $|\Omega_l| \geq pM/2$.

\item \textbf{Sparse removal under partial observation.} Conditional on the event in~(a), sparse removal succeeds: for each entry $l$, fewer than half of the observed values are corrupted, with probability at least $1 - 2n \cdot e^{-\delta^2 pM}$.

\item \textbf{Dense residual.} Conditional on (a) and (b), the median estimator satisfies
\begin{equation}
\label{eq:median_bound_mc}
\|\tilde{y} - y\|_\infty \;\leq\; C \cdot \sigma \sigma_w \left( \varepsilon + \sqrt{\frac{\log(2n)}{pM}} \right) \qquad \text{with high probability.}
\end{equation}

\item \textbf{Intermediate regime.} The residual $Z_i := \hat{y}_i - y$ satisfies:
\begin{enumerate}[label=(\roman*),nosep]
\item $\|Z_i\|_2 \leq C' \cdot \sigma \sigma_w \sqrt{2n}\,\bigl(\varepsilon + \sqrt{\log(2n) / (pM)}\bigr)$, which vanishes as $\sigma \to 0$ for fixed $p, M$;
\item $\|Z_i\|_2 > 0$ with probability~$1$ for any $\sigma > 0$, $\sigma_w > 0$, $pM < \infty$.
\end{enumerate}
\end{enumerate}
\end{proposition}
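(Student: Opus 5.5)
The plan is to reduce everything to the full-observation argument of Proposition~\ref{prop:intermediate}, conditioning on the random observation pattern $\Omega$. The key structural fact is that $\Omega$ is drawn independently of the corruption supports and of the dense noise. Hence, conditional on $\Omega$, the observed values $\{(x_i^m)_l : m \in \Omega_l\}$ for a fixed entry $l$ are $|\Omega_l|$ i.i.d.\ draws of exactly the same contaminated form as before. Each draw is corrupted independently with probability $\varepsilon$, and otherwise equals $y_l + \sigma w$ with $w$ as in Assumption~\ref{ass:noise}. So every statement proved with $M$ markets transfers with $M$ replaced by $|\Omega_l|$, and it remains only to lower-bound $|\Omega_l|$ uniformly in $l$.

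\textbf{Part (a).} Since $|\Omega_l| \sim \mathrm{Binomial}(M,p)$, the multiplicative Chernoff bound with deviation $1/2$ gives
\[
\Pr\bigl(|\Omega_l| < pM/2\bigr) \le e^{-pM/8}.
\]
A union bound over the $2n$ entries then yields the stated probability. The hypothesis $pM \ge 8\log(4n)$ makes this failure probability at most $1/2$, so the bound is nontrivial and the event has probability bounded away from zero.

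\textbf{Part (b).} Condition on $\Omega$ with every $|\Omega_l| \ge pM/2$. The number of corrupted observations among $\Omega_l$ is then $\mathrm{Binomial}(|\Omega_l|,\varepsilon)$, and $k<n$ gives $\varepsilon<1/2$, so $\delta>0$. Hoeffding's inequality gives
\[
\Pr(\text{corrupted fraction} \ge 1/2) \le e^{-2\delta^2|\Omega_l|} \le e^{-\delta^2 pM}.
\]
A union bound over $l$ completes the step. The sandwich property of the median between clean order statistics follows exactly as in Proposition~\ref{prop:intermediate}(a).

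\textbf{Part (c).} I would rerun the contamination-quantile argument of Proposition~\ref{prop:intermediate}(b) entry by entry with sample size $|\Omega_l|$. The corrupted fraction is refined to at most $\varepsilon + \sqrt{\log(4n\,pM)/(pM)}$ using $|\Omega_l| \ge pM/2$. The median then lies between clean empirical quantiles at levels $1/2 \pm O(\varepsilon_{pM})$. The DKW inequality, or Hoeffding applied to indicator counts, shows that these empirical quantiles concentrate at rate $\sqrt{\log(2n)/|\Omega_l|} \le \sqrt{2\log(2n)/(pM)}$. Finally, the density lower bound $c_w$ converts quantile levels into locations at cost $\sigma\sigma_w/c_w$ per unit level, which yields~\eqref{eq:median_bound_mc} with the factor $\sqrt2$ absorbed into $C$.

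I expect the main obstacle to be bookkeeping rather than substance. The second-stage Hoeffding and DKW bounds must be stated conditionally on $\Omega$, so that $|\Omega_l|$ can be treated as fixed, and the logarithmic factor in the refined corruption level must be handled so the final rate depends only on $pM$. Conditioning on the $\sigma$-field generated by $\Omega$, and using its independence from $(e,w)$, should make this clean.

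\textbf{Part (d).} Item (i) follows from (c) via $\|\cdot\|_2 \le \sqrt{2n}\,\|\cdot\|_\infty$ together with the factor-two projection bound of Proposition~\ref{prop:intermediate}(c), which is unchanged because the projection step is identical. For item (ii), on the event that every $\Omega_l$ is nonempty, each $\tilde y_l$ is a median of finitely many observations, a majority of which are absolutely continuous. It therefore has a continuous distribution conditional on $\Omega$, so $\tilde y$ has a density on $\R^{2n}$. The preimage of $y$ under the rank-$r$ projection is a lower-dimensional set, so it has probability zero, exactly as in Proposition~\ref{prop:intermediate}(d). If some $\Omega_l$ were empty, $\tilde y_l$ would be undefined; I would adopt any fixed convention for that case, note that the event has vanishing probability under (a), and read (ii) on the complementary event.
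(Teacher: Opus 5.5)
Your argument follows the paper's proof essentially step for step. You use a multiplicative Chernoff bound and a union bound for (a), and Hoeffding conditional on the observation pattern for (b), relying on independence of observation and corruption. For (c) you rerun the contamination--quantile argument of Proposition~\ref{prop:intermediate}(b) with per-entry sample size of order $pM$, and for (d) you use the factor-two projection bound plus a density argument.

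In (c) you spell out the Hoeffding refinement of the corrupted fraction that the paper only invokes by reference to Proposition~\ref{prop:intermediate}(b). One small correction there: the quantiles being concentrated are those of the clean subsample. The relevant size is therefore $|\Omega_l^{\mathrm{clean}}| \geq pM/4$ rather than $|\Omega_l|$, which changes only constants. Your remark that $\tilde y_l$ is undefined when $\Omega_l = \emptyset$ is a legitimate point the paper passes over.

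The step that does not go through as written is the justification of (d)(ii). You argue that a median of finitely many independent observations, a majority of which are absolutely continuous, has a continuous distribution. That is false in general. Take three observations: two drawn from a continuous distribution function $F$ and one equal to a constant $c$. The median equals $c$ with probability $2F(c)(1-F(c))$, which is positive whenever $0 < F(c) < 1$, even though the corrupted fraction is below one half. So the majority/sandwich reasoning cannot deliver a density. Moreover, continuity of each coordinate separately would not give the joint density on $\R^{2n}$ that the null-preimage step needs.

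The repair comes from the signal model itself. A corrupted observation is $y_l + \sigma(e_{i,l}^m + w_{i,l}^m)$, so it still carries the dense noise, and $e$ is independent of $w$. Conditional on $(\Omega, e)$, the observed values are independent across $(l,m)$ and each has a density. The middle order statistic (or the average of the two middle ones) of independent absolutely continuous variables is absolutely continuous. Independence across $l$ then gives $\tilde y$ a joint density on $\R^{2n}$, and your lower-dimensional-preimage argument applies unchanged.
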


\begin{proof}
\textbf{Part (a):} By a multiplicative Chernoff bound, $\Pr(|\Omega_l| < pM/2) \leq e^{-pM/8}$. A union bound over $2n$ entries gives $\Pr(\exists\, l : |\Omega_l| < pM/2) \leq 2n \cdot e^{-pM/8}$.

\textbf{Part (b):} Conditional on $|\Omega_l| \geq pM/2$, the median over $\Omega_l$ is robust to corruption as follows. The corruption event (entry $l$ is corrupted in market $m$) and the observation event (entry $l$ is observed in market $m$) are independent by assumption. Among the $|\Omega_l|$ observed values of entry $l$, the number that are corrupted is $\mathrm{Binomial}(|\Omega_l|, \varepsilon)$ with $\varepsilon = k/(2n) < 1/2$. By Hoeffding's inequality, the corrupted fraction reaches $1/2 = \varepsilon + \delta$ with probability at most $e^{-2\delta^2 |\Omega_l|} \leq e^{-\delta^2 pM}$. A union bound over $2n$ entries gives the claim.

\textbf{Part (c):} Conditional on the events in (a) and (b), the contaminated median for entry $l$ is sandwiched between empirical quantiles of the clean observed subsample at levels $1/2 \pm \varepsilon'$, with $\varepsilon'$ bounded by a constant multiple of $\varepsilon$, exactly as in the proof of Proposition~\ref{prop:intermediate}(b). Since $|\Omega_l^{\mathrm{clean}}| \geq |\Omega_l|/2 \geq pM/4$, quantile concentration plus the density lower bound of Assumption~\ref{ass:noise} give $|\tilde{y}_l - y_l| \leq C \sigma \sigma_w (\varepsilon + \sqrt{\log(2n)/(pM)})$ with high probability uniformly over $l$, by a union bound.

\textbf{Part (d):} Follows from (c) exactly as in Proposition~\ref{prop:intermediate}(d): the SVD projection step at most doubles the error, by the argument of Proposition~\ref{prop:intermediate}(c), and the noise retains a density.
\end{proof}

The dense-noise component of the estimation rate degrades by a factor of $1/\sqrt{p}$ relative to the full-observation case, while the contamination bias term is unchanged. Importantly, the intermediate regime is preserved for \emph{any} $p > 0$. Partial observation slows convergence but does not alter the qualitative structure. The sample complexity condition $pM \geq 8\log(4n)$ ensures that every entry is observed enough times for the median to be reliable, which is mild when $M$ is moderately large.

\begin{remark}[Connection to matrix completion]
\label{rem:matcomp}
The standard matrix completion literature \citep{candes2010matrix,keshavan2010matrix} recovers a rank-$r$ matrix in $\R^{n_1 \times n_2}$ from $O(\mu_0 \cdot r \cdot \max(n_1, n_2) \cdot \mathrm{polylog})$ observed entries, under incoherence. For the stacked matrix $L_0 \in \R^{2n \times M}$ with rank one and $\mu_{\mathrm{right}} = 1$, this requires $p \cdot 2nM \gtrsim \mu_0 \cdot \max(2n, M) \cdot \mathrm{polylog}$, i.e., $p \gtrsim \mu_0 \cdot \mathrm{polylog} / \min(2n, M)$. The RPCA-Averaging estimator avoids the need to solve a nuclear-norm minimization problem by exploiting the replication structure (every column of $L_0$ is identical), which allows a simpler entry-wise median computation. The median-based approach requires the weaker condition $pM \gtrsim \log(n)$, which does not impose any relationship between $p$ and $n$.
\end{remark}

\subsubsection{Approximate symmetry under partial observation}

A concern is whether missing data disrupts the noise symmetry properties required by Veiel's analysis. With uniform random observation, the answer is negative.

\begin{proposition}[Approximate symmetry with uniform missingness]
\label{prop:symmetry_mc}
Under the entry observation model (Definition~\ref{def:entry_obs}) with uniform observation probability $p$, conditional on the events in Proposition~\ref{prop:missing_entries}(a)--(b), the median residuals $(\tilde{y}_l - y_l)_{l=1}^{2n}$ satisfy:
\begin{enumerate}[label=(\roman*),nosep]
\item The residuals are approximately independent across entries $l$, since $\tilde{y}_l$ depends only on the dense noise values $\{w_{i,l}^m : m \in \Omega_l^{\mathrm{clean}}\}$, which are independent across $l$ by assumption.
\item The marginal distribution of $\tilde{y}_l - y_l$ is the same for all entries $l$ (up to the randomness in $|\Omega_l|$), since each entry faces the same observation probability $p$ and the dense noise is identically distributed across entries.
\end{enumerate}
Consequently, Proposition~\ref{prop:sparse_breaking} extends to the partial observation setting: the effective noise $Z_i$ inherits the approximate symmetry of the dense noise, and the conclusions regarding Veiel's multilinear form analysis carry over.
\end{proposition}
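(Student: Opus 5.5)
The plan is to condition throughout on the intersection $E$ of the events in Proposition~\ref{prop:missing_entries}(a)--(b). On $E$, every entry $l$ has $|\Omega_l| \geq pM/2$, and strictly fewer than half of the observed values in $\Omega_l$ are corrupted. I will treat the random sets $\Omega$ and the corruption supports as conditioned upon, together with $E$. This is legitimate because observation and corruption indicators are independent of the dense noise $W_0$ by Definition~\ref{def:signal} and Definition~\ref{def:entry_obs}.

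\textbf{Step 1: independence, part (i).} Given the supports, the median $\tilde{y}_l$ in~\eqref{eq:median_mc} is a measurable function of the observed column-$l$ data $\{y_l + \sigma e^m_{i,l} + \sigma w^m_{i,l} : m \in \Omega_l\}$ only. By the sandwich property used in Proposition~\ref{prop:intermediate}(b), on $E$ it lies between order statistics of the clean values $\{y_l + \sigma w^m_{i,l} : m \in \Omega_l^{\mathrm{clean}}\}$. With adversarial corruption values it still depends on $e^m_{i,l}$, which is why the claim is only \emph{approximate} independence. I would make this precise in two ways. First, if corruption values are drawn independently across entries, exact conditional independence holds, because the $\sigma$-fields generated by row $l$ of $(S_0, W_0)$ are independent across $l$. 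Second, in the adversarial case the dependence enters only through the bias, which Proposition~\ref{prop:missing_entries}(c) bounds by $C\sigma\sigma_w\varepsilon$. The residual is then a function of row-$l$ dense noise up to an $O(\varepsilon)$ perturbation.

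\textbf{Step 2: identical marginals, part (ii).} I would condition further on the vector of counts $(|\Omega_l|, |\Omega_l^{\mathrm{clean}}|)_l$. Given the counts, the clean values are $|\Omega_l^{\mathrm{clean}}|$ i.i.d.\ draws of $\sigma w$ centred at $y_l$, by Assumption~\ref{ass:noise}. The law of the shifted median $\tilde{y}_l - y_l$ therefore depends on $l$ only through these counts, and when there is no corruption it is a fixed function of $|\Omega_l|$. Since the counts are i.i.d.\ $\mathrm{Binomial}(M,p)$ across $l$ under uniform observation, and the clean counts are also i.i.d., the unconditional marginals coincide across $l$. Conditionally on $E$ they still coincide, because $E$ is a symmetric product-type event in $l$. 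This is the sense of ``up to the randomness in $|\Omega_l|$''. Symmetry about zero follows as in Proposition~\ref{prop:signal_properties}(c) whenever corruptions are conditionally symmetric.

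\textbf{Step 3: the extension of Proposition~\ref{prop:sparse_breaking}.} That proposition's sketch used only two ingredients. The first is that on the sparse-removal event the median depends on dense noise alone. The second is that the projection step introduces dependence only within the rank-$r$ subspace. Steps 1--2 supply the first ingredient verbatim under partial observation, with the same approximate caveat. The second ingredient is unchanged, since Step~2 of Definition~\ref{def:estimator_mc} is identical to that of Definition~\ref{def:estimator}. Steps~1--2 show the residual vector has i.i.d.-like coordinates, so it is exchangeable across entries, and sign-symmetric when the noise is symmetric, which is the symmetry Veiel's Proposition~5.2 uses.

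The main obstacle is Step~2's conditioning on $E$ and on the counts. Clean counts are coupled across $l$ only through the union bound defining $E$, and I would argue $E$ is an intersection of identical per-entry events, so exchangeability survives. The adversarial-value dependence in Step~1 cannot be removed; it can only be quantified.
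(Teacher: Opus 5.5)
Your proposal follows the paper's own argument. Part (i) comes from the row-wise independence of the dense noise, once each median depends only on its own entry's observations. Part (ii) comes from the counts $|\Omega_l|$ being identically $\mathrm{Binomial}(M,p)$, with the conditional law of the median depending on $l$ only through the count. The extension holds because these are the only properties of $\tilde{y}-y$ that Proposition~\ref{prop:sparse_breaking} uses.

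You are more careful than the paper about two points: conditioning on $E$ as an intersection of independent, identically distributed per-entry events, and separating out the dependence introduced by adversarial corruption values. One caveat remains. Your Step~2 claim that the law depends on $l$ only through the counts tacitly requires the corruption values to be absent or identically distributed across entries. The paper's proof makes the same idealization.
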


\begin{proof}
Part~(i) follows from the independence of $w_{i,l}^m$ across entries $l$ (maintained from the full-observation case). Part~(ii) follows from the exchangeability of the observation model: $|\Omega_l|$ has the same marginal distribution $\mathrm{Binomial}(M,p)$ for every entry $l$, and conditional on $|\Omega_l|$, the median of $|\Omega_l|$ i.i.d.\ noise draws has the same distribution regardless of $l$. The extension of Proposition~\ref{prop:sparse_breaking} follows because the approximate independence and identical marginals of the residuals are the only properties of $\tilde{y} - y$ used in that proof.
\end{proof}

%----------------------------------------------------------------------
\subsection{Case 3: structured missingness from strategic observation}
\label{subsec:strategic_obs}
%----------------------------------------------------------------------

The most economically substantive extension replaces uniform random missingness with \emph{endogenous} observation patterns. Player~$i$ observes payoffs at action pairs that were actually played in each market, either by herself or by market participants she monitors. This creates structured missingness that depends on the equilibria realized across markets.

\begin{definition}[Strategic observation model]
\label{def:strategic_obs}
In market $m$, an equilibrium $\alpha^m \in A_1 \times A_2$ is played. Player~$i$ observes $(x_i^m)_l$ only for entries $l$ corresponding to the action pair $\alpha^m$ and possibly a neighborhood of $\alpha^m$ in the action space (e.g., payoffs at nearby action pairs from local experimentation). Let $\Omega_i^m \subseteq \{1,\ldots,2n\}$ denote the set of entries observed by player~$i$ in market~$m$.
\end{definition}

Unlike the uniform model, the sets $\Omega_i^m$ are not drawn independently of the payoff structure. They depend on which equilibria arise, which in turn depend on the payoffs. However, the rank-one structure of $L_0$ provides a special property that partially insulates the estimation problem from this endogeneity.

\begin{remark}[Timing and selection]
\label{rem:timing}
The strategic observation model contains a latent circularity if read within a single period: play generates observations, but players need observations to play. We resolve it with a two-epoch interpretation. The observation sets $\Omega_i^m$ are generated by an earlier epoch of play, whose noise realizations are independent of the current epoch's dense noise $w_i^m$; players then estimate $y$ from the inherited observation pattern and play the current epoch's static game. Condition~(c) of Proposition~\ref{prop:strategic_estimation} handles the independence of the sparse corruption from the observation pattern; the two-epoch timing supplies the analogous independence for the dense noise. Without it, conditioning on which entries were observed would distort the noise distribution and introduce selection bias into the median.
\end{remark}

\subsubsection{Why replication helps: column homogeneity}

The key observation is that every column of $L_0 = y \cdot \mathbf{1}_M^\top$ is identical. Observing entry $l$ in \emph{any} market is equally informative about $y_l$. What matters for estimation is not in which market an entry is observed, but whether, across all $M$ markets, each entry $l$ is observed sufficiently often.

\begin{definition}[Coverage]
\label{def:coverage}
The \emph{coverage} of the observation pattern $\{\Omega_i^m\}_{m=1}^M$ is the set $\Omega_i^{\cup} := \bigcup_{m=1}^M \Omega_i^m \subseteq \{1,\ldots,2n\}$. The observation pattern has \emph{full coverage} if $\Omega_i^{\cup} = \{1,\ldots,2n\}$. For each entry $l$, define $M_l := |\{m : l \in \Omega_i^m\}|$ as the number of markets in which entry $l$ is observed.
\end{definition}

\begin{proposition}[Estimation under strategic observation]
\label{prop:strategic_estimation}
Suppose the observation pattern satisfies the following conditions:
\begin{enumerate}[label=(\alph*),nosep]
\item \textbf{Full coverage}: $\Omega_i^{\cup} = \{1,\ldots,2n\}$.
\item \textbf{Minimum depth}: $M_l \geq M_{\min}$ for all $l \in \{1,\ldots,2n\}$, where $M_{\min} \geq 8\log(4n)$.
\item \textbf{Independent corruption}: the support of $e_i^m$ is independent of $\Omega_i^m$ for each $m$.
\end{enumerate}
Then the RPCA-Averaging estimator with missing entries (Definition~\ref{def:estimator_mc}) satisfies:
\begin{equation}
\label{eq:strategic_bound}
\|\hat{y}_i - y\|_\infty \leq C \cdot \sigma \sigma_w \left( \varepsilon + \sqrt{\frac{\log(2n)}{M_{\min}}} \right)
\end{equation}
with probability at least $1 - 4n \cdot e^{-\delta^2 M_{\min}}$.
\end{proposition}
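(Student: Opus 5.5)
The plan is to reduce the strategic observation model to the argument already used for Proposition~\ref{prop:missing_entries}. By Remark~\ref{rem:timing}, the observation pattern $\{\Omega_i^m\}$ is generated in an earlier epoch, so I condition on it throughout. Once conditioned, the sets $\Omega_l := \{m : l \in \Omega_i^m\}$ are deterministic, with $|\Omega_l| = M_l \geq M_{\min}$ by condition~(b). Full coverage (condition~(a)) is implied by~(b), and it guarantees that the entrywise median $\tilde{y}_l$ is well defined for every $l$. Column homogeneity of $L_0 = y\,\mathbf{1}_M^\top$ is what makes this conditioning harmless. Every observed value of entry $l$ equals $y_l$ plus noise, whichever market it came from. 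So the only quantity that enters the analysis is the count $M_l$, not the identity of the markets.

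\textbf{Step 1 (sparse removal).} By condition~(c) and the Bernoulli corruption model, the number of corrupted observations among the $M_l$ observed values of entry $l$ is $\mathrm{Binomial}(M_l, \varepsilon)$, conditionally on the pattern. Hoeffding's inequality bounds the probability that the corrupted fraction reaches $1/2$ by $e^{-2\delta^2 M_l} \leq e^{-\delta^2 M_{\min}}$. A union bound over the $2n$ entries gives failure probability at most $2n e^{-\delta^2 M_{\min}}$. This replaces part~(a) of Proposition~\ref{prop:missing_entries}; no random observation-count event is needed, because $M_l$ is now deterministic.

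\textbf{Step 2 (dense residual).} On the event of Step~1, I repeat the contamination argument of Proposition~\ref{prop:intermediate}(b). The median is sandwiched between empirical quantiles at levels $1/2 \pm O(\varepsilon)$ of the clean subsample, which has size at least $M_l/2$. Quantile concentration, together with the density lower bound $c_w$ of Assumption~\ref{ass:noise}, gives $|\tilde{y}_l - y_l| \leq C\sigma\sigma_w(\varepsilon + \sqrt{\log(2n)/M_{\min}})$. I tune the union bound so that this second failure event also has probability at most $2n e^{-\delta^2 M_{\min}}$. If that is not achievable directly, I absorb the difference into $C$, using $M_{\min} \geq 8\log(4n)$. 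Adding the two failure probabilities gives the stated $1 - 4n e^{-\delta^2 M_{\min}}$.

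\textbf{Step 3 (projection to the sup norm).} This is the main obstacle, because the bound~\eqref{eq:strategic_bound} is stated in $\|\cdot\|_\infty$ for $\hat{y}_i$, whereas Proposition~\ref{prop:intermediate}(c) only controls the Euclidean norm. My first attempt is to pass through the $2$-norm: $\|\hat{y}_i - y\|_\infty \leq \|\hat{y}_i - y\|_2 \leq 2\|\tilde{y} - y\|_2 \leq 2\sqrt{2n}\,\|\tilde{y} - y\|_\infty$. This costs a factor $\sqrt{2n}$. Since $n$ is fixed by the action space, that factor can be absorbed into the constant $C$, with $C$ now depending on $n$ as well as $c_w$, and I would say so explicitly. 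The sharper alternative would be a perturbation bound for the truncated SVD of a small fixed-size matrix. That would require a gap between the $r$-th and $(r{+}1)$-th singular values of the reshaped $y$, and I would avoid it unless a dimension-free constant is needed. The claim also holds trivially for the unprojected median $\tilde{y}$, which is the signal used strategically.

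Finally, I would note why the two-epoch independence matters. It is what justifies treating the conditioned dense noise on $\Omega_l$ as i.i.d.\ and symmetric. Without it, selection on observed entries would bias the median, as Remark~\ref{rem:timing} warns.
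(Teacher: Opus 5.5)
Your Steps 1 and 2 follow the paper's proof. Condition~(c) makes the corrupted count among the $M_l \geq M_{\min}$ observations of entry $l$ binomial. Hoeffding plus a union bound then gives sparse removal with failure probability $2n e^{-\delta^2 M_{\min}}$. The contamination and quantile argument of Proposition~\ref{prop:intermediate}(b) supplies the per-entry rate. Your Step~3 goes beyond the paper, whose proof bounds only $|\tilde{y}_l - y_l|$ and never mentions the projection. Routing through the Euclidean norm is sound if you add the hypothesis of Proposition~\ref{prop:intermediate}(c) that $y$ has rank at most $r$. The resulting constant of order $\sqrt{n}$ matches the $\sqrt{n\log n/M_{\min}}$ rate in Table~\ref{tab:obsmodels}.

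The genuine gap is the probability bookkeeping in Step~2, and neither of your two fixes can work. At deviation level $\sqrt{\log(2n)/M_{\min}}$, the dense-noise failure probability is polynomially small in $n$ but does not decay in $M_{\min}$. For a concrete case, take $\varepsilon = 0$ (so $\delta = 1/2$), Gaussian $w$, and $M_l = M_{\min}$ for every $l$; all three conditions hold. The sample median is asymptotically normal with standard deviation $\sigma\sigma_w\sqrt{\pi/(2M_{\min})}$. Hence $\Pr\bigl(|\tilde{y}_l - y_l| > C\sigma\sigma_w\sqrt{\log(2n)/M_{\min}}\bigr) \to 2\bigl(1 - \Phi(C\sqrt{2\log(2n)/\pi})\bigr) > 0$ as $M_{\min} \to \infty$, where $\Phi$ is the standard normal cdf, while $4n e^{-M_{\min}/4} \to 0$. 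No constant $C$, even one depending on $n$, survives this.

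The hypothesis $M_{\min} \geq 8\log(4n)$ also points the wrong way. A lower bound on $M_{\min}$ yields only upper bounds on $e^{-\delta^2 M_{\min}}$. Absorbing a polynomial failure term would need a lower bound on it, that is, an upper bound on $M_{\min}$. A failure probability exponential in $M_{\min}$ requires constant-order quantile slack, which means a deviation of order $\sigma\sigma_w$ rather than the stated rate.

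The contamination term has the same problem. The event of Step~1 (corrupted fraction $\eta$ below $1/2$) does not give sandwich levels $1/2 \pm O(\varepsilon)$, since the levels are $(1/2-\eta)/(1-\eta)$ and $1/(2(1-\eta))$, which degenerate as $\eta \to 1/2$. You need the refined event $\eta \leq \varepsilon + O\bigl(\sqrt{\log(nM_{\min})/M_{\min}}\bigr)$ used in the proof of Proposition~\ref{prop:intermediate}(b), and its failure probability is again only polynomially small.

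What your argument actually establishes is~\eqref{eq:strategic_bound} with failure probability $2ne^{-\delta^2 M_{\min}}$ plus a polynomial term. That is ``with high probability'' in the sense of Propositions~\ref{prop:intermediate}(b) and~\ref{prop:missing_entries}(c). The paper's proof closes with ``a union bound over $2n$ entries yields the claim'' and does not supply this step either. The stated probability $1 - 4ne^{-\delta^2 M_{\min}}$ is therefore an overstatement, not a bound you failed to find.
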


\begin{proof}
For each entry $l$, the median is computed over $M_l \geq M_{\min}$ observed values. The independence of corruption supports from observation sets (condition~(c)) ensures that the fraction of corrupted observations among $\Omega_l$ has the same distribution as in the uniform case, so the sparse removal argument of Proposition~\ref{prop:missing_entries}(b) applies with $pM$ replaced by $M_{\min}$. The quantile concentration argument of Proposition~\ref{prop:intermediate}(b) then gives $|\tilde{y}_l - y_l| \leq C \sigma \sigma_w (\varepsilon + \sqrt{\log(2n)/M_{\min}})$ for each $l$. A union bound over $2n$ entries yields the claim.
\end{proof}

\subsubsection{Equilibrium multiplicity as informational diversity}

The coverage and minimum-depth conditions in Proposition~\ref{prop:strategic_estimation} have a striking economic interpretation. If the same equilibrium is played in every market, the observation set $\Omega_i^m$ is the same across markets, and entries corresponding to off-equilibrium action pairs are never observed. Coverage fails. But if different markets play \emph{different} equilibria, different action pairs are explored across markets, and the union $\Omega_i^{\cup}$ can cover the full action space.

This yields a complementarity between equilibrium multiplicity and information: the existence of multiple equilibria across markets is informationally beneficial because it generates exploration of the payoff space. The multi-market structure helps not just through replication (as in Section~\ref{sec:rpca}) but through \emph{coverage}.

\begin{definition}[Equilibrium diversity]
\label{def:eq_diversity}
Let $\calA^m \subseteq \{1,\ldots,2n\}$ denote the set of payoff entries revealed by the equilibrium played in market~$m$. The \emph{equilibrium diversity} of the multi-market system is the number of distinct equilibria played, $K := |\{\calA^m : m = 1,\ldots,M\}|$. The \emph{revealed set} is $\calA^{\cup} := \bigcup_{m=1}^M \calA^m$.
\end{definition}

\begin{proposition}[Coverage from equilibrium diversity]
\label{prop:diversity_coverage}
Suppose that:
\begin{enumerate}[label=(\roman*),nosep]
\item $K$ distinct equilibria are played across markets, with equilibrium $j$ ($j = 1,\ldots,K$) revealing entries $\calA_j \subseteq \{1,\ldots,2n\}$;
\item $\bigcup_{j=1}^K \calA_j = \{1,\ldots,2n\}$ (the union of revealed sets covers all entries);
\item Each equilibrium $j$ is played in at least $M_j \geq M_{\min}$ markets.
\end{enumerate}
Then the strategic observation model satisfies the conditions of Proposition~\ref{prop:strategic_estimation} with full coverage and $M_l \geq M_{\min}$ for all $l$.
\end{proposition}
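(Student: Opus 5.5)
The proof is a direct verification of conditions (a) and (b) of Proposition~\ref{prop:strategic_estimation}, working from the definitions of $\Omega_i^m$, $\calA^m$ and $M_l$. The first step is to identify the observation set in each market with the revealed set of the equilibrium played there. By Definition~\ref{def:strategic_obs} and Definition~\ref{def:eq_diversity}, $\Omega_i^m = \calA^m$, and $\calA^m = \calA_j$ whenever equilibrium $j$ is the one played in market $m$. Hypothesis (i) says every market plays one of the $K$ equilibria. So the markets split into classes $\mathcal{M}_j := \{m : \calA^m = \calA_j\}$, and by hypothesis (iii) $|\mathcal{M}_j| \geq M_j \geq M_{\min}$.

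Full coverage then follows from a union computation:
\[
\Omega_i^{\cup} = \bigcup_{m=1}^M \Omega_i^m = \bigcup_{j=1}^K \bigcup_{m \in \mathcal{M}_j} \calA_j = \bigcup_{j=1}^K \calA_j = \{1,\ldots,2n\},
\]
where the last equality is hypothesis (ii).

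For minimum depth, fix an entry $l$. By (ii) there is some $j$ with $l \in \calA_j$. Every market in $\mathcal{M}_j$ therefore reveals $l$, so
\[
M_l = |\{m : l \in \Omega_i^m\}| \geq |\mathcal{M}_j| \geq M_{\min}.
\]
The lower bound $M_{\min} \geq 8\log(4n)$ required in Proposition~\ref{prop:strategic_estimation}(b) is carried over by assuming that $M_{\min}$ here denotes the same threshold.

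Condition (c) of Proposition~\ref{prop:strategic_estimation}, independence of the corruption supports from the observation sets, is not part of what this proposition claims, since it only asserts full coverage and depth. I would note that, under the two-epoch timing of Remark~\ref{rem:timing}, the revealed sets are fixed by earlier play, so this independence is a maintained assumption rather than something to prove here.

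The only delicate point is the identification $\Omega_i^m = \calA^m$. If a player also observes a neighborhood of the played profile, then $\Omega_i^m \supseteq \calA^m$. I would handle this by noting that both the coverage and depth conclusions are monotone in the observation sets, so enlarging them preserves both. Apart from that there is no real obstacle: the argument is bookkeeping.
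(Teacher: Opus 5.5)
Your proof is correct and matches the paper's argument: full coverage comes from hypothesis (ii) together with $\Omega_i^m \supseteq \calA^m$, and depth comes from choosing, for each entry $l$, an equilibrium $j$ with $l \in \calA_j$ that is played in at least $M_j \geq M_{\min}$ markets. The paper uses the inclusion $\Omega_i^m \supseteq \calA^m$ directly rather than first assuming equality and then appealing to monotonicity, and, like you, it verifies only conditions (a) and (b), treating condition (c) and the threshold $M_{\min} \geq 8\log(4n)$ as maintained assumptions.
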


\begin{proof}
Full coverage follows from $\bigcup_j \calA_j = \{1,\ldots,2n\}$ and the fact that $\Omega_i^m \supseteq \calA^m$ (the player observes at least the entries revealed by the equilibrium played in each market). For any entry $l$, there exists at least one equilibrium $j$ with $l \in \calA_j$, and this equilibrium is played in at least $M_j \geq M_{\min}$ markets, so $M_l \geq M_{\min}$.
\end{proof}

Condition~(ii) of Proposition~\ref{prop:diversity_coverage} is a \emph{combinatorial coverage condition} rather than a probabilistic one. It requires that the set of equilibria realized across markets, taken together, explore all action pairs. This is a condition on the game's equilibrium structure and on how equilibria are selected across markets, not on the noise or the observation technology. It is satisfied, for instance, whenever the base game has enough equilibria that their supports collectively span the action space, and nature's equilibrium selection mechanism across markets is sufficiently varied.

\subsubsection{Heterogeneous observation depth and the symmetry caveat}

Under strategic observation, the number of observations per entry, $M_l$, typically varies across entries. Action pairs corresponding to frequently played equilibria are observed more often than those corresponding to rare equilibria. This creates \emph{heterogeneous effective noise} in the median estimates.

\begin{remark}[Entry-dependent residual variance]
\label{rem:heterogeneous}
When $M_l$ varies across entries $l$, the median residual $\tilde{y}_l - y_l$ has entry-dependent variance: $\operatorname{Var}(\tilde{y}_l - y_l) \approx \pi \sigma^2 \sigma_w^2 / (2 M_l)$ (the asymptotic variance of the sample median for i.i.d.\ normal data). Frequently observed entries ($M_l$ large) have tighter estimates than rarely observed ones ($M_l$ small).
\end{remark}

This heterogeneity introduces a potential complication for Veiel's multilinear form analysis, which requires approximate symmetry of the noise across entries. The residuals $(\tilde{y}_l - y_l)_l$ are still approximately independent (as in Proposition~\ref{prop:symmetry_mc}(i)), but they are no longer identically distributed.

\begin{proposition}[Conditional symmetry under heterogeneous depth]
\label{prop:conditional_symmetry}
Under the strategic observation model, suppose the observation depths $\{M_l\}_{l=1}^{2n}$ are realized (and hence known to the analyst). Conditional on the depths, the median residuals $\{\tilde{y}_l - y_l\}_{l=1}^{2n}$ are independent, with each $\tilde{y}_l - y_l$ symmetrically distributed about zero (inheriting the symmetry of the dense noise $w_{i,l}^m$). The conditional distribution is:
\[
\tilde{y}_l - y_l \;\Big|\; M_l \;\sim\; \text{symmetric about } 0, \quad \text{with variance } \Theta\!\left(\frac{\sigma^2 \sigma_w^2}{M_l}\right).
\]
In particular, the noise is entry-wise symmetric but not exchangeable across entries.
\end{proposition}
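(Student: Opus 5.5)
The plan is to condition on the observation pattern, which the two-epoch timing of Remark~\ref{rem:timing} makes independent of the current epoch's dense noise, together with condition~(c) of Proposition~\ref{prop:strategic_estimation} for the sparse corruptions. Conditional on the sets $\{\Omega_l\}$ and hence on the depths $\{M_l\}$, I will work on the sparse-removal event of Proposition~\ref{prop:missing_entries}(b), with $pM$ replaced by $M_{\min}$ as in Proposition~\ref{prop:strategic_estimation}. On that event $\tilde{y}_l$ is the median of the clean observed draws $\{y_l + \sigma w^m_{i,l} : m \in \Omega_l^{\mathrm{clean}}\}$. This is the reading under which the symmetry claim is exact; as in Proposition~\ref{prop:signal_properties}(c), adversarial corruption values would break exact symmetry. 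I will therefore state the result either with $\varepsilon = 0$ or with corruption values conditionally symmetric about the truth, which the contaminated-median argument also accommodates.

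Independence across $l$ comes first. Conditional on the depths and supports, $\tilde{y}_l - y_l$ is a measurable function of the family $\{w_{i,l}^m\}_{m \in \Omega_l}$ (and of the corruption values in row $l$). These families are independent across $l$ by the i.i.d.\ entry assumption of Assumption~\ref{ass:noise} and the mutual independence in Definition~\ref{def:signal}. Because the observation pattern is independent of the noise, conditioning on it does not alter these laws.

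Symmetry comes next. The median is odd-equivariant: $\operatorname{median}(-z_1,\dots,-z_K) = -\operatorname{median}(z_1,\dots,z_K)$, using the midpoint convention for even $K$. The vector of standardized noise draws is i.i.d.\ symmetric, so it has the same law as its negation. Hence $\tilde{y}_l - y_l = \sigma\operatorname{median}_m(w_{i,l}^m)$ equals its negative in distribution.

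The variance order $\Theta(\sigma^2\sigma_w^2/M_l)$ is the step I expect to be the main obstacle. The upper bound is uniform rather than asymptotic. I will use the density lower bound $c_w$ and a Hoeffding/Dvoretzky–Kiefer–Wolfowitz bound on the empirical CDF to get $\Pr(|\operatorname{med}| > t\sigma\sigma_w) \le 2\exp(-2M_l (c_w t)^2)$ for $t$ below the density neighbourhood radius. Sub-Gaussian tails handle larger $t$, via the fact that the median exceeding $t$ requires half the draws to exceed $t$. Integrating the tail gives $\mathbb{E}[\operatorname{med}^2] \le C\sigma^2\sigma_w^2/M_l$.

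For the lower bound I will use the binomial representation $\Pr(\operatorname{med} > t) \ge \Pr(\mathrm{Bin}(M_l, 1 - F(t)) \ge \lceil M_l/2\rceil)$. Finite variance and symmetry give $1 - F(t) \ge 1/2 - Ct$, since the density is bounded above near $0$; I will add this boundedness as a mild assumption if needed. Taking $t = c/\sqrt{M_l}$, a Berry–Esseen or central-limit lower bound on the binomial probability shows this is at least a constant. That yields a variance of at least $c\sigma^2\sigma_w^2/M_l$ and matches the heuristic of Remark~\ref{rem:heterogeneous}.

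Finally, non-exchangeability across entries follows because the variances genuinely differ whenever the $M_l$ differ enough, by the two-sided $\Theta$ bound. Identically distributed marginals would force comparable $M_l$.
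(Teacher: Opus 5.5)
Your independence and symmetry steps are the same as the paper's proof. That proof is three sentences long. It conditions on $M_l$ and on the median falling among clean observations, notes that the median of i.i.d.\ symmetric draws is symmetric, and takes independence from the independence of $w_{i,l}^m$ across $l$. You add an explicit appeal to the two-epoch timing of Remark~\ref{rem:timing}, which the paper leaves implicit. The real difference is that the paper's proof never establishes the variance order $\Theta(\sigma^2\sigma_w^2/M_l)$ or the non-exchangeability claim; its only support is the normal-data heuristic in Remark~\ref{rem:heterogeneous}. Your argument fills this gap:
\begin{itemize}
\item For the upper bound, you integrate tails using Hoeffding on $\mathrm{Bin}(M_l, 1-F(t))$ with $1-F(t) \le 1/2 - c_w t$ near zero, plus an exponentially small remainder beyond the density neighbourhood.
\item For the lower bound, you use a binomial/Berry--Esseen bound at $t \asymp 1/\sqrt{M_l}$.
\item The two-sided bound is what makes ``not exchangeable'' precise once the depths differ by a large enough ratio.
\end{itemize}

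Two of your caveats are genuine corrections rather than hedges. First, the density upper bound near zero is necessary, not ``if needed''. Assumption~\ref{ass:noise} only bounds the density from below. A symmetric density with an integrable singularity, say $|x|^{-1/2}$ near $0$, satisfies it but gives $F(t) - 1/2 \asymp \sqrt{t}$. The sample median then has variance of order $1/M_l^2$, so the lower half of the $\Theta$ claim fails. Finite variance plays no role in $1 - F(t) \ge 1/2 - Ct$; that inequality comes entirely from the density upper bound.

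Second, the sparse-removal event only sandwiches $\tilde{y}_l$ between clean order statistics. It does not make $\tilde{y}_l$ the clean-sample median. Your sentence saying otherwise, and the paper's proof, share this slip, so $\varepsilon = 0$ is the case where your argument is literal. Under conditionally symmetric corruption you need three adjustments:
\begin{itemize}
\item Get symmetry from the joint sign-symmetry of the whole observed row.
\item For independence across $l$, assume the corruption values are independent across $l$.
\item Rerun the variance upper bound through the sandwich.
\end{itemize}

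One small fix: with the midpoint convention and even $M_l$, the event $\{\operatorname{median} > t\}$ requires at least $\lfloor M_l/2 \rfloor + 1$ draws above $t$, not $\lceil M_l/2 \rceil$. This does not affect the Berry--Esseen step.
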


\begin{proof}
Conditional on $M_l$ and on the event that the median falls among uncorrupted observations, $\tilde{y}_l$ is the sample median of $M_l^{\mathrm{clean}} \geq M_l/2$ i.i.d.\ draws from a distribution symmetric about $y_l$ (since $w_{i,l}^m$ is mean-zero and symmetric by assumption). The sample median of symmetric i.i.d.\ draws is itself symmetric about the population median. Independence across $l$ follows from the independence of $w_{i,l}^m$ across entries.
\end{proof}

The entry-wise symmetry (conditional on depths) preserves some of the structure needed for Veiel's analysis, but the heterogeneous variances mean that the multilinear form coefficients, which depend on the joint distribution of the noise vector, are no longer invariant under permutations of entries. Whether this breaks or preserves the symmetry conditions for limit uniqueness depends on whether the observation heterogeneity is aligned or misaligned with the BRC structure:

\begin{itemize}[nosep]
\item If the entries involved in a potential BRC are all observed with similar depth ($M_l \approx M_{l'}$ for entries $l, l'$ appearing in the cycle constraints), the heterogeneity is irrelevant to the cycle analysis, and the results of Section~\ref{sec:rpca} carry over.
\item If entries critical to the BRC are observed at systematically different rates, for instance because on-equilibrium action pairs are observed more frequently than the off-equilibrium pairs that participate in the cycle, the asymmetric variance structure could either reinforce or break the cycle, depending on the interaction between the payoff geometry and the observation pattern.
\end{itemize}

Resolving this interaction precisely, and determining whether strategic observation generically helps or hurts cycle-breaking, requires a joint analysis of the game's equilibrium structure and the BRC geometry that goes beyond the scope of the present paper. We record it as an open question.

\begin{remark}[Connection to active learning and mechanism design]
\label{rem:active_learning}
The observation that equilibrium diversity aids payoff learning suggests a connection to the literature on active learning and information acquisition in games. A mechanism designer who can influence which equilibria are selected across markets, for instance by recommending different equilibria in different markets, could in principle \emph{design} the observation pattern to maximize coverage and minimize estimation error. This is an ``information design for estimation'' problem: the designer chooses equilibria not (only) for their payoff properties but for the information they generate about off-equilibrium payoffs. We leave the formal development of this connection for future work.
\end{remark}

%----------------------------------------------------------------------
\subsection{Summary and relationship to the full-observation case}
\label{subsec:mc_summary}
%----------------------------------------------------------------------

The three observation models form a hierarchy of increasing economic realism and decreasing statistical convenience, summarized in Table~\ref{tab:obsmodels}.

\begin{table}[t]
\centering
\caption{\textbf{Estimation rates and requirements across observation models.} Sup-norm estimation rates for the RPCA-Averaging estimator under the full-observation benchmark of Section~\ref{sec:rpca} and the three partial-observation models of Section~\ref{sec:matcomp}; ``missing entries'' refers to the uniform random model of Definition~\ref{def:entry_obs}. Rates are stated to leading order and up to a universal constant. ``Symmetry'' records whether the effective noise inherits the symmetry of the dense noise unconditionally or only conditional on realized observation depths. All four rates additionally carry the contamination bias term $\sigma \sigma_w \varepsilon$ of Proposition~\ref{prop:intermediate}(b), suppressed here; being proportional to $\sigma$, it does not affect the limit analysis.}
\label{tab:obsmodels}
\small
\setlength{\tabcolsep}{5pt}
\begin{tabular}{llll}
\toprule
Model & Rate & Symmetry & Coverage condition \\
\midrule
Full observation      & $\sigma \sigma_w \sqrt{n \log n / M}$                & Automatic   & None \\
Missing columns       & $\sigma \sigma_w \sqrt{n \log n / M_{\mathrm{obs}}}$ & Automatic   & $M_{\mathrm{obs}} \gtrsim \log n$ \\
Missing entries       & $\sigma \sigma_w \sqrt{n \log n / (pM)}$             & Automatic   & $pM \gtrsim \log n$ \\
Strategic observation & $\sigma \sigma_w \sqrt{n \log n / M_{\min}}$         & Conditional & Equilibrium diversity \\
\bottomrule
\end{tabular}
\end{table}

In all cases, the intermediate regime is preserved: the effective noise vanishes as $\sigma \to 0$ but remains strictly positive for any finite observation budget and positive noise scale. The fundamental insight is that the rank-one structure of the stacked matrix $L_0 = y \cdot \mathbf{1}_M^\top$, where every column is identical, makes partial observation far less damaging than in generic matrix completion problems. All that is needed is enough observations of each entry to compute a reliable median. The specific pattern of which observations come from which markets is irrelevant for the uniform and column-missing models, and requires only a coverage condition in the strategic case.

Combined with Theorem~\ref{thm:rank1}, the results of this section show that the estimation side of the multi-market analysis is fully robust to partial observation. The selection side extends conditionally, following the partition of Remark~\ref{rem:conditional}:

\begin{corollary}[Limit uniqueness under partial observation, conditional]
\label{cor:limuniq_mc}
Let $y \in \calS^*$ be a quasi-concave supermodular payoff with rank-one, nondegenerate (Definition~\ref{def:margin}) payoff matrices for each player. In the multi-market global game $\Gamma(\sigma, M)$ with any of the three observation models above, provided the relevant coverage and depth conditions are satisfied, the conclusions of Proposition~\ref{prop:signal_properties} continue to hold. Consequently, if Conjecture~\ref{conj:multimarket_lu} holds (equivalently, if the noise-class extension of Remark~\ref{rem:noise_class} is established), limit uniqueness holds as $\sigma \to 0$ with the observation structure held fixed.
\end{corollary}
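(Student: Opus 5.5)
The plan is to reduce the corollary to Proposition~\ref{prop:signal_properties}. That proposition used only three properties of the median signal $\tilde{y}_i$: (1) a sup-norm error bound carrying the factor $\sigma$, which gives part (a); (2) Theorem~\ref{thm:rank1} together with Corollary~\ref{cor:estimation_margin}, which gives part (b); and (3) entrywise independence and a positive density near zero conditional on sparse removal, which gives part (c). So the task is to check, for each observation model, that the partial-observation median of Definition~\ref{def:estimator_mc} still has these three properties. The selection statement then follows directly from Conjecture~\ref{conj:multimarket_lu}.

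\emph{Part (a).} I will invoke the relevant estimation bound in each case:
\begin{itemize}
\item Proposition~\ref{prop:missing_columns}(b) for missing columns;
\item Proposition~\ref{prop:missing_entries}(c) for uniform missing entries, under $pM \geq 8\log(4n)$;
\item Proposition~\ref{prop:strategic_estimation} for strategic observation, under the coverage, depth and independent-corruption conditions, or via Proposition~\ref{prop:diversity_coverage}.
\end{itemize}
Each bound has the form $\|\tilde{y}-y\|_\infty \leq C\sigma\sigma_w(\varepsilon + \sqrt{\log(2n)/M_{\mathrm{eff}}})$ with high probability, where $M_{\mathrm{eff}}\in\{M_{\mathrm{obs}}, pM, M_{\min}\}$ does not depend on $\sigma$. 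The failure probabilities of the sparse-removal and coverage events also do not depend on $\sigma$. Letting $\sigma\downarrow 0$ with the observation structure held fixed therefore gives convergence in probability of $\|Z_i\|_\infty$ to zero.

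\emph{Part (b).} This is unchanged from Proposition~\ref{prop:signal_properties}(b). Cycle-freeness of any $y'$ with $\|y'-y\|_\infty<\gamma/2$ depends only on $y$, through nondegeneracy and Theorem~\ref{thm:margin}, and not on how $\tilde{y}_i$ was produced. Combined with (a), the probability of this event tends to one.

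\emph{Part (c).} Conditional on the sparse-removal events, each $\tilde{y}_l$ is a median over the observed set $\Omega_l$, whose clean majority consists of i.i.d.\ draws with density bounded below near $y_l$. The sandwich argument therefore applies verbatim and gives a density positive near zero. Independence across $l$ follows from independence of the dense noise across entries, provided the observation pattern is independent of the current dense noise. For the column and uniform models this holds by construction. For the strategic model it is exactly the two-epoch timing of Remark~\ref{rem:timing}, so I will state that timing as part of the ``relevant conditions.''

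The main obstacle is part (c) in the strategic case. The sets $\Omega_l$ are endogenous, so I must condition on the whole observation pattern and argue that, given the pattern, the clean observations remain i.i.d.\ with the original noise law. The two-epoch interpretation is what supplies this. Symmetry holds only entrywise, conditional on the realized depths (Proposition~\ref{prop:conditional_symmetry}), and that is all part (c) asserts. The final sentence of the corollary is then immediate: Conjecture~\ref{conj:multimarket_lu} is phrased under the hypotheses of Proposition~\ref{prop:signal_properties}, and these have now been re-established in each model.
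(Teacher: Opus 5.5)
Your proposal follows the paper's proof essentially step for step: part~(a) from Propositions~\ref{prop:missing_columns}, \ref{prop:missing_entries} and~\ref{prop:strategic_estimation}, part~(b) from Theorem~\ref{thm:rank1} with Corollary~\ref{cor:estimation_margin}, and part~(c) from the fact that after sparse removal each median depends only on its own entry's dense noise. Your explicit use of the two-epoch timing of Remark~\ref{rem:timing} in the strategic model makes precise a point the paper leaves implicit.

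One small repair in~(a): at fixed $M$ the sparse-removal failure probability does not vanish, so $\sigma$-free failure probabilities alone do not give convergence in probability. Argue instead by scale equivariance of the median, $Z_{i,l} = \sigma R_l$ with $R_l := \operatorname{median}_{m \in \Omega_l}\bigl(e^m_{i,l} + w^m_{i,l}\bigr)$ almost surely finite. Provided the corruption values do not depend on $\sigma$, the law of $R_l$ is $\sigma$-free, so $\|Z_i\|_\infty = \sigma \max_l |R_l| \to 0$ in probability directly.
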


\begin{proof}
Under any of the three models, the median signal $\tilde{y}_i$ (adapted per Definition~\ref{def:estimator_mc} as needed) delivers an effective noise $Z_i$ whose sup-norm bound carries the factor $\sigma$ (Propositions~\ref{prop:missing_columns}, \ref{prop:missing_entries}, and~\ref{prop:strategic_estimation}), so part~(a) of Proposition~\ref{prop:signal_properties} carries over. Part~(b) follows from Theorem~\ref{thm:rank1} and Corollary~\ref{cor:estimation_margin} exactly as before. For part~(c), conditional on sparse removal each median depends only on the dense noise values for its own entry, so entrywise independence and the local density lower bound are inherited from Assumption~\ref{ass:noise} as in the full-observation case. The conditional conclusion then follows from the same contagion argument that supports Conjecture~\ref{conj:multimarket_lu}.
\end{proof}

%----------------------------------------------------------------------
\section{Discussion}
\label{sec:discussion}
%----------------------------------------------------------------------

\subsection{Summary}

We have shown that low-rank payoff structure, specifically rank one, is a sufficient condition for the absence of better response cycles in supermodular games. Combined with the characterization of \citet{veiel2025limits}, this yields limit uniqueness in global games with rank-one payoffs, for any number of actions and with no genericity assumptions (Corollary~\ref{cor:limuniq}), with the dependence on his unpublished theorem recorded explicitly in Remark~\ref{rem:conditional}. The boundary of the result is sharp: rank two already cycles, no supermodular cycle has length below six, and a margin neighborhood of any nondegenerate rank-one game remains cycle-free. These structural results, and the distinction between our player-by-player rank restriction and the sum-rank notion of \citet{kannan2010rank} drawn in Remark~\ref{rem:rank_terminology}, stand unconditionally.

The multi-market RPCA extension provides both a concrete information-processing foundation and a resolution to the mismatch between the large-dimensional requirements of RPCA and the small, fixed action spaces of games. Payoff rank is a testable restriction, and linear-index and factor specifications common in applied work generate low-rank payoffs by construction, so the selection criterion identified here is one that empirical researchers can take to data.

The multi-market construction connects to the multimarket-contact literature \citep{bernheim1990multimarket}, but with an inverted mechanism: rather than multimarket contact enabling collusion through repeated-game incentives, here it enables equilibrium \emph{selection} through superior information processing.

\subsection{Open questions}

Several important questions remain open.

First, can the sign-quadrant argument of Theorem~\ref{thm:rank1} be extended to rank $r \geq 2$, perhaps via a ``partition of unity'' over singular vectors? The structural approach yields far stronger results than dimension counting for $r = 1$. Extending it to $r = 2$ or $r = 3$ would substantially broaden the economic applicability of these results, and Example~\ref{ex:rank2} shows any such extension must use more than rank and supermodularity alone.

Second, Conjecture~\ref{conj:multimarket_lu} rests on extending Veiel's characterization from his symmetric noise class to entrywise-independent noise with full local support (Remark~\ref{rem:noise_class}); establishing that extension would upgrade the conjecture, and the conditional part of Corollary~\ref{cor:limuniq_mc}, to theorems. Relatedly, Proposition~\ref{prop:sparse_breaking} suggests that multi-market RPCA approximately restores the noise symmetry required by Veiel's analysis. Quantifying this, by showing that the multilinear form coefficients converge to their symmetric values as $M \to \infty$, would extend the limit uniqueness result to games with BRCs where the RPCA residual breaks the cycle.

Third, if the number of markets $M$ is a strategic choice (or varies endogenously with the economic environment), the information-processing benefit of multi-market contact interacts with the equilibrium selection in a potentially rich way.

Fourth, with positive discounting ($\delta > 0$), the repeated interaction introduces dynamic strategic considerations (e.g., collusion, punishment strategies) that interact with the information structure. Understanding whether RPCA-based denoising amplifies or dampens equilibrium multiplicity in repeated games is an open and economically important question.

%----------------------------------------------------------------------
% Back matter
%----------------------------------------------------------------------

\section*{Acknowledgements}

The author thanks Ting Liu, Eran Shmaya, Yair Tauman, Pradeep Dubey, Mihai Manea, Sarah Betz, and Kamran Paynabar, whose course inspired the idea, as well as other attendees at both the Stony Brook Game Theory Festival and the Stony Brook Game Theory Workshop, for helpful comments and discussion. Any remaining errors are the author's own.

\section*{Declaration of competing interest}

Declarations of interest: none.

\section*{Funding}

This material is based upon work supported by the National Science Foundation under Award No.~2125295 (NRT-HDR: Detecting and Addressing Bias in Data, Humans, and Institutions). Any opinions, findings and conclusions or recommendations expressed in this material are those of the author(s) and do not necessarily reflect the views of the National Science Foundation.

\section*{Data availability}

No data were used for the research described in this article. The paper is theoretical; the numerical example of Section~\ref{sec:rankr} and the linear-programming search reported in Remark~\ref{rem:lp_lengths} are fully specified in the text and reproducible from it.

%----------------------------------------------------------------------
% References
%----------------------------------------------------------------------

\bibliographystyle{elsarticle-harv}
\bibliography{references}

\end{document}